%
\documentclass[11pt]{article}
\usepackage[all]{xy}
\pagestyle{plain}
\textwidth 15.3cm
\oddsidemargin 0in
\evensidemargin 0in
\textheight 22.3cm
\topmargin 0in
\headsep 0in

\usepackage{amsmath,amsthm,amsfonts,amssymb,amsopn,amscd}     

\DeclareMathOperator{\Tr}{Tr}
\DeclareMathOperator{\Str}{Tr_{s}}

\DeclareMathOperator{\Dim}{dim}
\DeclareMathOperator{\ind}{ind}
\DeclareMathOperator{\fn}{null}

\def\qed{{\unskip\nobreak\hfil\penalty50
\hskip2em\hbox{}\nobreak\hfil$\square$
\parfillskip=0pt \finalhyphendemerits=0\par}\medskip}
\def\proof{\trivlist \item[\hskip \labelsep{\bf Proof.\ }]}
\def\endproof{\null\hfill\qed\endtrivlist\noindent}

\def\Ad{{\hbox{\rm Ad}}}

\def\dim{{\hbox{dim}}}

\def\Hom{{\hbox{Hom}}}

\def\a{\alpha}

\def\de{\delta}
\def\e{\varepsilon}
\def\g{\gamma}
\def\Ga{\Gamma}
\def\k{\kappa}
\def\l{\lambda}

\def\r{\rho}
\def\phi{\varphi}

\def\th{\theta}

\def\Om{\Omega}

\newtheorem{theorem}{Theorem}
\newtheorem{lemma}[theorem]{Lemma}
\newtheorem{corollary}[theorem]{Corollary}

\newtheorem{proposition}[theorem]{Proposition}
\newtheorem{sublemma}[theorem]{Sublemma}

\def\setminus{\smallsetminus}
\def\A{{\cal A}}
\def\B{{\cal B}}
\def\C{{\cal C}}

\def\L{{\cal L}}
\def\I{{\cal I}}
\def\H{{\cal H}}
\def\K{{\cal K}}
\def\S{{\cal S}}
\def\U{{\cal U}}

\def\s{{\sigma}}

\def\emptyset{\varnothing}

\def\Diff{{\rm Diff}}

\def\Mob{{\rm\textsf{M\"ob}}}
\def\Mobn{{\rm\textsf{M\"ob}}^{(n)}}
\def\Mobb{{\rm\textsf{M\"ob}}^{(2)}}
\def\Mobi{{\rm\textsf{M\"ob}}^{(\infty)}}
\def\Si{S^{1(\infty)}}
\def\Sn{S^{1(n)}}
\def\S2{S^{1(2)}}

\def\emptyset{\varnothing}
\def\setminus{\smallsetminus}

\def\Diff{{\mathrm {Diff}}}

\def\SVir{{\mathrm {SVir}}}
\def\Z{{\mathbb Z}}
\def\la{\lambda}

\def\RR{{\mathbb R}}

\def\ZZ{{\mathbb Z}}

\def\sl2{{{\rm SL}(2,\RR)}}
\def\psl2{{{\rm PSL}(2,\RR)}}

\def\u1{{{\rm U}(1)}}
\def\su2{{{\rm SU}(2)}}

\def\so3{{{\rm SO}(3)}}

\def\A{{\mathcal A}}
\def\B{{\mathcal B}}
\def\C{{\mathcal C}}

\def\F{{\mathcal F}}
\def\H{{\mathcal H}}
\def\I{{\mathcal I}}
\def\K{{\mathcal K}}
\def\k{K}

\def\U{{\mathcal U}}

\title{{\Huge Structure and Classification of Superconformal Nets}}
\author{
{\sc Sebastiano Carpi}\footnote{Supported by MIUR, GNAMPA-INDAM and EU network ``Noncommutative Geometry" 
MRTN-CT-2006-0031962}\\
Dipartimento di Scienze\\
Universit\`a di Chieti-Pescara ``G. d'Annunzio''\\
Viale Pindaro, 42, I-65127 Pescara, Italy\\
E-mail: {\tt carpi@sci.unich.it}\\
\phantom{X}\\
{\sc Yasuyuki Kawahigashi}\footnote{Supported in part by the
Grants-in-Aid for Scientific Research, JSPS.}\\
Department of Mathematical Sciences\\
University of Tokyo, Komaba, Tokyo, 153-8914, Japan\\
E-mail: {\tt yasuyuki@ms.u-tokyo.ac.jp}\\
\phantom{X}\\
{\sc Roberto Longo}$^*$
\\
Dipartimento di Matematica,
Universit\`a di Roma ``Tor Vergata'',\\
Via della Ricerca Scientifica, 1, I-00133 Roma, Italy\\
E-mail: {\tt longo@mat.uniroma2.it}}

\begin{document}
\date{}
\maketitle

\begin{abstract}
We study the general structure of Fermi conformal nets of von Neumann algebras on $S^1$, consider a class of topological representations, the general representations, that we characterize as Neveu-Schwarz or Ramond representations, in particular a Jones index can be associated with each of them. We then consider a supersymmetric general representation associated with a Fermi modular net and give a formula involving the Fredholm index of the supercharge operator and the Jones index. We then consider the net associated with the super-Virasoro algebra and discuss its structure. If the central charge $c$ belongs to the discrete series, this net is modular by the work of F. Xu and we get an example where our setting is verified by considering the Ramond irreducible representation with lowest weight $c/24$. We classify all the irreducible Fermi extensions of any super-Virasoro net in the discrete series, thus providing a classification of all superconformal nets with central charge less than $3/2$.
\end{abstract}
\newpage
{\small\tableofcontents}
\section{Introduction}
As is well known the r\^ole of symmetry in Physics is fundamental. Symmetries are usually divided in two types: spacetime and internal symmetries. 

If we ignore the effect of gravity that provides spacetime curvature, Special Relativity tells us that our spacetime is the flat Minkowski space and its symmetries are given by the Poincar\'e group. However, if no massive particle is present, the symmetry group can be larger. The largest transformation group compatible with locality is the conformal group. The conformal group is a finite-dimensional Lie group; its symmetry action is so stringent that one expects only few conformal Quantum Field Theory models to exist in four dimensions, see \cite{BRNT} and references therein.

A much richer structure comes out in lower dimensional spacetimes. On the two-dimensional Minkowski spacetime the conformal group is infinite dimensional and factors as a product of the diffeomorphism groups of the light-like (chiral) lines $x\pm t=0$ and acts on the their compactification, namely one gets an action of $\Diff(S^1)\times\Diff(S^1)$. Restricting a quantum field to a chiral line gives rise to a quantum field theory on $S^1$. 

We do not dwell here on crucial r\^ole played by Conformal Quantum Field Theory in various physical contexts, nor on its deep mathematical connections with other subjects.
We mention however a less emphasised r\^ole of the subject as a laboratory for more general analysis. This is due to the variety of constructed models and the powerful methods of analysis. We are interested here in particular in Operator Algebraic methods whose effectiveness has  also been shown by recent classification results and new model constructions \cite{KL1,KLPR,X6}.

Internal symmetries are also fundamental. In particular, in Quantum Field Theory, they are basically expressed by a gauge group. Concerning the particle-antiparticle symmetry, it showed up with the discovery of the Dirac equation and is, in a sense, a derived symmetry. There are general arguments to link the particle-antiparticle symmetry with the spacetime symmetries, see \cite{GL1}.

Now there is an important higher level symmetry relating spacetime and internal symmetries: supersymmetry, see \cite{WB}. Supersymmetry sets up a correspondence between Bose and Fermi particles. Leaving aside important physical aspects of supersymmetry, e.g. in relation to the Higgs particle, we wish to mention that the related mathematical structure has profound noncommutative geometrical implications, see \cite{C,JLO}.

If one considers a quantum field theory which is both chiral conformal and supersymmetric, one then expects a very stringent interesting structure to show up. Superconformal models have indeed been studied by different approaches. For example, the tricritical Ising model is a basic model with a remarkable structure \cite{FQS}.

The purpose of this paper is to initiate a general, model independent, operator algebraic study of superconformal quantum field theory on the circle and pursue this analysis to the classification of the superconformal nets in the discrete series. 

In the first sections we describe the general property of a Fermi conformal net on $S^1$. Basic properties, already considered in the non-local case \cite{DLR,LR2}, are here described in our context (Bisognano-Wichmann property, twisted Haag duality,\dots).

A Fermi net $\A$ contains a local subnet $\A_b$, the Bose subnet (the fixed-point under the grading symmetry) which is automatically equipped with an involutive sector $\s$, dual to the grading. This splits the sectors of $\A_b$ in two subsets, $\s$-Bose and $\s$-Fermi sectors, that can be studied in the standard way as $\A_b$ is local.

For a general model analysis one would like to consider representations of $\A$. There is an obvious extension of the notion of DHR representation to Fermi nets, that we study at the beginning. However, as we shall see, there is more general natural class of representations for Fermi net: the \emph{general representations}. These are representations of the promotion $\A^{(n)}$ of $\A$ to the $n$-cover $S^{1(n)}$ of $S^1$, that restrict to DHR representations of the Bose subnet $\A_b$. We shall see that a general representation is indeed  a representation of the double cover net $\A^{(2)}$ and can be equivalently described as a \emph{general soliton}. This splits the general representations in two classes: Neveu-Schwarz (DHR) and Ramond (properly associated with $\A^{(2)}$). We shall later see that a representation is Neveu-Schwarz (resp. Ramond) iff it comes by $\a$-extension by a $\s$-Bose (resp. $\s$-Fermi) representation of $\A_b$.

Having clarified the structure of the representations of a Fermi net, we consider the case where supersymmetry is present. In particular we consider a \emph{supersymmetric representation} of a conformal Fermi net $\A$, namely a graded general representation $\l$ of $\A$ where 
\[
\tilde H_\l \equiv H_\l - \frac{c}{24} = Q_\l^2 \ ;
\]
here $H_\l$ is the conformal Hamiltonian and $Q$ is an odd selfadjoint operator (the supercharge). In this case the McKean-Singer lemma (Appendix \ref{MKS}) gives
\[
\Str(e^{-t\tilde H_\l}) = \text{Fredholm index of}\ Q_{\l +}\ , 
\]
for all $t>0$, where $\Str$ denotes the supertrace. In particular the left hand side, also called the Witten index, is an integer, the Fredholm index $\ind(Q_{\l +})$ of upper off diagonal part of $Q_\l$.

If $\l$ is irreducible, its restriction $\l_b$ to $\A_b$ has two irreducible components $\l_b = \r\oplus\r'$ and we can consider the Jones index of $\r$, which is the square of the Doplicher-Haag-Roberts statistical dimension $d(\r)$ \cite{L5}.

If $\A_b$ is modular, we can express $d(\r)$ by the Kac-Peterson Verlinde matrix $S$ as this is equal to the Rehren matrix $S$. By combining all this information and an argument in \cite{KL05}, we then get the formula
\[
\ind(Q_{\l +}) = 
\frac{d(\r)}{\sqrt{\mu_\A}}\sum_{\nu\in\mathfrak R}
K(\r,\nu)d(\nu) 
\]
where $K$ is a matrix that is related to $S$, $\mu_\A$ is the $\mu$-index \cite{KLM} and the sum is over all Ramond irreducible sectors. This formula thus involves both the Fredholm index and the Jones index.

We then put our attention towards model analysis and aiming firstly to illustrate concrete situations where our general setting is realised.

The infinite-dimensional super-Lie algebra that governs the superconformal symmetries is the super-Virasoro algebra. It is a central extension by a central element $c$ of the Lie algebra generated by the Fourier modes $L_n$ and $G_r$ of the Bose and the Fermi stress-energy tensor. Clearly the $L_n$ generate the Virasoro algebra, thus the super-Virasoro algebra contains the Virasoro algebra; the commutation relation are given by the equations \eqref{svirdef}. It has been studied in \cite{GKO, FQS}.

The super-Virasoro algebra plays for superconformal fields the universal same r\^ole that the Virasoro algebra plays for local conformal fields. Our initial task is then to construct and analyse the super-Virasoro nets. Following the work \cite{GKO} we describe the nets; if the central charge $c$ is less than $3/2$ we identify the Bose subnet as a coset, study the representation structure and show the modularity of the net. There is a distinguished representation with lowest weight $c/24$ which is supersymmetric and thus provides an interesting example for our index formula. 

Now, if $c<3/2$ we are in the discrete series \cite{FQS}; analogously to the local conformal case \cite{KL1} every superconformal net is an irreducible finite-index extension of a super-Virasoro net. We classify all such extensions. There are two series of extensions, the one given by the trivial extension and a second one given by index 2 extensions. Beside these there are six exceptional extensions that we describe explicitly.
\section{Fermi nets on $S^1$}
In this section we discuss the basic notions for a Fermi conformal net of von Neumann algebras on $S^1\equiv \{z\in\mathbb C: |z|=1\}$, and their first implications. It is convenient to start by analysing M\"obius covariant nets. Note that a general discussion of M\"obius covariant net is contained in \cite{DLR}, here however we focus on the Fermi case.
\subsection{M\"obius covariant Fermi nets}
\label{MobFermiNets}
We shall denote by $\Mob$ the M\"obius group, which is isomorphic to $SL(2,\mathbb R)/\mathbb Z_2$ and acts naturally and faithfully on the circle $S^1$. The $n$-cover of $\Mob$ is denoted by $\Mobn$, $n\in\mathbb N\cup\{\infty\}$. Thus $\Mob^{(2)}\simeq SL(2,\mathbb R)$ and $\Mobi$ is the universal cover of $\Mob$.

By an interval of $S^1$ we mean, as usual, a non-empty, non-dense, open, connected subset of $S^1$ and we denote by $\I$ the set of all intervals. If $I\in\I$, then also $I'\in\I$ where $I'$ is the interior of the complement of $I$. 

A {\it  net $\A$ of von Neumann algebras on $S^1$} is a map
\[
I\in\I\mapsto\A(I)
\]
from the set of intervals to the set of von Neumann algebras on a
(fixed) Hilbert space $\H$ which verifies the \emph{isotony property}:
\[
I_1\subset I_2\Rightarrow  \A(I_1)\subset\A(I_2)
\]
where $I_1 , I_2\in\I$.

A \emph{M\"obius covariant net} $\A$ of von Neumann algebras on $S^1$ is a net of von Neumann algebras on $S^1$ such that the following properties $1-4$ hold:
\begin{description}
\item[\textnormal{\textsc{1. M\"obius covariance}}:] {\it There is a
strongly continuous unitary representation $U$ of} $\Mobi$ {\it on $\H$ such that}
\[
U(g)\A(I)U(g)^*=\A(\dot{g}I)\ ,
\qquad g\in \Mobi,\ I\in\I \ ,
\]
{\it where $\dot{g}$ is the image of $g$ in} $\Mob$ {\it under the quotient map.}
\end{description}
\begin{description}
\item[$\textnormal{\textsc{2. Positivity of the energy}}:$]
{\it The generator of the rotation one-para\-meter subgroup 
$\theta\mapsto U({\rm rot}^{(\infty)}(\theta))$ 
(conformal Hamiltonian) is positive}, na\-me\-ly $U$ is a positive energy representation
\footnote{Here ${\rm rot}^{(n)}(\theta)$ is the lift to $\Mobn$ of the $\theta$-rotation ${\rm rot}(\theta)$ of $\Mob$. For shortness we sometime write ${\rm rot}^{(n)}(\theta)$ simply by ${\rm rot}(\theta)$ and $U(\theta) = U({\rm rot}(\theta))$.}. 
\end{description}
\begin{description}
\item[\textnormal{\textsc{3. Existence and uniqueness of the vacuum}}:]
{\it There exists a unit $U$-invariant vector $\Omega$ 
(vacuum vector), unique up to a phase, and $\Omega$ is
cyclic for the von~Neumann algebra $\vee_{I\in\I}\A(I)$}
\end{description}
Given a M\"obius covariant net $\A$ in $S^1$, a unitary $\Ga$ such that $\Ga\Om = \Om$, $\Ga\A(I)\Ga^* =\A(I)$ for all $I\in\I$ is called a gauge unitary and the adjoint net automorphism $\g\equiv\Ad\Ga$ a \emph{gauge} automorphism\footnote{The requirement $\Ga\Om =\Om$ (up to a phase that can be put equal to one) is equivalent to $\Ga U(g) = U(g)\Ga$ by the uniqueness of the representation $U$ in Cor. \ref{unique}; we shall use the second formulation in the non-vacuum case.}. 

A $\mathbb Z_2$-grading on $\A$ is an involutive gauge automorphism $\g$ of $\A$. Given the grading $\g$, an element $x$ of $\A$ such that $\g(x)=\pm x$ is called homogeneous, indeed a Bose or Fermi element according to the $\pm$ alternative. 
We shall say that the degree $\partial x$ of the homogeneous element $x$ is $0$ in the Bose case and $1$ in the Fermi case.

Every element $x$ of $\A$ is uniquely the sum $x = x_0 + x_1$ with $\partial x_k=k$, indeed 
$x_k = \big(x  + (-1)^k \g(x)\big)/2$.

A  {\em M\"obius covariant Fermi net} $\A$ on $S^1$ is a $\mathbb Z_2$-graded M\"obius covariant net satisfying graded locality, namely a  M\"obius covariant net of 
von Neumann algebras on $S^1$ such that the following holds:
\begin{description}
\item[\textnormal{\textsc{4. Graded locality}}:]
{\it There exists a grading automorphism $\g$ of $\A$ such that,  if $I_1$ and $I_2$ are disjoint intervals,}
\[
[x,y] = 0,\quad x\in\A(I_1), y\in\A(I_2) \ .
\]
\end{description}
Here $[x,y]$ is the graded commutator with respect to the grading automorphism $\g$ defined as follows: if $x,y$ are homogeneous then
\[
[x,y]\equiv xy - (-1)^{\partial x \cdot \partial y}yx
\]
and, for the general elements $x,y$, is  extended by linearity.

Note the \emph{Bose subnet} $\A_b$, namely the $\g$-fixed point subnet $\A^\g$ of degree zero elements, is local. Moreover, setting
\[
Z\equiv \frac{1 - i\Ga}{1 - i}
\]
we have that the unitary $Z$ fixes $\Omega$ and
\[
\A(I')\subset Z\A(I)'Z^*,
\]
(twisted localiy w.r.t. $Z$), that is indeed equivalent to graded locality.
\medskip

\noindent
{\it Remark.} Strictly speaking a M\"obius covariant net is a pair $(\A, U)$ where $\A$ is a net and $U$ is a unitary representation of $\Mob$ that satisfy the above properties. In most cases it is convenient to denote the M\"obius covariant net simply by $\A$ and then consider the unitary representation $U$ (note that $U$ is unique once we fix the vacuum vector). However, later in this paper, it will be convenient to use the notation $(\A, U)$ (also in the diffeomorphism covariant case).
\subsection{First consequences}
We collect here a few first consequences of the axioms of a M\"obius covariant Fermi net on $S^1$. They can be mostly derived by a simple extension of the proofs in the local case, cf. \cite{DLR,LR2}.
\subsubsection*{Reeh-Schlieder theorem}
\begin{theorem}\label{Reeh-Schlieder} Let $\A$ be a M\"obius covariant Fermi net on
$S^1$.  Then $\Omega$ is cyclic and separating for each von Neumann
algebra $\A(I)$, $I\in\I$.
 \end{theorem}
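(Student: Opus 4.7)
The statement splits into two assertions: $\A(I)\Omega$ is dense in $\H$ (cyclicity) and $\Omega$ is separating for $\A(I)$. For cyclicity, the grading plays no role and one runs the standard M\"obius Reeh--Schlieder argument, exactly as in the local case treated in \cite{DLR,LR2}. In outline, fix $\xi\perp\A(I)\Omega$, choose a one-parameter subgroup $g(t)\subset\Mobi$ with positive generator (for instance a translation-like parabolic subgroup obtained via the Cayley transform, whose generator is a positive combination of $L_0$ and $L_{\pm 1}$), and form
\[
F(t_1,\dots,t_n)=\langle\xi,\alpha_{t_1}(x_1)\cdots\alpha_{t_n}(x_n)\Omega\rangle,\qquad x_j\in\A(I),
\]
with $\alpha_t=\Ad U(g(t))$. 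Vacuum invariance rewrites $F$ as a matrix element with $U(g(t_j-t_{j-1}))$ factors sandwiched between the $x_j$; positivity of the generator then yields analytic continuation in the variables $s_j=t_j-t_{j-1}$ to the forward tube, while the hypothesis together with isotony makes $F$ vanish on a real open set. Edge-of-the-wedge forces $F\equiv 0$, and sweeping $g$ through $\Mob$ one concludes $\xi\perp\A(J)\Omega$ for every $J\in\I$; combined with the cyclicity of $\Omega$ for $\vee_I\A(I)$, this forces $\xi=0$.

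Granting cyclicity, I handle the separating property as follows. Suppose $x\in\A(I)$ satisfies $x\Omega=0$. Decompose $x=x_0+x_1$ with $x_k=\tfrac12(x+(-1)^k\g(x))\in\A(I)$ homogeneous of degree $k$; since $\Ga\Omega=\Omega$ we have $\g(x)\Omega=x\Omega=0$, so $x_0\Omega=x_1\Omega=0$. For any homogeneous $y\in\A(I')$, graded locality yields $x_k y=(-1)^{k\,\partial y}y x_k$, hence
\[
x_k y\Omega=(-1)^{k\,\partial y}\,y x_k\Omega=0.
\]
By linearity $x_k$ vanishes on $\A(I')\Omega$, which is dense by the cyclicity half applied to the interval $I'$. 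Therefore $x_k=0$ for $k=0,1$ and $x=0$, so $\Omega$ is separating.

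The genuine novelty compared with the local case lies in the separating step: ordinary locality is unavailable, so one is forced to decompose into Bose and Fermi parts and invoke graded locality to commute past $\A(I')$ with the appropriate signs. I do not anticipate other obstacles: the cyclicity half is a pure consequence of M\"obius covariance and the spectrum condition, both of which are imposed on the ungraded representation $U$ and are entirely indifferent to the grading automorphism $\g$.
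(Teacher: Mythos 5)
Your proposal is correct and follows essentially the same route as the paper: cyclicity by the standard M\"obius Reeh--Schlieder argument unaffected by the grading, and the separating property deduced from locality relative to $\A(I')$ together with cyclicity of $\Omega$ for $\A(I')$. The paper phrases the second step as an appeal to twisted locality $\A(I')\subset Z\A(I)'Z^*$ with $Z\Omega=\Omega$, whereas you unpack the same fact directly via the homogeneous decomposition $x=x_0+x_1$ and the graded commutation relations; these are equivalent formulations of one and the same argument.
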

 \proof
The cyclicity of $\Omega$ follows exactly as in the local case. By twisted locality, then $\Omega$ is separating too.
\endproof
\subsubsection*{Bisognano-Wichmann property}
If $I\in \I$, we shall denote by $\Lambda_I$ the one parameter subgroup of $\Mob$ of ``dilation associated with $I$\ \!'', see \cite{BGL}. We shall use the same symbol also to denote the unique one parameter subgroup of $\Mobn$  ($n$ finite or infinite) that project onto $\Lambda_I$ under the quotient map. The following theorem and its corollary are proved as usual, see \cite{DLR}.
\begin{theorem}
Let $I\in\I$ and $\Delta_I$, $J_I$ be the modular operator and the modular 
conjugation of $(\A(I),\Omega)$. Then we have:

$(i)$: 
\begin{equation}\label{BW}
\Delta_{I}^{it} = U(\Lambda_I(-2\pi t)), \ t\in\mathbb R,
\end{equation}

$(ii)$: 
$U$ extends to an (anti-)unitary representation of {\rm $\Mobi\ltimes\mathbb Z_2$}
determined by
\[
U(r_I)=ZJ_I,\ I\in\I,
\]
acting covariantly on $\A$, namely
\[
U(g)\A(I)U(g)^*=\A(\dot{g}I)\quad g\in\text{\rm $\Mobi$}\ltimes\mathbb Z_2\ I\in \I\ .
\]
Here $r_I:S^1\to S^1$ is the reflection mapping $I$ onto $I'$, see \cite{BGL}.
\end{theorem}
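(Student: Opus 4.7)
The plan is to adapt the standard proof of the Bisognano--Wichmann theorem for local M\"obius covariant nets to the graded-local (Fermi) setting, following closely \cite{DLR}. Two modifications are needed relative to the local case: twisted locality replaces ordinary locality, and the modular conjugation acquires a Klein factor $Z$ when implementing the reflection.

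For part (i), the strategy is to identify $U(\Lambda_I(-2\pi t))$ with the modular group $\Delta_I^{it}$ of $(\A(I),\Omega)$ via uniqueness of the modular automorphism. By M\"obius covariance, $U(\Lambda_I(t))$ preserves $\A(I)$ and fixes $\Omega$. One then shows that it satisfies the KMS condition at inverse temperature $2\pi$ with respect to the vacuum state on $\A(I)$. This is established by a Borchers-type argument: the positive-energy translation subgroup $T(a)$ associated with $I$ conjugates the dilation subgroup according to $T(a)\Lambda_I(t)T(-a) = \Lambda_I(t)T(e^{-t}a)$, and combining this commutation relation with spectral positivity yields the required strip analyticity of vacuum correlators on $\A(I)$. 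Since this argument lives entirely inside the single algebra $\A(I)$ and refers only to the vacuum vector, graded locality plays no role here and the proof is identical to the local case.

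For part (ii), one defines $U(r_I) := ZJ_I$ and checks that it implements the geometric reflection. Since $J_I$ is antiunitary, $J_I\Omega = \Omega$, and $\Gamma\Omega = \Omega$ (so $Z\Omega = \Omega$), the operator $ZJ_I$ is antiunitary and fixes $\Omega$. The crucial geometric identity is $ZJ_I\,\A(I)\,(ZJ_I)^* = \A(I')$. Computing the left-hand side gives $Z\A(I)'Z^*$, so one needs twisted Haag duality $\A(I') = Z\A(I)'Z^*$. The inclusion $\A(I') \subset Z\A(I)'Z^*$ is precisely twisted locality; the reverse inclusion follows from part (i) by the standard argument that the Bisognano--Wichmann property upgrades locality to Haag duality --- here applied in its $Z$-twisted form, with the modular relation $J_I\A(I)J_I = \A(I)'$ as the starting point. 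Once reflection covariance is in hand, the extension of $U$ to $\Mobi\ltimes\mathbb{Z}_2$ is determined by checking the group relations $U(g)U(r_I)U(g)^* = U(\dot g\, r_I\, \dot g^{-1})$ via part (i), with consistency enforced by uniqueness of the vacuum representation.

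The main obstacle is the bookkeeping of the Klein factor $Z$ and the grading $\Gamma$: verifying that the twisted commutants close up to equality rather than mere inclusion, that $ZJ_I$ interacts correctly with $\Gamma$ and with the dilation subgroup, and that the resulting antiunitary extension is well-defined on $\Mobi\ltimes\mathbb{Z}_2$ rather than on a central extension of it. A subsidiary obstacle is deriving twisted Haag duality from the Bisognano--Wichmann property, but this is essentially a direct transcription of the local-case Takesaki--Haag duality argument to the $Z$-twisted setting, using that $\Gamma$ is an internal symmetry preserving $\A(I)$ and $\Omega$ and therefore commutes with both $\Delta_I$ and $J_I$.
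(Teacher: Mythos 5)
Your outline follows essentially the same route as the paper, which gives no details for this theorem and simply defers to the standard arguments of \cite{DLR}: part (i) via Borchers-type commutation relations and the KMS condition, where indeed only positivity of the energy and the existence of the modular objects (guaranteed by twisted locality through Reeh--Schlieder) enter, and part (ii) via twisted Haag duality obtained from (i) by the Tomita--Takesaki argument, exactly as the paper does in the immediately following proof of twisted duality. The $Z$- and $\Gamma$-bookkeeping you flag works out as expected (e.g.\ $\Gamma$ commutes with $J_I$, hence $J_I Z J_I = Z^*$ and $(ZJ_I)^2 = 1$), so there is no gap.
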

\begin{corollary}\label{unique} {\em (Uniqueness of the M\"obius representation)}
The representation $U$ is unique.
\end{corollary}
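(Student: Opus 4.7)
The plan is to deduce uniqueness of $U$ from the Bisognano-Wichmann identification (\ref{BW}) together with the fact that the one-parameter dilation subgroups generate $\Mobi$.

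First I would observe that, by the preceding theorem, for every interval $I\in\I$ the unitary $U(\Lambda_I(t))$ is expressible purely in terms of the net $\A$ and the vacuum $\Omega$, namely $U(\Lambda_I(t))=\Delta_I^{-it/2\pi}$, where $\Delta_I$ is the modular operator of the pair $(\A(I),\Omega)$. Hence, once $\A$ and $\Omega$ are fixed, the restriction of $U$ to each dilation one-parameter subgroup $\Lambda_I$ is uniquely determined.

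Next I would use a Lie-algebraic generation argument. The Lie algebra of $\Mob$ is three-dimensional, and the generators of $\Lambda_I$ and $\Lambda_J$ for two distinct intervals $I,J$ are linearly independent, while their commutator supplies a third independent direction; thus the one-parameter subgroups $\{\Lambda_I\}_{I\in\I}$ generate $\Mob$ as a connected Lie group. Lifting to the universal cover: the canonical lifts of these subgroups generate a connected subgroup of $\Mobi$ whose projection to $\Mob$ is surjective; since $\Mobi$ is connected and the subgroup in question contains a neighborhood of the identity, it exhausts all of $\Mobi$.

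Putting these two steps together: if $U_1$ and $U_2$ are two strongly continuous unitary representations of $\Mobi$ satisfying axioms \textsc{1}--\textsc{3} with respect to the same net $(\A,\Omega)$, then both satisfy the Bisognano-Wichmann identity, so they agree on every $\Lambda_I$; because these subgroups generate $\Mobi$, we obtain $U_1=U_2$.

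I do not foresee a real obstacle here: the only mild point is the lifting to the universal cover, which is handled by the connectedness of $\Mobi$ and the fact that any neighborhood of the identity generates a connected Lie group. Everything else follows essentially verbatim as in the local case treated in \cite{DLR}, since the Bisognano-Wichmann statement on which the argument rests has already been established in the present graded setting.
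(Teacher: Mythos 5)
Your proposal is correct and is essentially the paper's own (implicit) argument: the paper proves Corollary \ref{unique} "as usual" by appealing to the Bisognano--Wichmann identity \eqref{BW}, which pins down $U$ on the lifted dilation subgroups $\Lambda_I$ in terms of the modular operators of $(\A(I),\Omega)$, and these subgroups generate $\Mobi$. Your handling of the lift to the universal cover and the explicit fixing of the vacuum vector $\Omega$ (consistent with the paper's remark that $U$ is unique once $\Omega$ is fixed) are both sound, so there is nothing to add.
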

\begin{corollary} {\em (Additivity)}
  Let $I$ and $I_i$ be intervals with $I\subset\cup_i I_i$. Then 
  $\A(I)\subset\vee_i\A(I_i)$.
\end{corollary}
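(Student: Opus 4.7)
The statement is a standard weak-additivity property for a M\"obius covariant net, and its proof relies only on the Bisognano--Wichmann theorem and Reeh--Schlieder just established (together with M\"obius covariance). The Fermi setting requires no special modification, since the two modular ingredients hold in the same geometric form and twisted locality is not used in the argument.

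\noindent The Bisognano--Wichmann identity $\Delta_I^{it}=U(\Lambda_I(-2\pi t))$ together with the fact that the dilation flow $\Lambda_I(t)$ exhausts $I$ from any compactly-contained subinterval $\bar J\subset I$ gives the internal continuity $\A(I)=\bigvee_{\bar J\subset I}\A(J)$. Combining this with compactness of each such $\bar J$ in the open cover $\bigcup_i I_i$ reduces the claim to finite covers; a routine inductive refinement then reduces further to the two-interval case $I\subset I_1\cup I_2$ with $I_1\cap I_2\neq\varnothing$ and $\tilde I:=I_1\cup I_2$ itself an interval. The task is thus to show $\A(\tilde I)=\A(I_1)\vee \A(I_2)$.

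\noindent For this two-interval step, set $\M:=\A(I_1)\vee \A(I_2)$ and $\N:=\A(\tilde I)$. By Reeh--Schlieder the vector $\Omega$ is cyclic and separating for both algebras, and by Bisognano--Wichmann the modular groups of $\A(I_1),\A(I_2),\N$ are the one-parameter dilation subgroups $U(\Lambda_{I_1}),U(\Lambda_{I_2}),U(\Lambda_{\tilde I})$ of $U(\Mob)$, whose geometric action on intervals is fully known. After a slight refinement (replacing the covering $I_1,I_2$ by two subintervals each sharing an endpoint with $\tilde I$, which is harmless at this stage), the inclusions $\A(I_k)\subset \N$ become $\pm$-half-sided modular inclusions in the sense of Wiesbrock, and Wiesbrock's reconstruction theorem then recovers $\N$ from the modular data of the pair $\A(I_1),\A(I_2)$ and $\Omega$, giving $\N\subset \M$; the reverse inclusion is isotony.

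\noindent\textbf{Main obstacle.} The technical core is the two-interval case, where one must pass from the purely geometric containment $\tilde I=I_1\cup I_2$ to the operator-algebraic equality $\A(\tilde I)=\A(I_1)\vee \A(I_2)$. Wiesbrock's theorem on half-sided modular inclusions is the natural tool, and once invoked the three reductions combine to yield the corollary. An alternative route (closer to \cite{DLR}) first runs the argument for the local Bose subnet $\A_b$ and then lifts to $\A$ via the uniqueness of the M\"obius representation (Corollary \ref{unique}) and the decomposition $x=x_0+x_1$ into homogeneous parts.
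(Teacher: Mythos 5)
Your overall architecture (continuity from inside via Bisognano--Wichmann, Reeh--Schlieder and Takesaki's theorem; compactness to reduce to finite covers; induction to reduce to two overlapping intervals whose union is an interval) is exactly the standard reduction, and those steps are fine. Note that the paper itself does not write out a proof --- it appeals to the usual argument ``see \cite{DLR}'' --- so the comparison is with that standard argument, whose entire content lies in the two-interval case. That is precisely where your proposal has a genuine gap.

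The appeal to Wiesbrock's theorem does not deliver the inclusion $\N\subset\M$. What the half-sided modular inclusions $\A(I_k)\subset\N$ reconstruct are the translation unitaries $T^\pm(s)$ and, with $\Delta_\N^{it}$, the representation of $\Mob$; these operators are built from the modular data but are \emph{not} elements of $\M=\A(I_1)\vee\A(I_2)$ (indeed $\Delta_{I_k}^{it}=U(\Lambda_{I_k}(-2\pi t))$ is a global unitary, not even in $\A(\tilde I)$). ``Recovering $\N$ from the modular data of the pair'' is therefore not the same as generating $\N$ inside $\M$. There is a decisive structural objection: after your ``slight refinement'' each $I_k$ shares one endpoint with $\tilde I$, and the two half-sided modular inclusions $\A(I_k)\subset\N$ exist in identical form whether $I_1$ and $I_2$ overlap or merely touch at an interior point of $\tilde I$. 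In the touching case the conclusion $\N=\A(I_1)\vee\A(I_2)$ is \emph{strong} additivity, which fails for perfectly good M\"obius covariant nets (e.g.\ Virasoro nets with $c>1$, or the net generated by the derivative of the $U(1)$ current). Hence any correct proof of the two-interval step must use the overlap $I_1\cap I_2\neq\emptyset$ in an essential analytic way, and your argument never does. The standard proof to which the paper defers handles this step by a positivity-of-energy/analytic-continuation argument (Reeh--Schlieder/Borchers type) built on the translation subgroup associated with $\tilde I$, where the overlap enters through the choice of vectors and localization regions; the half-sided-modular formalism alone cannot see the difference between overlapping and touching. Your closing ``alternative route'' (prove it for $\A_b$ and lift via $x=x_0+x_1$) also does not close the gap, since an odd element of $\A(I)$ has no evident decomposition into odd elements of the $\A(I_i)$; in any case additivity for $\A_b$ would have to be proved by the same missing argument.
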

\medskip

\noindent
{\it Remark.} If $E\subset S^1$ is any set, we denote by $\A(E)$ the von Neumann algebra generated by the $\A(I)$'s as $I$ varies in the intervals $I\in\I$, $I\subset E$. It follows easily by M\"obius covariance that if $I_0\in\I$ has closure $\bar I_0$, then $\A(\bar I_0)=\bigcap_{I\supset {\bar I_0}, I\in\I}\A(I)$.
\subsubsection*{Twisted  Haag duality}
\begin{theorem}
For every  $I\in\I$, we have:
\[
\A(I')=Z\A(I)'Z^*
\]
\end{theorem}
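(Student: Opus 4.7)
The forward inclusion $\A(I')\subset Z\A(I)'Z^*$ is already in hand: it was recorded (as equivalent to graded locality) in the subsection on M\"obius covariant Fermi nets. So the task is to establish the reverse inclusion, and for this the plan is to read off both sides from the preceding Bisognano--Wichmann theorem.

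The strategy is to combine the geometric action of the reflection $r_I$ with Tomita's theorem. Part $(ii)$ of the Bisognano--Wichmann theorem furnishes an (anti-)unitary extension of $U$ to $\Mobi\ltimes\mathbb Z_2$ that acts covariantly on $\A$, so in particular
\[
U(r_I)\,\A(I)\,U(r_I)^*=\A(r_I\cdot I)=\A(I')\ ,
\]
and it identifies this anti-unitary as $U(r_I)=ZJ_I$. On the other hand, Tomita's theorem applied to the pair $(\A(I),\Omega)$ (here one uses the Reeh--Schlieder theorem proved just above, so that $\Omega$ is cyclic and separating for $\A(I)$) yields $J_I\A(I)J_I=\A(I)'$.

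Putting these together, and using $J_I^{\,*}=J_I$,
\[
\A(I')=U(r_I)\A(I)U(r_I)^*=ZJ_I\A(I)J_IZ^*=Z\,\A(I)'\,Z^*\ ,
\]
which is the desired twisted Haag duality. No additional work is required beyond invoking the two preceding results; the only point to be careful about is the anti-unitarity of $J_I$ and of $U(r_I)$ (so that conjugation by $ZJ_I$ makes sense as an algebra isomorphism), and the fact that $Z$ commutes with $U(g)$ for $g\in\Mobi$ (so that the decomposition $U(r_I)=ZJ_I$ is compatible with covariance). Both of these are immediate from the setup, so there is no real obstacle: the content of twisted Haag duality is contained in Bisognano--Wichmann plus modular theory, and the present theorem is essentially a restatement.
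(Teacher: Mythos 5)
Your argument is correct, but it is not the route the paper takes. You derive the reverse inclusion from part $(ii)$ of the Bisognano--Wichmann theorem: the covariant (anti-)unitary $U(r_I)=ZJ_I$ carries $\A(I)$ to $\A(I')$, and Tomita's theorem gives $J_I\A(I)J_I=\A(I)'$, whence $\A(I')=Z\A(I)'Z^*$; as you note, with this input the theorem is essentially a restatement of the geometric action of the modular conjugation. The paper instead uses only part $(i)$ (the modular group, $\Delta_I^{it}=U(\Lambda_I(-2\pi t))$): starting from twisted locality $Z\A(I')Z^*\subset\A(I)'$, it observes that the subalgebra $Z\A(I')Z^*$ is globally invariant under the vacuum modular group of $\A(I)'$ (since $\Lambda_I$ preserves $I'$ and $Z$ commutes with $U$) and has $\Omega$ cyclic by Reeh--Schlieder, so Takesaki's theorem on modular-covariant subalgebras forces equality. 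The trade-off: your proof is shorter but puts all the weight on the conjugation half of Bisognano--Wichmann, which is the stronger statement and in many developments is itself established via (twisted) duality, so one must be sure no circularity is hidden in the citation; the paper's proof needs only the modular-group half plus twisted locality, and so is the more economical and more robust derivation. Within the logical structure of this paper, where part $(ii)$ is quoted as already proved, your argument is valid.
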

\proof As already noted twisted locality $\A(I)' \supset Z\A(I')Z^*$ holds as a consequence of the Bose-Fermi commutation relations. By the Bisognano-Wichmann property, $Z\A(I')Z^*$ is a von Neumann subalgebra of $\A(I)'$ globally invariant under the vacuum modular group. As the vacuum vector is cyclic for $\A(I')$ by the Reeh-Schlieder theorem, we have $\A(I)' = Z\A(I')Z^*$ by the Tomita-Takesaki modular theory.
\endproof
In the following corollary, the grading and the graded commutator is considered on $B(\H)$ w.r.t. $\Ad\Ga$.
\begin{corollary}\label{td}
$\A(I')= \big\{x\in B(\H):\ [x,y]=0\ \forall y\in\A(I)\big\}$.
\end{corollary}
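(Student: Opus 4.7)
The plan is to show the two inclusions separately, using twisted Haag duality (the theorem immediately preceding) to convert the nontrivial half into an ordinary commutant computation for $\A(I)$. The inclusion $\A(I') \subset \{x \in B(\H) : [x,y] = 0\ \forall y \in \A(I)\}$ is essentially graded locality (Axiom 4) combined with continuity of the graded bracket: any $x \in \A(I')$ is a (strong) limit of polynomials in elements of $\A(J)$ with $J \subset I'$, each of which graded-commutes with $\A(I)$, and the grading $\g = \Ad\Ga$ is strongly continuous.

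For the reverse inclusion, fix $x \in B(\H)$ with $[x,y] = 0$ for all $y \in \A(I)$. Because $\Ga \A(I) \Ga^* = \A(I)$ and the sign $(-1)^{\partial x \cdot \partial y}$ is preserved under $\Ad\Ga$, the graded commutant is $\g$-invariant; hence the homogeneous components $x_0 = (x + \g(x))/2$ and $x_1 = (x - \g(x))/2$ individually graded-commute with $\A(I)$. By twisted Haag duality $\A(I') = Z\A(I)' Z^*$, it suffices to check that $Z^* x_0 Z$ and $Z^* x_1 Z$ both lie in $\A(I)'$.

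For the even part, the graded commutator with any homogeneous $y \in \A(I)$ coincides with the ordinary one, so $x_0 \in \A(I)'$; and since $Z$ is a polynomial in $\Ga$ and $\Ga$ commutes with $x_0$, one has $Z^* x_0 Z = x_0$. For the odd part, a direct computation with $Z = (1 - i\Ga)/(1 - i)$, using $\Ga^2 = 1$ and $\Ga x_1 = -x_1 \Ga$, yields the Klein twist $Z^* x_1 Z = -i\, x_1 \Ga$. To verify that $x_1 \Ga \in \A(I)'$, take $y \in \A(I)$ homogeneous: combining $\Ga y = (-1)^{\partial y} y\, \Ga$ with the graded relation $x_1 y = (-1)^{\partial y} y x_1$, the two signs cancel and $x_1 \Ga y = y\, x_1 \Ga$. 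Hence $Z^* x Z \in \A(I)'$ and so $x \in Z\A(I)' Z^* = \A(I')$.

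The only real bookkeeping is the Klein-twist arithmetic for the odd part: one has to track the factors of $i$ and the sign coming from $\Ga x_1 = -x_1 \Ga$ carefully to land on $Z^* x_1 Z = -i\, x_1 \Ga$. Once this is settled, the graded commutation relations of the Fermi net translate into ordinary commutation relations in the Klein-twisted picture, and the statement reduces to the twisted Haag duality just established.
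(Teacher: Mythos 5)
Your proof is correct and follows essentially the same route as the paper's: Klein-transforming by $Z$ to get $Z^*xZ = x_0 - i x_1\Ga$, checking this lies in $\A(I)'$ via the graded commutation relations, and concluding by twisted Haag duality. You merely spell out the sign bookkeeping that the paper leaves implicit.
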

\proof
Set $X\equiv\{x\in B(\H):\ [x,y]=0\ \forall y\in\A(I)\}$. We have to show that $X\subset\A(I')$. 
If $x = x = x_0 + x_1\in X$ then $Z^*xZ =  x_0 - ix_1\Ga$. As $[x,y]=0$ for all $y\in\A(I)$ it follows that $Z^*xZ\in\A(I)'$ so $x\in Z\A(I)'Z^* = \A(I')$.
\endproof
\subsubsection*{Irreducibility}
We shall say that $\A$ is \emph{irreducible} if the von Neumann algebra $\vee\A(I)$
generated by all local algebras coincides with $B(\H)$.

The irreducibility property is indeed equivalent to several other requirements.
\begin{proposition}\label{irr} Assume all properties $1-4$ for $\A$ except the uniqueness of the vacuum. The following are equivalent:
\begin{itemize}     
\item[{$(i)$}] $\mathbb C\Omega$ are the only $U$-invariant vectors.
\item[{$(ii)$}] The algebras $\A(I)$, $I\in\I$, are
factors. In particular they are type III$_1$ factors (or ${\rm dim}\, \H = 1$).
\item[{$(iii)$}] The net $\A$ is irreducible.
\end{itemize}
\end{proposition}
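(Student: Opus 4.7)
The plan is to prove the cycle $(i)\Rightarrow(iii)\Rightarrow(ii)\Rightarrow(i)$, adapting the standard local-case argument (cf.\ \cite{DLR}) to the graded setting by substituting twisted Haag duality and the graded-commutator characterisation of Corollary~\ref{td} for ordinary locality. Set $\M:=\vee_{I\in\I}\A(I)$. For $(i)\Rightarrow(iii)$, the Bisognano--Wichmann identity \eqref{BW} places every dilation subgroup $U(\Lambda_I(\mathbb R))$ inside $\A(I)\subset\M$. Since $\Mobi$ is connected and its Lie algebra is generated by the (lifted) $\Lambda_I$ as $I$ varies over $\I$, we obtain $U(\Mobi)\subset\M$ and hence $\M'\subset U(\Mobi)'$. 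For any $y\in\M'$ the vector $y\Omega$ is then $U$-invariant, so by $(i)$ it is a scalar multiple of $\Omega$; cyclicity of $\Omega$ for $\M$ (axiom~3) makes $\Omega$ separating for $\M'$, forcing $y\in\mathbb C\cdot 1$ and $\M=B(\H)$.

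For $(iii)\Rightarrow(ii)$ let $x\in Z(\A(I))=\A(I)\cap\A(I)'$ and decompose $x=x_0+x_1$ into its Bose and Fermi parts; both remain in $Z(\A(I))$ because $\A(I)$ and its commutant are $\g$-invariant. Since $Z$ acts trivially on Bose elements, twisted Haag duality gives $x_0\in\A(I')$, and the graded commutator collapses to the ordinary one on a Bose argument, so graded locality yields that $x_0$ commutes with all of $\A(I')$; hence $x_0\in Z(\A(I)\vee\A(I'))$, and an application of additivity and M\"obius covariance (using $\M=B(\H)$) forces $x_0\in\mathbb C\cdot 1$. A parallel computation for the Fermi component, based on the identity $Z^*x_1Z=-ix_1\G$ together with the graded-commutator description of Corollary~\ref{td}, similarly yields $x_1=0$, so $\A(I)$ is a factor. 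That $\A(I)$ is of type III$_1$ (when $\dim\H>1$) then follows from the Bisognano--Wichmann identity and positivity of the energy by the standard Borchers--Longo argument, exploiting that the modular flow is implemented by the positive-energy dilation group.

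The direction $(ii)\Rightarrow(i)$ is the principal obstacle. I would handle it by reducing to the local Bose subnet $\A_b$: the Jones--Galois correspondence applied to the order-two grading $\g$ shows that $\A_b(I)=\A(I)^{\g}\subset\A(I)$ is a subfactor of index at most $2$, hence itself a factor of the same type as $\A(I)$. The pair $(\A_b,U)$ is then a local M\"obius-covariant net satisfying axioms~1--4 in which every local algebra is a factor, so the already-known equivalence in the purely local case (see \cite{DLR}) produces a unique $U$-invariant vector up to scalars for $\A_b$; since $U$ is shared between $\A_b$ and $\A$, the same holds for $\A$, proving $(i)$. The technical point requiring care is the degenerate case in which $\g$ restricts trivially to some $\A(I)$, so that the Galois subfactor collapses: here one first argues that triviality at a single interval propagates by M\"obius covariance and additivity to global triviality of $\g$, reducing the problem to the purely local statement.
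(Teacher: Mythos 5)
Your first step contains a genuine error that brings the whole cycle down. You claim that the Bisognano--Wichmann identity \eqref{BW} ``places every dilation subgroup $U(\Lambda_I(\mathbb R))$ inside $\A(I)$''. It does not: \eqref{BW} identifies $U(\Lambda_I(-2\pi t))$ with the modular unitaries $\Delta_I^{it}$ of $(\A(I),\Omega)$, and modular unitaries \emph{normalise} the algebra, they do not belong to it. In fact the inclusion is impossible except in the trivial case: $U(\Lambda_I(t))$ fixes $\Omega$, and $\Omega$ is separating for $\A(I)$ by the Reeh--Schlieder argument (which needs no uniqueness of the vacuum), so $U(\Lambda_I(t))\in\A(I)$ would force $U(\Lambda_I(t))=1$ for every $I$ and $t$, hence $U\equiv 1$ on all of $\Mobi$ since the $\Lambda_I$ generate it --- contradicting covariance for any nontrivial net. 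The fact you actually need, $U(\Mobi)\subset\M$ (equivalently $\M'\Omega\subset\mathbb C\Omega$ under $(i)$), is true but its proof is analytic: for $y\in\M'$ and $x\in\A(I)$ the function $t\mapsto\langle y^*\Omega, U(\tau(t))x\Omega\rangle$, with $\tau$ the translation subgroup whose generator is positive, extends to a bounded entire function and is therefore constant, which shows that $\M'\Omega$ consists of $U$-invariant vectors. Positivity of the energy is the engine of the whole proposition, and your argument never invokes it.

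The other two legs also have gaps. In $(iii)\Rightarrow(ii)$ you dispose of $x_0\in\A(I)\cap\A(I)'$ by saying it commutes with $\A(I)$ and $\A(I')$ and hence, ``by additivity and M\"obius covariance'', with $\M=B(\H)$; but additivity requires an open cover, and $I\cup I'$ misses the two boundary points, so $\A(I)\vee\A(I')=\M$ is exactly the strong-additivity-flavoured statement one cannot assume (it is usually \emph{derived} from duality plus factoriality, i.e.\ from what you are trying to prove). In $(ii)\Rightarrow(i)$ the reduction to $\A_b$ only controls invariant vectors in the even subspace $\overline{\A_b\Omega}$; to exclude $U$-invariant vectors in the odd subspace you would need $U(2\pi)=\Ga$, which is Prop.~\ref{vss}, proved later and via the spin-statistics theorem for the irreducible net. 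Note finally that the paper offers no argument of its own here but refers to \cite{DLR}; the proof there runs $(i)\Rightarrow(ii)$ (an element of the centre of $\A(I)$ is fixed by the modular group, so its image of $\Omega$ is $\Delta_I^{it}$-invariant, hence $U$-invariant because in a positive-energy representation of $\Mob$ a vector fixed by one dilation subgroup is fixed by all of $\Mob$, hence scalar by $(i)$ and separability of $\Omega$), then $(ii)\Rightarrow(iii)$ by twisted Haag duality, and $(iii)\Rightarrow(i)$ by the cluster property coming from positivity of the energy. I would encourage you to rebuild the proof along those lines.
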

\proof See \cite{DLR}. 
\endproof
\subsubsection*{Vacuum spin-statistics relation}
\label{vacss} 
The relation $U(2\pi)=1$ holds in the local case \cite{GL2}; in the graded local case it generalizes to
\[
U(4\pi)=1\ ,
\]
where $U(s) \equiv U({\rm rot}(s)) = e^{isL_0}$ is the rotation one-parameter unitary group, see \cite{DLR}. Indeed we have the following.
\begin{proposition}\label{vss}
Let $\A$ be M\"obius covariant Fermi net. Then:
\[
U(2\pi) =\Ga\ .
\]
\end{proposition}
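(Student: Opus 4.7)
The plan is to show that $V:=U(2\pi)=U({\rm rot}^{(\infty)}(2\pi))$ coincides with $\Ga$ by verifying this on each of the $\pm 1$ eigenspaces $\H_\pm$ of $\Ga$. First I would collect the elementary properties of $V$: it fixes $\Om$; by M\"obius covariance $V\A(I)V^*=\A(I)$ for every $I$, since ${\rm rot}^{(\infty)}(2\pi)$ projects to the identity of $\Mob$; $V$ commutes with $U(g)$ for every $g\in\Mobi$, as ${\rm rot}^{(\infty)}(2\pi)$ is central in the universal cover; since $\Ga$ is a gauge unitary it commutes with $U$, in particular $[V,\Ga]=0$, so $V$ preserves the decomposition $\H=\H_+\oplus\H_-$; and $V^2=U(4\pi)=1$, which follows from the antiunitary extension of $U$ to $\Mobi\ltimes\Z_2$ granted by the Bisognano-Wichmann theorem above, cf.\ \cite{DLR}.

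Next I would identify $V|_{\H_+}=1$. The Reeh-Schlieder theorem shows $\A(I)\Om$ is dense in $\H$, and writing $x\in\A(I)$ as $x=x_0+x_1$, the Bose projection $(1+\Ga)/2$ sends $x\Om$ to $x_0\Om\in\A_b\Om$, so $\A_b\Om$ is dense in $\H_+$. Thus $(\A_b,U)$ restricts to a local M\"obius covariant net on $\H_+$ with vacuum $\Om$, to which the ordinary vacuum spin-statistics theorem for local nets \cite{GL2} applies, giving $V|_{\H_+}=1$.

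Finally I would show $V|_{\H_-}=-1$. The restriction of $\A_b$ to $\H_-$ is a positive-energy representation of the local net $\A_b$, namely the involutive sector $\si$ dual to the grading mentioned in the introduction. The graded locality of $\A$ translates into self-braiding $-1$ for $\si$: for odd $\psi_i\in\A(I_i)$ in disjoint intervals, the anticommutation $\psi_1\psi_2=-\psi_2\psi_1$ yields the statistical phase $-1$ of $\si$ under the DHR braiding. The spin-statistics theorem for DHR sectors of the local net $\A_b$ then identifies this phase with the univalence $e^{2\pi iL_0^\si}=U|_{\H_-}(2\pi)$, giving $V|_{\H_-}=-1$. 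Combining, $V=1\oplus(-1)=\Ga$.

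The main obstacle is the last step: making the assertion "graded locality of $\A$ forces statistics $-1$ for $\si$, hence univalence $-1$" fully rigorous from the axioms at hand. The cleaner route, followed in \cite{DLR}, bypasses the spin-statistics theorem for sectors entirely and argues directly with the antiunitary extension $U(r_I)=ZJ_I$: since $Z=(1-i\Ga)/(1-i)$ one has $Z|_{\H_+}=1$ and $Z|_{\H_-}=i$, and expressing $V$ as a product of two reflections $U(r_{I_1})U(r_{I_2})$ with $I_2={\rm rot}(\pi/2)I_1$ and squaring then tracks an extra factor $i^2=-1$ on $\H_-$ while collapsing to $1$ on $\H_+$, yielding the claim.
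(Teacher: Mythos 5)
Your proposal is correct and follows essentially the same route as the paper: the paper's proof restricts the identity representation of $\A$ to $\A_b$, obtaining $\iota\oplus\s$ on $\H_+\oplus\H_-$, observes that $\s$ has Fermi statistics because of the Bose--Fermi commutation relations, and applies the conformal spin-statistics theorem of \cite{GL2} to conclude $U(2\pi)=1\oplus(-1)=\Ga$. The step you flag as the main obstacle (graded locality forcing statistics phase $-1$ for $\s$) is exactly the step the paper dispatches with ``clearly,'' so your argument matches the intended one.
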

\proof
By restricting to $\A_b$ the identity representation of $\A$ we get the direct sum $\iota\oplus\s$ of the identity and an automorphism. Clearly $\s$ has Fermi statistics because $\A$ has Bose-Fermi commutation relations. Thus $U = U_\iota \oplus U_\s$; by the conformal spin-statistics theorem \cite{GL2} we then have $U(2\pi)= U_\iota (2\pi) \oplus U_\s (2\pi) = 1\oplus -1 = \Ga$.
\endproof
\begin{corollary} {\em (Uniqueness of the grading)}
The grading automorphism is unique.
\end{corollary}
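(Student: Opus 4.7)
The plan is to exploit Proposition \ref{vss} together with the uniqueness of the covariance representation $U$. Suppose $\gamma_1$ and $\gamma_2$ are two involutive gauge automorphisms of $\A$ with respect to each of which graded locality holds. Let $\Ga_1$ and $\Ga_2$ be the corresponding implementing unitaries (each fixing $\Om$, which pins them down uniquely up to a sign that we have already absorbed by the involutivity condition, since the proof of Prop.~\ref{vss} identifies $\Ga_i$ concretely with $U(2\pi)$).

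The key point I would emphasise is that the one-parameter rotation subgroup $U(\cdot)$ does \emph{not} depend on the chosen grading. Indeed the Bisognano--Wichmann identity $\Delta_I^{it}=U(\Lambda_I(-2\pi t))$ used to derive Corollary~\ref{unique} only refers to the modular data of the pair $(\A(I),\Om)$ and to the geometric action of $\Mob$ on $S^1$; the grading plays no role in its formulation. Hence the representation $U$ is intrinsically attached to the pair $(\A,\Om)$.

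Having secured this, I would apply Proposition~\ref{vss} separately to the Fermi nets $(\A,\gamma_1)$ and $(\A,\gamma_2)$ to obtain
\[
\Ga_1 = U(2\pi) = \Ga_2,
\]
whence $\gamma_1 = \Ad\Ga_1 = \Ad\Ga_2 = \gamma_2$.

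The only conceptual subtlety — and the one I would be careful to spell out — is the independence of $U$ from the grading just mentioned; once that is observed, the corollary becomes an immediate consequence of Proposition~\ref{vss} and Corollary~\ref{unique}, with no further calculation required.
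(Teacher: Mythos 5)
Your argument is correct and is essentially the paper's own proof, which simply cites the uniqueness of the M\"obius representation $U$ (Corollary~\ref{unique}) together with the spin-statistics relation $U(2\pi)=\Ga$ from Proposition~\ref{vss}. Your elaboration that $U$ is intrinsic to $(\A,\Om)$ and hence independent of the grading is exactly the point the paper is invoking; note only that the sign issue you mention does not arise, since a vacuum-fixing implementer is already uniquely determined on the dense subspace $\A\Om$ by $\Ga x\Om=\g(x)\Om$.
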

\proof
Immediate from the uniqueness of the M\"obius unitary representation and the spin-statistics relation $U(2\pi) =\Ga$.
\endproof
\subsection{Fermi conformal nets on $S^1$}
If $\A$ is a M\"obius covariant Fermi net on $S^1$ then, by the spin-statistics relation, the unitary representation of $\Mobi$ is indeed a representation of $\Mobb$. As $\Mob$ is naturally a subgroup of the group $\Diff(S^1)$ of orientation preserving diffeomorphisms of $S^1$, clearly $\Mobb$  is naturally a subgroup of $\Diff^{(2)}(S^1)$, the 2-cover of $\Diff(S^1)$.

Given an interval $I\in\I$, we shall denote by $\Diff_I(S^1)$ the subgroup of diffeomorphisms $g$ of $S^1$ localised in $I$, namely $g(t)=t$ for all $t\in I'$, and by
$\Diff_I^{(2)}(S^1)$ the connected component of the identity of the pre-image of 
$\Diff_I(S^1)$ in $\Diff^{(2)}(S^1)$.

A \emph{Fermi conformal net} $\A$ (of von Neumann algebras) on $S^1$ is a M\"obius covariant Fermi net of von Neumann algebras on $S^1$ such that the following holds:
\begin{description}
\item[\textnormal{\textsc{5. Diffeomorphism covariance}}:]
{\it There exists a projective unitary representation $U$ of $\Diff^{(2)}(S^1)$ on $\H$, extending the unitary representation of} $\Mobb$, {\it such that
\[
U(g)\A(I)U(g)^* = \A(\dot{g}I),\ g\in\Diff^{(2)}(S^1),\ I\in\I,
\]
and
\[
U(g)xU(g)^* = x, \ x\in\A(I'),\ g\in\Diff_I^{(2)}(S^1), \ I\in\I\ .
\]}
\end{description}
Here $\dot{g}$ denotes the image of $g$ in $\Diff(S^1)$ under the quotient map.
\begin{lemma}
For every $g\in\Diff^{(2)}(S^1)$ we have $U(g)\Ga = \Ga U(g)$. In particular,
if $g\in\Diff_I^{(2)}(S^1)$, then $U(g)\in\A_b(I)$.
\end{lemma}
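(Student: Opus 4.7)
The plan is to show that $\Gamma U(g)\Gamma^{-1}U(g)^{-1}$ is a scalar depending on $g$, and then to rule out any non-trivial value by showing it defines a trivial character of $\Diff^{(2)}(S^1)$. Although the projective representation $U$ carries a cocycle ambiguity, the assignment $\alpha\colon g\mapsto \Ad U(g)$ is a genuine group homomorphism from $\Diff^{(2)}(S^1)$ into $\Aut(\A)$ (since $\Ad$ kills scalar factors), and axiom~5 makes it net-preserving. By Proposition~\ref{vss}, $\gamma=\Ad\Gamma=\alpha_{\mathrm{rot}(2\pi)}$, and $\mathrm{rot}(2\pi)$ is precisely the generator of the $\mathbb{Z}_2$ centre of $\Diff^{(2)}(S^1)$. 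Centrality forces $\gamma$ to commute with every $\alpha_g$, i.e.\ $\Gamma U(g)\Gamma^{-1}U(g)^{-1}\in \A'$, and irreducibility (via Proposition~\ref{irr}) collapses $\A'$ to the scalars, yielding a unimodular scalar $\chi(g)$ with $\Gamma U(g)\Gamma^{-1}=\chi(g)U(g)$. A short computation using the $U$-cocycle relation shows that $g\mapsto\chi(g)$ is a strongly continuous homomorphism from $\Diff^{(2)}(S^1)$ to the circle group.

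The decisive step is to show $\chi\equiv 1$, and this is where I expect the only real obstacle. Direct evaluation at $g=\mathrm{rot}(2\pi)$ yields $\chi(\mathrm{rot}(2\pi))=1$, so $\chi$ vanishes on the centre of $\Diff^{(2)}(S^1)$ and descends to a continuous character of $\Diff^+(S^1)$. I would then invoke the classical theorem of Herman (with contributions of Mather and Thurston) that $\Diff^+(S^1)$ is a simple, hence perfect, group; this forces the descended character to be trivial and therefore $\chi\equiv 1$, so $U(g)\Gamma=\Gamma U(g)$ for every $g\in\Diff^{(2)}(S^1)$.

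For the ``in particular'' clause, fix $g\in\Diff^{(2)}_I(S^1)$. Axiom~5 gives that $U(g)$ ordinarily commutes with $\A(I')$, while the first part of the lemma tells us $U(g)$ is a Bose element. Since a Bose operator graded-commutes with an element iff it ordinarily commutes with it, $U(g)$ graded-commutes with $\A(I')$; applying Corollary~\ref{td} with $I'$ in place of $I$ (using $I''=I$) then places $U(g)$ in $\A(I)$, so that $U(g)\in\A_b(I)$ as required. The substantive obstacle is confined to the character-killing step; the rest is bookkeeping around projective-representation ambiguities and the Bose/Fermi decomposition.
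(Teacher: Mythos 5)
Your proof is correct, and its overall skeleton (produce a scalar $\chi(g)$ with $\Ga U(g)\Ga^{-1}=\chi(g)U(g)$, show $\chi\equiv 1$, then localize $U(g)$ by duality) matches the paper's; the one step you do genuinely differently is the killing of $\chi$. The paper's argument is more elementary: since $\Ga^2=1$, conjugating $U(g)$ twice by $\Ga$ returns $U(g)$, so $\chi(g)^2=1$; then $\chi$ is a continuous $\{\pm1\}$-valued function on the connected group $\Diff^{(2)}(S^1)$ equal to $1$ at the identity, hence $\chi\equiv 1$. You instead observe that $\chi$ is a continuous character, evaluate it on the central element ${\rm rot}^{(2)}(2\pi)$ to descend to $\Diff(S^1)$, and invoke simplicity (hence perfectness) of $\Diff(S^1)$ \`a la Herman--Mather--Thurston. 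This is valid --- and the paper does use simplicity of $\Diff(S^1)$ elsewhere (Prop.~\ref{locdiff}) --- but it is heavier machinery than needed here, and it forgoes the cheap structural input $\Ga^2=1$ that makes the problem trivial; on the other hand your route would survive even if the implementing unitary were not an involution, so it is marginally more general. For the ``in particular'' clause your argument via Cor.~\ref{td} (a Bose element graded-commutes iff it ordinarily commutes, then apply the corollary with $I'$ in place of $I$) is equivalent to the paper's, which passes through $Z$ and twisted Haag duality directly --- Cor.~\ref{td} is itself just a repackaging of twisted duality, so this is only a cosmetic difference.
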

\proof
As $\Ga= U(2\pi)$, we have
$
\Ga U(g) \Ga^* = U(2\pi)U(g)U(2\pi)^*=\chi(g) U(g)
$ for all  $g\in\Diff^{(2)}(S^1)$,
where $\chi$ is a continuos scalar function on $\Diff^{(2)}(S^1)$. As $\Ga^2 =1$, we have $\chi(g)^2 =1$ for all $g$, thus $\chi(g) =1$ because $\Diff^{(2)}(S^1)$ is connected and $\chi(g) =1$ is $g$ is the identity.

With $g\in\Diff_I^{(2)}(S^1)$, then $U(g)$ commutes with $\Ga$, hence with $Z$. By the covariance condition $U(g)$ commutes with $\A(I')$, hence with $Z\A(I')Z^*$, so $U(g)\in\A(I)$ by twisted Haag duality.
\endproof
By the above lemma, the representation of the diffeomorphism group belongs to the Bose subnet, so we may apply the uniqueness result in the local case in \cite{CW} and get the following:
\begin{corollary} \emph{(Uniqueness of the diffeomorphism representation)}
The projective unitary representation $U$ of $\Diff^{(2)}(S^1)$ is unique (up to a projective phase).
\end{corollary}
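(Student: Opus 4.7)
The plan is to reduce the statement to the uniqueness of the diffeomorphism representation for the local conformal net $\A_b$ due to Carpi--Weiner \cite{CW}; the preceding lemma serves as the key bridge, since it places the diffeomorphism unitaries inside $\A_b(I)$.

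Suppose that $U_1$ and $U_2$ are two projective unitary representations of $\Diff^{(2)}(S^1)$ both satisfying axiom 5 and both extending the (unique, by Corollary \ref{unique}) M\"obius representation of $\Mobb$. By the preceding lemma, $U_j(g)\in\A_b(I)$ whenever $g\in\Diff_I^{(2)}(S^1)$, for $j=1,2$. In particular each $U_j(g)$ commutes with $\Gamma$ and hence preserves the Bose subspace $\H_b$, namely the $+1$ eigenspace of $\Gamma$, which coincides with the closure of $\A_b\Omega$ and carries the vacuum representation of $\A_b$.

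Next, the restrictions $U_j|_{\H_b}$ are projective unitary representations of $\Diff^{(2)}(S^1)$ that implement the diffeomorphism covariance of the local conformal net $\A_b$ in its vacuum representation; by Proposition \ref{vss}, $U_j(\mathrm{rot}(2\pi))=\Gamma$ acts as $+1$ on $\H_b$, so these restrictions descend to projective representations of $\Diff(S^1)$. Both extend the same M\"obius representation of $\A_b$ on $\H_b$, so \cite{CW} applies and yields a scalar phase function $\chi(g)$ such that $U_1(g)|_{\H_b}=\chi(g)\, U_2(g)|_{\H_b}$ for all $g\in\Diff^{(2)}(S^1)$.

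To propagate the equality to all of $\H$, observe that for $g\in\Diff_I^{(2)}(S^1)$ the operator $U_1(g)-\chi(g) U_2(g)$ belongs to $\A_b(I)$ and vanishes on $\H_b$, hence in particular on $\Omega$; since $\Omega$ is separating for $\A_b(I)$ by the Reeh--Schlieder theorem, this operator is zero on the whole of $\H$. As $\Diff^{(2)}(S^1)$ is generated by $\Mobb$ together with the subgroups $\Diff_I^{(2)}(S^1)$, and $U_1$, $U_2$ already agree on $\Mobb$ by hypothesis, we conclude that $U_1=U_2$ up to a projective phase on the full group. The only nontrivial input is the Carpi--Weiner uniqueness theorem applied to $\A_b$; the remaining passage between $\A$ and its Bose subnet is straightforward once the preceding lemma has placed $U(g)$ inside $\A_b(I)$.
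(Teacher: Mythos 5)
Your proof is correct and follows essentially the same route as the paper: restrict to the Bose subspace $\overline{\A_b\Omega}$ where the representation descends to the unique (by Carpi--Weiner) diffeomorphism representation of $\A_b$, then use that $U(g)\in\A_b(I)$ for localised $g$ together with the separating property of $\Omega$ to propagate the equality to all of $\H$, and finally invoke generation of $\Diff^{(2)}(S^1)$ by the localised subgroups. The paper's version is only slightly more compact (it cites Prop.~\ref{locdiff} directly, without needing $\Mobb$ as an extra generator), but the substance is identical.
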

\proof
With $E = (1 +\Ga)/2$ the orthogonal projection onto $\overline{\A_b \Omega}$, clearly $U|_{E\H}$ is the projective unitary representation of $\Diff(S^1)$ associated with $\A_b$, unique by \cite{CW,W2}. As $\Omega$ is separating for the local algebras, $U(g)E$ determines $U(g)$ if 
$g\in\Diff_I^{(2)}(S^1)$ for any interval $I\in\I$. By Prop. \ref{locdiff}, as $I$ varies, $\Diff_I(S^1)$ algebraically generates all $\Diff(S^1)$, and $U$ is so determined up to a phase.
\endproof
\subsection{DHR representations}
There is a natural notion of representation for a Fermi net which is the 
straightforward extension of the notion of representation for a local net, see \eqref{rep} here below. We shall see in later sections that is important to consider also more general representations when dealing with a Fermi net. In this section, however, we deal with the most obvious notion.

In the following we assume that $\A$ be a Fermi conformal net, although certain notions and results are obviously valid in the M\"obius covariant case.

A \emph{representation} $\l$ of $\A$ is a map $I\to\l_I$ that 
associates to an interval $I$ of $S^1$ a normal\footnote{The normality of $\l_I$ is automatic if $\H_\l$ is separable because $\A(I)$ is a type $III$ factor.}
representation $\l_I$ of $\A(I)$ on a fixed Hilbert space $\H_{\l}$ such that
\begin{equation}
\label{rep}
\l_{\tilde I}|_{\A(I)} = \l_I,\quad I\subset \tilde I\ .
\end{equation}
We shall say that a representation $\l$ on $\H_{\l}$ 
is {\em diffeomorphism covariant} 
if there exists a projective unitary representation $U_\l$ of the universal cover 
$\Diff^{(\infty)}(S^1)$ of $\Diff(S^1)$ on $\H_{\l}$ such that
\[
\l_{\dot{g}I}\big(U(g)xU(g)^*\big) = U_\l(g)\l_I(x)U_\l(g)^*\ ,\  x\in\A(I) , 
\forall g\in \Diff^{(\infty)}(S^1)\ .
\]
Here $\dot g$ denotes the image of $g$ in $\Diff(S^1)$ under the quotient map.
A M\"obius covariant representation is analogously defined.

As we shall later deal with more general representations, we may sometime emphasise that we are considering a representation $\l$ as above, by saying that $\l$ is a DHR representation, the `DHR' being however pleonastic.

We shall say that a representation $\l$ on $\H_{\l}$ is \emph{graded}
if there exists a unitary $\Ga_{\l}$ on $\H_{\l}$ such that
\[
\l_I(\g(x)) = \Ga_{\l}\l_I(x)\Ga^*_{\l},\quad x\in\A(I) ,
\]
for all $I\in\I$. As $\g$ is involutive one may then also choose a selfadjoint $\Ga_\l$.
\begin{proposition}\label{DHRgraded}
Let $\l$ be an irreducible DHR representation of the Fermi conformal net $\A$. Then $\l$ is graded and diffeomorphism covariant with positive energy. Moreover we may take $\Ga_\l = U_\l(2\pi)$.
\end{proposition}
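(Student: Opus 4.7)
My plan is to establish diffeomorphism covariance with positive energy first, then read off the grading as $U_\lambda(2\pi)$ via Proposition \ref{vss}.

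For covariance, I would lift the automatic covariance of irreducible DHR representations of local conformal nets (Weiner's theorem, extending D'Antoni--Fredenhagen--K\"oster) from the Bose subnet $\A_b$ to $\A$. The restriction $\lambda|_{\A_b}$ is a DHR representation of the diffeomorphism covariant local net $\A_b$ and decomposes as a direct sum $\rho \oplus \rho'$ of irreducible components acting on $\H_\lambda = \H_\rho \oplus \H_{\rho'}$; by the local theory each component is projectively diffeomorphism covariant with positive energy, yielding projective unitary representations $U_\rho,U_{\rho'}$ of $\Diff^{(\infty)}(S^1)$. Setting $U_\lambda := U_\rho \oplus U_{\rho'}$ gives a positive-energy projective representation on $\H_\lambda$.

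The main verification is that $U_\lambda$ implements covariance not merely on $\A_b$ but on all of $\A$. Bose elements are handled by construction. For a Fermi $x \in \A(I)$, I would pick a Fermi unitary $v \in \A(J)$ with $J \subset I$ (such $v$ exists because $\A(J)$ is a type~III factor strictly larger than $\A_b(J)$); then $xv^{*} \in \A_b(I)$, so the covariance identity for $x$ reduces by multiplicativity to the known Bose identity for $xv^{*}$ together with a single direct check for $v$. The latter can be carried out via the Bisognano-Wichmann theorem in the representation $\lambda$: the modular operator of $\lambda_J(\A(J))''$ with respect to a suitable cyclic vector gives a BGL-style construction of $U_\lambda(\Lambda_J)$ which, thanks to twisted Haag duality, correctly implements $\Lambda_J$ on the Fermi part of $\A(J)$ as well. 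Diffeomorphism covariance beyond the M\"obius subgroup then follows because, by the lemma preceding this proposition, local diffeomorphism implementers $U(g)$ with $g \in \Diff^{(2)}_I(S^1)$ already lie in $\A_b(I)$, so their covariance is granted by the $\A_b$ step.

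With $U_\lambda$ in hand, the grading is immediate. Since $\mathrm{rot}^{(\infty)}(2\pi)$ projects to the identity in $\Diff(S^1)$, the covariance relation specialises to
\[
U_\lambda(2\pi)\,\lambda_I(x)\,U_\lambda(2\pi)^{*} \;=\; \lambda_I\!\left(U(2\pi)\,x\,U(2\pi)^{*}\right) \;=\; \lambda_I(\gamma(x))
\]
for all $x \in \A(I)$, the last equality by Proposition \ref{vss}. Hence $\Gamma_\lambda := U_\lambda(2\pi)$ is a grading unitary for $\lambda$. The principal obstacle is the Fermi part of the covariance lift: the Bose case is granted by the local theory, but the Fermi step demands either the multiplicative reduction sketched above or a self-contained BGL-type construction of $U_\lambda$ from the modular data of $\lambda$, in which twisted Haag duality plays the role that ordinary locality plays for local nets.
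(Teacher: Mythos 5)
Your overall plan (establish covariance first, then set $\Ga_\l = U_\l(2\pi)$ via Proposition \ref{vss}) agrees with the paper's, and that final step is fine. But the covariance construction has a genuine gap. To begin with, your starting point --- that $\l|_{\A_b}$ decomposes as $\rho\oplus\rho'$ into exactly two irreducible components --- is not available here: in the paper this decomposition is the content of Proposition \ref{n1}, where it is shown to be \emph{equivalent} to $\l$ being graded, which is exactly what you are trying to prove; using it as an input is circular. More seriously, the step you yourself flag as the main verification --- that $U_\l := U_\rho\oplus U_{\rho'}$ implements covariance on Fermi elements --- is never actually carried out. A Fermi element interchanges the summands $\H_\rho$ and $\H_{\rho'}$, so the covariance identity for Fermi elements is sensitive to the \emph{relative phase} between $U_\rho(g)$ and $U_{\rho'}(g)$, each of which is only defined projectively; nothing in your construction fixes this phase coherently in $g$. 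Your proposed remedy, a BGL-type construction from ``the modular operator of $\l_J(\A(J))''$ with respect to a suitable cyclic vector,'' does not work as stated: $\H_\l$ has no vacuum vector, the modular group of a local algebra with respect to an arbitrary cyclic separating vector has no geometric action, and the Bisognano--Wichmann property in a non-vacuum representation is not available before covariance is established.

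The paper avoids both problems with a different device. For $g\in\Diff^{(2)}_I(S^1)$ the implementer $U(g)$ already lies in $\A_b(I)\subset\A(\tilde I)$ (by the Lemma preceding the Proposition), so $U_\l(g):=\l_I(U(g))$ implements $g$ on \emph{all} of $\A(\tilde I)$, Bose and Fermi alike, simply because $\l_{\tilde I}$ is a homomorphism; since the localized diffeomorphism groups generate $\Diff^{(2)}(S^1)$, every diffeomorphism is implemented. Continuity and positivity of the energy are then obtained by comparison with the covariance representation $U_{\l_b}$ of the local net $\A_b$ (automatic by \cite{DFK,W}): for $g$ localized in $I$ one has $U_\l(g)U_{\l_b}(g)^*\in\l_I\big(\A_b(I)'\cap\A(I)\big)=\mathbb C$ by Lemma \ref{outer}, so one may replace $U_\l$ by $U_{\l_b}$. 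Note that this relative-commutant argument is precisely the ingredient you would need in your route to control the relative phase between $U_\rho$ and $U_{\rho'}$, so the repair essentially collapses your proof into the paper's.
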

\proof
We first use an argument in \cite{KL1}. Let $I\in\I$ and $g\in\Diff_I^{(2)}(S^1)$. For every $x\in\A(\tilde I)$ with $\tilde I\supset I$ we have $\l_I(U(g))\l_{\tilde I}(x)\l_I(U(g))^* = \l_{\tilde I}(U(g)xU(g))^* = \l_{gI}(U(g)xU(g))^*$. As the group generated by $\Diff_I^{(2)}(S^1)$ as $I$ varies in $\I$ is the entire $\Diff^{(2)}(S^1)$, we see every diffeomorphism is implemented in the representation $\l$ by a unitary $U_\l(g)$, which is equal to $\l_I(U(g))$ if $g$ is localised in $I$.

It remains to show that, by multiplying $U_\l(g)$ by a phase factor, we can get a continuous projective representation with positive energy.
In the local case, the automatic diffeomorphism covariance is proved in \cite{DFK} and the automatic positivity of the energy in \cite{W}. Let $U_{\l_b}$ be the covariance unitary representation associated with $\l_b\equiv \l |_{\A_b}$. If $g$ is localised in $I$ then $U_\l(g)U_{\l_b}(g)^*\in\l_I\big(\A_b(I)'\cap\A(I)\big)=\mathbb C$ (cf. Lemma \ref{outer}), so we are done by replacing $U_\l$ with $U_{\l_b}$.

Finally notice that, by diffeomorphism covariance, it follows that 
\[
U_\l(2\pi)\l_I(x)U_\l(2\pi)^* = \l_I(\g(x))
\]
due to Prop. \ref{vss}. So we may take $\Ga_\l = U_\l(2\pi)$.
\endproof
A \emph{localised endomorphism} of $\A$ is an endomorphism $\r$ of the universal $C^*$-algebra $C^*(\A)$ such that $\r|_{\A(I')}$ is the identity for some $I\in\I$ (then we say that $\r$ is localised in the interval $I$). In other words, $\r$ is a representation such that $\r_{I'}= \iota $ and $\r_{\tilde I}$ maps $\A(\tilde I)$ into itself if $\tilde I\supset I$. This last property is automatic by Haag duality in the local case, and we now see to hold also in the Fermi case.

Next lemma shows that the grading is locally outer. This applies indeed to every gauge automorphism, 
see \cite{Car99,X01}.
\begin{lemma}\label{outer}
Given any interval $I$, $\g|_{\A(I)}$ is an outer automorphism (unless the grading is trivial). As a consequence $\A_b(I)'\cap\A(I)=\mathbb C$.
\end{lemma}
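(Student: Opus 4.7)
The plan is to argue by contradiction for outerness, reducing the contradiction to the factoriality of $\A_b(I)$, and then deduce the triviality of the relative commutant from outerness and factoriality.

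Suppose $\g|_{\A(I)} = \Ad u$ for a unitary $u \in \A(I)$. Since $\A(I)$ is a type III factor (Prop.~\ref{irr}) and $\g^2 = \id$, the element $u^2$ is scalar, so after a phase rescaling we may take $u^2 = 1$, hence $u = 2e - 1$ for a projection $e$. The computation $\g(u) = uuu^{*} = u$ shows $u \in \A_b(I)$, and since every $a \in \A_b(I) = \A(I)^\g$ commutes with $u$, the projection $e$ is central in $\A_b(I)$. When the grading is non-trivial, $\g|_{\A(I)}$ is also non-trivial: $\g$ commutes with $U$, so if $\g|_{\A(I_0)} = \id$ for some $I_0$, then by M\"obius transitivity $\g|_{\A(J)} = \id$ for every $J \in \I$, and by additivity $\g = \id$ on $\bigvee_I \A(I) = B(\H)$. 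Therefore $e \ne 0, 1$, producing a non-trivial central projection in $\A_b(I)$.

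The key step is to establish factoriality of $\A_b(I)$ and thereby reach a contradiction. Restricting $\A_b$ to the $+1$-eigenspace $\H_+$ of $\Ga$ yields a M\"obius covariant local net: isotony is inherited; $U$ preserves $\H_+$ because $\Ga$ commutes with $U$; positive energy and uniqueness of the $U$-invariant vector $\Omega$ pass to the subspace; and cyclicity on $\H_+$ follows from Theorem~\ref{Reeh-Schlieder} applied to $\A$, since the orthogonal projection of $x\Omega$ onto $\H_+$ equals $x_0 \Omega$ with $x_0$ the Bose component of $x$. Proposition~\ref{irr} then yields that $\A_b(I)|_{\H_+}$ is a type III factor, and since $\Omega \in \H_+$ is separating for $\A(I) \supset \A_b(I)$, this representation of $\A_b(I)$ is faithful. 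Hence $\A_b(I)$ is a factor as an abstract von Neumann algebra, contradicting the existence of the non-trivial central projection $e$ found above.

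For the relative commutant statement, take $x \in \A_b(I)' \cap \A(I)$ and decompose $x = x_0 + x_1$ into Bose and Fermi parts; each summand lies in $\A_b(I)'$. Factoriality gives $x_0 \in \A_b(I) \cap \A_b(I)' = \mathbb{C}$. Assume $x_1 \ne 0$: then $x_1^{*}x_1 \in \A_b(I) \cap \A_b(I)' = \mathbb{C}$, so after rescaling $x_1$ is a unitary; likewise $x_1^2 \in \mathbb{C}$ is a phase, and one more scalar rescaling produces $x_1^2 = 1$, hence $x_1 = x_1^{*}$. For any Fermi $f \in \A(I)$, the product $x_1 f$ is Bose, so $x_1$ commutes with $x_1 f$, giving $x_1 f x_1 = x_1(x_1 f) = f$; together with $\Ad x_1 = \id$ on $\A_b(I)$, this makes $\Ad x_1 = \id$ on the whole factor $\A(I)$, so $x_1$ is scalar. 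But scalars are Bose, contradicting $x_1 \in \A_f(I) \setminus \{0\}$. Hence $x = x_0 \in \mathbb{C}$. The main obstacle is the factoriality of $\A_b(I)$; once this is in hand, both parts follow by short direct manipulations of the Bose--Fermi decomposition.
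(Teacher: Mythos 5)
Your proof is correct, but it follows a genuinely different route from the paper's. The paper argues via modular theory: if $\g|_{\A(I)}=\Ad u$ with a unitary $u\in\A(I)$, then $\Ga=u'u$ with $u'\in\A(I)'$, and this factorization is unique up to a phase because $\A(I)$ is a factor; since $\Ga$ commutes with the M\"obius representation and hence with the modular group $\Delta_I^{it}$ of $(\A(I),\Omega)$ by Bisognano--Wichmann, $u\Omega$ is an eigenvector of $\Delta_I$, and the absence of non-zero eigenvalues of $\log\Delta_I$ (quoted from \cite{DLR}) forces $u\Omega\in\mathbb C\Omega$, so $u$ is a scalar by the separating property of $\Omega$. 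You instead reduce everything to the factoriality of $\A_b(I)$, which you obtain by restricting the Bose subnet to $\H_+=\overline{\A_b(I)\Omega}$ and applying Prop.~\ref{irr} to the resulting local M\"obius covariant net, the restriction being faithful because $\Omega$ is separating; the inner implementing unitary is then a non-trivial selfadjoint element of the center of $\A_b(I)$, a contradiction. Both arguments are sound. The paper's is shorter but leans on the spectral property of $\log\Delta_I$; yours stays within results already stated in the paper (Prop.~\ref{irr}, Theorem~\ref{Reeh-Schlieder}, $\Ga\Omega=\Omega$ and the commutation of $\Ga$ with $U$), establishes along the way the useful facts that $\A_b(I)$ is a factor and that $\Omega$ is cyclic for $\A_b(I)$ on $\H_+$ (facts the paper invokes elsewhere, e.g.\ in the uniqueness of the diffeomorphism representation), and, unlike the paper, which leaves the relative commutant claim implicit as the standard consequence of an outer $\mathbb Z_2$-action on a factor, spells out the proof that $\A_b(I)'\cap\A(I)=\mathbb C$. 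One small point worth keeping visible: your step showing that $\g|_{\A(I)}\neq\id$ whenever $\g\neq\id$ (via M\"obius transitivity and irreducibility) is also tacitly needed to pass from ``$u$ is a scalar'' to ``the grading is trivial'' in the paper's version.
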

\proof
Suppose $\g|_{\A(I)}$ is inner; then there exist unitaries $u\in\A(I)$ and $u'\in\A(I)'$ such that $\Ga =  u'u$. In fact $u$ and $u'$ are unique up to a phase factor because $\A(I)$ is a factor. Now $\Ga$ commutes with the M\"obius unitary group, so $\Delta_I^{it}u\Omega = e^{iat}u\Omega$ for some $a\in\mathbb R$. But $\log\Delta_I$ has no non-zero eigenvalue (see \cite{DLR}), so $u\Omega \in\mathbb C\Omega$, thus $u$ is a scalar $\g$ is the identity.
\endproof
The following proposition generalizes to the Fermi case the well known DHR argument for the correspondence between representations and localised endomorphisms on the Minkowski space. 
\begin{proposition}\label{dhrendo}
Let $\pi$ be a representation of the Fermi conformal net $\A$ on $S^1$, and suppose the Hilbert spaces $\H$ and $\H_\pi$ to be separable. Given an interval $I$, there exists a localised endomorphism of $\A$ unitarily equivalent to $\pi$.
\end{proposition}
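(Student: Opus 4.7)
The plan is to adapt the standard DHR argument to the graded setting, with twisted Haag duality (Corollary \ref{td}) playing the role that ordinary Haag duality plays in the local case.

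First, I would spatially implement $\pi$ over the complement of $I$. By Proposition \ref{irr}, $\A(I')$ is a type III factor; since $\H$ and $\H_\pi$ are separable, any two normal faithful representations of $\A(I')$ are unitarily equivalent. Applied to $\pi_{I'}$ and the identity representation of $\A(I')$ on $\H$, this furnishes a unitary $V\colon \H\to\H_\pi$ with $\pi_{I'}(x)=VxV^*$ for every $x\in\A(I')$. Setting $\rho_J(x):=V^{*}\pi_J(x)V$ for each interval $J$ and $x\in\A(J)$ defines a net representation of $\A$ on $\H$, unitarily equivalent to $\pi$ via $V$, and by isotony compatibility $\rho_J=\iota$ for every subinterval $J\subset I'$.

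The remaining task is to show that $\rho$ lifts to an endomorphism of the universal $C^{*}$-algebra $C^{*}(\A)$; equivalently, that $\rho_{\tilde I}\bigl(\A(\tilde I)\bigr)\subset\A(\tilde I)$ for every interval $\tilde I\supset\bar I$. Fix $x\in\A(\tilde I)$. For any subinterval $J\subset\tilde I'$ small enough that $\tilde I\cup J$ sits inside a proper arc $\hat I$, graded locality in $\A(\hat I)$ gives $[x,y]=0$ for $y\in\A(J)$; applying the homomorphism $\rho_{\hat I}$, using $\rho_{\hat I}(y)=y$ since $J\subset I'$, and restricting to $\A(\tilde I)$ yields $[\rho_{\tilde I}(x),y]=0$. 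By additivity, this graded commutation extends to all $y\in\A(\tilde I')$, and twisted Haag duality in the form of Corollary \ref{td} then forces $\rho_{\tilde I}(x)\in\A(\tilde I)$. Hence $\rho$ is a localised endomorphism in the sense of the definition preceding the proposition.

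The main substantive obstacle is precisely this last step: verifying that the desired stability under $\rho_{\tilde I}$ follows from graded commutation rather than ordinary commutation. The technical subtlety is that on $S^{1}$ one cannot in general find a single proper arc containing both $\tilde I$ and $\tilde I'$, so the graded commutator identity must be established locally for small $J\subset\tilde I'$ and then propagated by additivity; once in place, the characterization of $\A(\tilde I)$ as the graded relative commutant of $\A(\tilde I')$ (the dual form of Corollary \ref{td}) closes the argument and reduces the Fermi case to a straightforward transcription of the classical reasoning.
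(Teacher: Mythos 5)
Your overall architecture (spatially trivialize $\pi$ over $I'$, then use graded commutation plus Corollary \ref{td} to localize) matches the paper's, but there is a genuine gap at the step you yourself flag as the "main substantive obstacle", and it is not closed by your argument. Corollary \ref{td} characterizes $\A(\tilde I)$ via the graded commutator on $B(\H)$ taken with respect to $\Ad\Gamma$, where $\Gamma$ is the \emph{vacuum} grading unitary. To deduce $[\rho_{\tilde I}(x),y]=0$ in that sense from $[x,y]=0$ by applying the homomorphism $\rho_{\hat I}$, you need $\rho_{\tilde I}(x)$ to be homogeneous of the same degree as $x$ with respect to $\Ad\Gamma$, i.e.\ you need $\rho\cdot\gamma=\Ad\Gamma\cdot\rho$. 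Your unitary $V$ is only required to intertwine $\pi_{I'}$ with the identity representation of $\A(I')$; it need not carry $\Gamma_\pi$ to $\Gamma$. One only gets that $\Gamma^*V^*\Gamma_\pi V$ lies in $\A(I')'=Z\A(I)Z^*$, not that it is a scalar, so $\rho=V^*\pi(\cdot)V$ may fail to preserve degrees relative to $\Ad\Gamma$, and the identity $\rho_{\hat I}\bigl([x,y]\bigr)=[\rho_{\hat I}(x),\rho_{\hat I}(y)]$ (graded commutators in the sense of Corollary \ref{td}) is unjustified. Without it the localization step collapses.

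The paper closes exactly this hole by implementing spatially not the representation of $\A(I')$ alone but of the larger algebra $\langle\A(I'),\Gamma\rangle$: since $\gamma|_{\A(I')}$ is outer (Lemma \ref{outer}) and $\pi$ is graded, both $\langle\A(I'),\Gamma\rangle$ and $\langle\pi_{I'}(\A(I')),\Gamma_\pi\rangle$ are canonically the crossed product $\A(I')\rtimes_\gamma\mathbb Z_2$, a type III factor, so the isomorphism $x+y\Gamma\mapsto\pi_{I'}(x)+\pi_{I'}(y)\Gamma_\pi$ is spatial; the implementing unitary then satisfies $U^*\Gamma_\pi=\Gamma U^*$, which yields $\rho\cdot\gamma=\Ad\Gamma\cdot\rho$ and makes the graded-commutator computation legitimate. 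Note also that your argument never invokes the gradedness of $\pi$ (available from Proposition \ref{DHRgraded} plus decomposition into irreducibles), which is a signal that the Fermi-specific difficulty has been bypassed rather than resolved. To repair your proof you must either choose $V$ so that it also intertwines the grading unitaries (which is precisely the crossed-product argument) or otherwise establish $\rho\cdot\gamma=\Ad\Gamma\cdot\rho$ before applying Corollary \ref{td}.
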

\proof
Let $\Ga$ be the unitary, $\Omega$-fixing implementation of $\g$. As $\g|_{\A(I')}$ is outer, the algebra $\langle \A(I'),\Ga\rangle$ generated by $\A(I')$ and $\Ga$ is the von Neumann algebra crossed product of $\A(I')$ by $\g|_{\A(I')}$.

Now $\pi$ is graded and normal so, by choosing $\Ga_\pi$ selfadjoint, also the algebra $\langle \pi_{I'}(\A(I')),\Ga_\pi\rangle$ generated by $\pi_{I'}(\A(I'))$ and $\Ga_\pi$ is naturally isomorphic to the von Neumann algebra crossed product of $\A(I')$ by $\g|_{\A(I')}$.

So there exists an isomorphism $\Phi:\langle \A(I'),\Ga\rangle\to\langle \pi_{I'}(\A(I')),\Ga_\pi\rangle$ such that $\Phi|_{\A(I')} =\pi_{I'}$ and $\Phi(\Ga)=\Ga_\pi$, namely
\[
\Phi: x + y\Ga \mapsto \pi_{I'}(x) + \pi_{I'}(y)\Ga_\pi\ ,\quad x,y\in\A(I')\ .
\]
As $\A(I')$ is a type III factor, also $\langle \A(I'),\Ga\rangle$ is a type III factor. Thus $\Phi$ is spatial and there exists a unitary $U:\H\to\H_\pi$ such that $\Phi(x) = U x U^*$ for all $x\in\A(I')$.

Set $\rho \equiv U^*\pi(\cdot) U$. Clearly $\rho$ is a representation of $\A$ on $\H$ that is unitarily equivalent to $\pi$ and such that $\rho_{I'}$ acts identically on $\A(I')$. 

Moreover $\rho\cdot\g = \Ad\Ga\cdot \rho$. Indeed, since $ U^*\Ga_\pi= \Ga U^*$, we have
\[
\rho\cdot\g = \Ad U^*\cdot\pi\cdot\Ad\Ga 
=\Ad U^*\Ga_\pi\cdot\pi  = \Ad\Ga U^*\cdot\pi =\Ad\Ga\cdot\rho \ .
\]
With $x\in\A(I)$, we now want to show that $\r_I(x)\in\A(I)$. Indeed for all $y\in\A(I_0)$ with $\bar I_0\subset I'$ we have $[x,y]=0$, where the brackets denote the graded commutator. Therefore, choosing an interval $\tilde I\supset I'\cup I_0$, we have
\begin{equation*}
[\r_I(x),y] = [\r_{\tilde I}(x),\r_{\tilde I}(y)] = \r_{\tilde I}([x,y]) = 0\ .
\end{equation*}
It then follows by Cor. \ref{td} that $\r_I(x)\in\A(I)$.
\endproof
\subsection{$\s$-Bose and $\s$-Fermi sectors of the Bosonic subnet}
\label{sfb}
As above, let $\g$ be the vacuum preserving, involutive grading automorphism of the Fermi net
$\A$ on $S^1$ as above and $\A_b$ the fixed-point subnet. 
We denote by $\s$ a representative of the sector of $\A_b$ dual to $\g$. Choosing 
an interval $I_0\subset\mathbb R$, there is a unitary 
\[
v\in\A(I),\: v^*=v,\: \g(v)= -v \ .
\]
Then  we may take  $\s\equiv\Ad v |_{\A_b}$, so $\s$ is an automorphism of $\A_b$ localised in 
$I_0$. We have $d(\s)=1$ and $\s^2 = 1$.

Given DHR endomorphisms $\mu$ and $\nu$ of $\A_b$ 
we denote by $\e(\mu,\nu)$ the right (clockwise) statistics operator (see \cite{R1,GL2}).

The \emph{monodromy operator}  is given by
\[
m(\mu,\nu)\equiv\e(\mu,\nu)\e(\nu,\mu).
\]
Note that if $\mu$ is localised left to $\nu$, then $\e(\nu,\mu)=1$ 
and thus $m(\mu,\nu)=\e(\mu,\nu)$.

We shall also set 
\[
\k(\mu,\nu)\equiv \Phi_{\nu}(m(\nu,\mu)^*)=\Phi_{\nu}(\e(\mu,\nu)^*\e(\nu,\mu)^*)
\]
where $\Phi_{\nu}$ is the left inverse of $\nu$. 
As $\e(\mu,\nu)\in\Hom(\mu\nu,\nu\mu)$, we have $m(\mu,\nu)\in \Hom(\nu\mu,\nu\mu)$, 
therefore $\k(\mu,\nu)\in \Hom(\mu,\mu)$ and so, if $\mu$ is irreducible, 
$\k(\mu,\nu)$ is a complex number with modulus $\leq 1$.
 
Let $\mu$ be an irreducible endomorphism localised left to $\s$. 
As $\e(\mu,\s)\in\Hom(\mu\s,\s\mu)$ and $\s$ and $\mu$ commute, it 
follows that $\e(\mu,\s)$ is scalar. Denoting by $\iota$ the identity 
sector, by the braiding fusion relation we have
\[
1=\e(\mu,\iota)=\e(\mu,\s^2)
=\s(\e(\mu,\s))\e(\mu,\s)=\e(\mu,\s)\e(\mu,\s)\ ,
\]
thus $m(\mu,\s)=\e(\mu,\s)=\pm 1$.

If $\mu$ is not necessarily irreducible, we shall say that $\mu$ is 
\emph{$\s$-Bose} 
if $m(\mu,\s)= 1$ and that $\mu$ is \emph{$\s$-Fermi} 
if $m(\mu,\s)= -1$. As we have seen, if $\mu$ is irreducible then 
$\mu$ is either $\s$-Bose or $\s$-Fermi. With $S$ Rehren matrix \eqref{matS} we have:
\begin{proposition}\label{DS}
Let $\r,\nu$ be irreducible, localized endomorphisms of $\A_b$ and $\r'\equiv \r\s$. We have
$S_{\r',\nu} =\pm S_{\r,\nu}$ according with $\r$ is $\s$-Bose or 
$\s$-Fermi. 
\end{proposition}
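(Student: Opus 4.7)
The strategy is to use the Rehren formula (invoked earlier as \eqref{matS}) expressing $S_{\mu,\nu}$ as a multiple of $d(\mu)d(\nu)\kappa(\mu,\nu)$. Since $d(\sigma)=1$ we have $d(\rho')=d(\rho)$, so the proof reduces to comparing the scalars $\kappa(\rho',\nu)$ and $\kappa(\rho,\nu)$. By definition $\kappa(\rho',\nu)=\Phi_\nu(m(\nu,\rho\sigma)^*)$, so the whole task is to compute $m(\nu,\rho\sigma)$ in terms of $m(\nu,\rho)$ and the $\sigma$-monodromy.

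I would apply the braiding-fusion identities
\[
\varepsilon(\nu,\rho\sigma)=\rho(\varepsilon(\nu,\sigma))\,\varepsilon(\nu,\rho),\qquad
\varepsilon(\rho\sigma,\nu)=\varepsilon(\rho,\nu)\,\rho(\varepsilon(\sigma,\nu)),
\]
whose product telescopes to
\[
m(\nu,\rho\sigma)=\rho(\varepsilon(\nu,\sigma))\,m(\nu,\rho)\,\rho(\varepsilon(\sigma,\nu)).
\]
Now I would choose representatives with $\sigma$ and $\nu$ localised in disjoint intervals, so that $\sigma\nu=\nu\sigma$ as endomorphisms. Because $\nu$ is irreducible, $\sigma\nu$ is irreducible too, and $\mathrm{Hom}(\sigma\nu,\nu\sigma)$ is one-dimensional; hence $\varepsilon(\nu,\sigma)$ and $\varepsilon(\sigma,\nu)$ are scalars. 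Repeating the squaring argument used in the discussion just before Proposition \ref{DS} (applying $\varepsilon(\cdot,\sigma^2)=1$ together with $\sigma^2=\iota$) forces these scalars to be $\pm 1$, with product $m(\nu,\sigma)=m(\sigma,\nu)\in\{\pm 1\}$, namely the $\sigma$-Bose/$\sigma$-Fermi sign.

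Since $\rho$ fixes scalars, the formula collapses to
\[
m(\nu,\rho\sigma)=\varepsilon(\nu,\sigma)\,\varepsilon(\sigma,\nu)\,m(\nu,\rho)=m(\sigma,\nu)\,m(\nu,\rho),
\]
and after taking adjoints and applying the left inverse $\Phi_\nu$ I obtain
\[
\kappa(\rho',\nu)=m(\sigma,\nu)\,\kappa(\rho,\nu).
\]
Combining with $d(\rho')=d(\rho)$ and the Rehren formula yields $S_{\rho',\nu}=\pm S_{\rho,\nu}$, the sign being exactly the $\sigma$-Bose/$\sigma$-Fermi sign produced by the monodromy with $\sigma$.

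The only delicate step is the scalar-and-reality claim for $\varepsilon(\nu,\sigma)$ and $\varepsilon(\sigma,\nu)$: one must ensure that the chosen localisations realise the commutation $\sigma\nu=\nu\sigma$ as genuine equality of endomorphisms (not merely equivalence of sectors), after which the dimension-one Hom space and the paper's squaring trick do the rest. Once this is in place, the rest of the computation is a mechanical application of the braiding-fusion equations and linearity of the left inverse.
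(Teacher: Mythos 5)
Your argument is correct and is essentially the paper's own proof: both rest on Rehren's formula together with $d(\rho\sigma)=d(\rho)$, the braiding--fusion equations, and the fact that for disjointly localised representatives $\varepsilon(\nu,\sigma)$ and $\varepsilon(\sigma,\nu)$ are scalars squaring to $1$; the paper merely streamlines the computation by placing $\rho$, $\sigma$, $\nu$ one to the left of the next so that one of the two statistics operators is trivial and a single braiding--fusion identity suffices, whereas you compute the full monodromy $m(\nu,\rho\sigma)=m(\nu,\sigma)\,m(\nu,\rho)$. Note that the sign you obtain, $m(\sigma,\nu)=\pm1$, depends on the $\sigma$-Bose/$\sigma$-Fermi alternative for $\nu$, which agrees with the paper's own proof and with the displayed formula for $D_{\rho,\nu}$ immediately after the proposition (the attribution of the alternative to $\rho$ in the statement is a slip).
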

\proof
By definition
\[
S_{\r',\nu}=S_{\r\s,\nu}=
\frac{d(\r')d(\nu)}{\sqrt{\mu_\A}}\Phi_{\nu}(\e(\r\s,\nu)^*\e(\nu,\r\s)^*)
\]
and we have $d(\r')=d(\r)$. We may also assume that $\r$, $\s$ and $\nu$ are localised one left to the next. We have
\[
m(\r',\nu)=\e(\r\s,\nu)=\r(\e(\s,\nu))\e(\r,\nu) =\pm \e(\r,\nu) = \pm m(\r,\nu)
\]
as $\e(\s,\nu)=m(\s,\nu)=\pm 1$; thus $S_{\r',\nu} =\pm S_{\r,\nu}$ where the sign depends on the $\s$-Bose/$\s$-Fermi alternative for $\nu$.
\endproof
Thus
\[
D_{\r,\nu} \equiv S_{\r,\nu} - S_{\r',\nu}= 
2\frac{d(\r)}{\sqrt{\mu_{\A_b}}}\cdot\begin{cases} 0\qquad &\text{if $\nu$ is $\s$-Bose}\\
K(\r,\nu)d(\nu)\qquad &\text{if $\nu$ is $\s$-Fermi}
\end{cases}
\]
\subsection{Graded tensor product} 
We briefly recall the notion of graded tensor product of Fermi nets. With $\A$ a Fermi net on $S^1$, denote by $\A_f(I)$ the Fermi (i.e. degree one) subspace of $\A(I)$. If $v\in\A_f(I)$ is a selfadjoint unitary then $\A(I)= \langle\A_b(I),v\rangle$ is the crossed product $\A_b(I)\rtimes \mathbb Z_2$ with respect to the action $\s =\Ad v$ on $\A_b(I)$. If $W\in\A(I)'$ is a selfadjoint unitary, $W v$ also implements $\s$ on $\A_b(I)$ and the von Neumann algebra $\langle\A_b(I),W v\rangle$ is also isomorphic to $\A_b(I)\rtimes \mathbb Z_2$, namely we have an isomorphism
\[
a\to a,\quad f\to W f, \qquad \partial a = 0, \partial f = 1\ .
\]
Let now, for $i=1,2$, $\A_i$ be a Fermi net on $S^1$ on the Hilbert 
space $\H_i$, and let $\Gamma_i$ be the associated grading unitary. 
Given an interval $I\in\I$, define the von Neumann algebra on $\H_1\otimes\H_2$
\[
\A_1(I)\hat\otimes\A_2(I)\equiv
\{a_1\otimes a_2,\  f_1\otimes 1,\  \Gamma_1\otimes f_2\}''
\]
where $a_i,f_i\in\A_i(I)$, $\partial a_i =0$ and $\partial f_i =1$. Namely $\A_1(I)\hat\otimes\A_2(I) $ is the direct sum
\[
\underbrace{\A_{1b}(I)\otimes\A_{2b}(I) + \Gamma_1\A_{1f}(I)\otimes\A_{2f}(I)}_{\rm Bosons}
+ \underbrace{\A_{1f}(I)\otimes \A_{2b}(I) + \Gamma_1 \A_{1b}(I)\otimes\A_{2f}(I)}_{\rm Fermions}
\]
Clearly the map $I\to \A_1(I)\hat\otimes\A_2(I)$ is isotone and satisfies graded
locality with respect to the grading induced by $\Gamma_1 \otimes \Gamma_2$. 
Thus it defines a Fermi net $\A_1 \hat\otimes \A_2$ on $\H_1\otimes \H_2$. 

By the previous comments, with $\A(I) = 1 
\otimes \A_2(I)$ and $W=\Gamma_1\otimes 1$, the von Neumann algebra $\hat\A_2(I)$ on 
$\H_1\otimes\H_2$ generated by $1\otimes\A_{2b}(I)$ and $\Gamma_1\otimes \A_{2f}(I)$ 
is isomorphic to $1\otimes\A_2(I)$ and hence to $\A_2(I)$. Actually, an easy 
calculation show that 
if $a, f \in \A_2(I)$, $\partial a = 0, \partial f = 1$, 
$Z\equiv \frac{1-i\Gamma_1\otimes\Gamma_2}{1-i}$ and 
 $Z_2\equiv \frac{1-i\Gamma_2}{1-i}$ then
$$(1\otimes Z_2)Z^*\big(1\otimes (a+f)\big)Z(1\otimes Z_2)^*= 1\otimes a + 
\Gamma_1\otimes f.$$
Hence the unitary operator $(1\otimes Z_2)Z^*$ on $\H_1\otimes\H_2$ implements 
the isomorphism of $1\otimes\A_2(I)$ onto $\hat\A_2(I)$ for every interval $I$. 
  
Obviously $\hat\A_1(I)\equiv \A_1(I)\otimes 1$ is isomorphic to $\A_1(I)$. Moreover 
we have 
\begin{equation}
\label{gradedtensor}
\A_1(I)\hat\otimes\A_2(I) = \hat\A_1(I)\vee\hat\A_2(I),\quad [\hat\A_1(I),\hat\A_2(I)] = 0\ ,
\end{equation}
where the brackets denote the graded commutator corresponding.

One can check that, up to isomorphism, the graded tensor product 
$\A_1(I)\hat\otimes\A_2(I)$ is the unique von Neumann algebra generated by copies
 $\hat\A_1(I)$ of $\A_1(I)$ and $\hat\A_2(I)$ of $\A_2(I)$, having 
a grading that restricts to the grading of $\hat\A_i(I)$, $i=1,2$,
satisfying the relations \eqref{gradedtensor}, and such that that $\A_{1b}(I)\vee\A_{2b}(I)$ is the usual tensor product of von Neumann algebras.

We can then define the graded tensor product of DHR representations. 
If $\lambda_1$ and $\lambda_2$ are DHR representations of $\A_1$ and $\A_2$ 
respectively and if $\lambda_1$ is graded, then it can be shown that there 
exists a (necessarily unique) representation $\lambda_1 \hat\otimes \lambda_2$ on 
$\H_{\lambda_1} \otimes \H_{\lambda_2}$ such that, for every interval 
$I\subset S^1$,  
\[
(\lambda_1 \hat\otimes \lambda_2)_I (x_1\otimes a_2+
\Gamma_1x_1 \otimes f_2) = {\lambda_1}_I(x_1)\otimes {\lambda_2}_I(a_2)
+\Gamma_{\lambda_1}{\lambda_1}_I(x_1)\otimes {\lambda_2}_I(f_2)
\]
where $x_1\in \A_1(I)$, $a_2,f_2\in\A_2(I)$, $\partial a_2 =0$ and $\partial 
f_2=1$. 
Analogously we can 
define the graded tensor product of general representations defined below, 
provided one of them is graded.
\section{Nets on $\mathbb R$ and on a cover of $S^1$}
Besides nets on $S^1$, it will be natural to consider nets on $\mathbb R$ and nets on topological covers of $S^1$. Indeed the two notions are related as we shall see.
\subsection{Nets on $\mathbb R$}
\label{netR}
Denote by $\I_{\mathbb R}$ the family of open intervals of $\mathbb R$, 
i.e. of open, non-empty, connected, bounded subsets of $\mathbb R$. 

The M\"obius group $\Mob$ can be naturally viewed as a subgroup of $\Diff(S^1)$. We then have a corresponding inclusion $\Mob^{(n)}\subset\Diff^{(n)}(S^1)$ of covering groups. In the following we denote by $G$ a group that can be either the M\"obius group $\Mob$  or the diffeomorphism group $\Diff(S^1)$. Analogously, we have $G^{(n)}=\Mob^{(n)}$ or
$G^{(n)}=\Diff^{(n)}(S^1)$.
By identifying $\mathbb R$ with $S^1\!\setminus\!\{-1\}$ via the stereographic 
map, we have a local action of $G$ on $\mathbb R$
(where $SL(2,\mathbb R)/\{1,-1\}\simeq\Mob$ acts on $\mathbb R$ by linear fractional transformations). See \cite{BGL,GL5} for the definition and  a discussion about local actions. 

A \emph{ $G$-covariant net on $\mathbb R$} (of von Neumann algebras) $\A$ is a isotone map 
\[
I\in\I_{\mathbb R}\mapsto\A(I)
\]
that associates to each $I\in\I_{\mathbb R}$ a von Neumann algebra $\A(I)$ on a fixed Hilbert space $\H$ and there exists a projective, positive energy, unitary representation $U$ of $G^{(\infty)}$ on 
$\H$ with
\[
U(g)\A(I)U(g)^{-1} = \A(\dot{g} I), \quad g\in\U_I,
\]
for every $I\in\I_{\mathbb R}$ where $\U_I$ is the connected component of the identity in $G^{(\infty)}$ of the open set $\{g\in G^{(\infty)}: gI\in\I_{\mathbb R}\}$.

We will further assume the irreducibility of $\A$ and the existence of a vacuum vector $\Omega$ for $U$, although this is not always necessary.

Note that it would be enough to require the existence of the projective unitary representation $U$ only in a neighbourhood of the identity of $G^{(\infty)}$, then $U$ would extend to all $G^{(\infty)}$ by multiplicativity.

Since the cohomology of the Lie algebra of $\Mob$ is trivial, 
by multiplying $U(g)$ by a phase factor (in a unique fashion), we may remove the 
projectiveness of $U|_{\Mob^{(\infty)}}$ and assume that the restriction of $U$ to $\Mob^{(\infty)}$
is a unitary representation of $\Mob^{(\infty)}$. 
\subsection{Nets on a cover of $S^1$}
\label{coverNets}
The group $G^{(n)}$ has a natural action on $\Sn$, $n$ finite or infinite, the one obtained by promoting the action of $G$ on $S^1$, see Sect. \ref{top}. Here $\Sn$ denotes the $n$-cover of $S^1$. Denote by $\I^{(n)}$ the family of intervals of $\Sn$, i.e. $I\in\I^{(n)}$ iff $I$ is a connected subset of $\Sn$ that projects onto a (proper) interval of the base $S^1$.

A \emph{$G$-covariant net $\A$ on $\Sn$} is a isotone map 
\[
I\in\I^{(n)}\mapsto\A(I)
\]
that associates with each $I\in\I^{(n)}$ a von Neumann algebra $\A(I)$ on a fixed Hilbert space $\H$, and there exists a projective
unitary, positive energy representation $U$ of $G^{(\infty)}$ on 
$\H$, implementing a  covariant action on $\A$, namely 
\[
U(g)\A(I)U(g)^{-1} = \A(\dot{g}I),\quad  I\in \I^{(n)},\ 
g\in G^{(\infty)}
\]
Here $\dot{g}$ denotes the image of $g$ in $G^{(n)}$ under the quotient map.

As above, we may also assume irreducibility of the net and the existence of a vacuum vector $\Omega$ for $U$.

Of course a $G$-covariant net $\tilde\A$ on $\Sn$ determines a $G$-covariant net $\tilde\A$ on $\mathbb R$. Indeed, with $p:\Si\to S^1$ the covering map, every connected component of 
$p^{-1}(S^{1}\!\setminus\!\{{\rm point}\})$
 is a copy of 
$\mathbb R$ in $\Sn$ and we may restrict $\tilde\A$ to any of this copy; by $G$-covariance we get always the same $G$-covariant net on $\mathbb R$, up to unitary equivalence.

Conversely, a $G$-covariant net $\A$ on $\mathbb R$ can be extended to a net
$\tilde\A$ on $\Si$ by defining, for any given $I\in\I$,
\begin{equation}\label{gext}
\tilde\A(I)\equiv U(g)\A(I_1)U(g)^{-1}
\end{equation}
where $I_1\in\I_{\mathbb R}$, $g\in G^{(\infty)}$, and $gI_1 = I$. Here 
the action of $G^{(\infty)}$ on $\Si$ is the one obtained by 
promoting the action of $G^{(\infty)}$ on $S^1$ (see above). 
Therefore we have:
\begin{proposition}\label{rtos}
There is a 1-1 correspondence (up to unitary equivalence) between $G$-covariant nets on $\mathbb R$ and $G$-covariant  nets on $\Si$
\end{proposition}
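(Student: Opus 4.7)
The plan is to show that the two operations — restricting $\tilde\A$ on $\Si$ to a chart $C\simeq\RR$, and extending $\A$ on $\RR$ via \eqref{gext} — are well-defined and mutually inverse. The restriction direction is straightforward: fix a connected component $C$ of $p^{-1}(S^{1}\!\setminus\!\{-1\})$, identify $C$ with $\RR$ stereographically, and set $\A(I)\equiv\tilde\A(I)$ for $I\in\I_{\RR}\subset C$ with the inherited covariance representation $U$. Different choices of $C$ are conjugated by deck transformations in $G^{(\infty)}$, giving unitarily equivalent nets.

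For the extension direction, the central issue is well-definedness of \eqref{gext}. Suppose $g_1 I_1=g_2 I_2=I$ with $I_j\in\I_{\RR}$ and $g_j\in G^{(\infty)}$; writing $h\equiv g_2^{-1}g_1$, one must show $U(h)\A(I_1)U(h)^{*}=\A(hI_1)$ with $hI_1=I_2\in\I_{\RR}$. By the covariance axiom for nets on $\RR$, this holds whenever $h\in\U_{I_1}$, the identity component of the open set $V_{I_1}\equiv\{g\in G^{(\infty)}:gI_1\in\I_{\RR}\}$. The crux is therefore to show that $V_{I_1}$ is already connected, so that $V_{I_1}=\U_{I_1}$.

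To this end, I would view the orbit map $\varphi:G^{(\infty)}\to\I^{(\infty)}$, $g\mapsto gI_1$, as a fibration with fibre the stabilizer $\mathrm{Stab}(I_1)$ of $I_1$ in $G^{(\infty)}$. A non-trivial deck transformation in $G^{(\infty)}$ shifts all of $\Si$ and cannot fix any bounded interval, so $\mathrm{Stab}(I_1)$ projects isomorphically onto the $G$-stabilizer of $p(I_1)\subset S^1$: the one-parameter dilation subgroup in the M\"obius case, and a connected (in fact contractible) subgroup in the diffeomorphism case. In either case the stabilizer is connected. Since $\I_{\RR}$ is plainly path-connected inside $\I^{(\infty)}$ (any two bounded open intervals of $\RR$ admit an obvious homotopy through such intervals), the preimage $V_{I_1}=\varphi^{-1}(\I_{\RR})$ of a connected open set under a fibration with connected fibre is itself connected, giving $V_{I_1}=\U_{I_1}$. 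This is the main technical obstacle; everything else is formal bookkeeping.

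Once well-definedness is in hand, the remaining verifications are routine. Isotony of $\tilde\A$ follows from \eqref{gext} by choosing a common $g$ and nested $I_1\subset J_1\in\I_{\RR}$. The global covariance $U(g')\tilde\A(I)U(g')^{*}=\tilde\A(g'I)$ follows by replacing $g$ with $g'g$ inside \eqref{gext}. Restricting $\tilde\A$ back to $C$ returns $\A$ by taking $g=e$. Together with the restriction step, this establishes the claimed 1-1 correspondence up to unitary equivalence.
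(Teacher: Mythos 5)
Your proposal is correct and follows essentially the same route as the paper: the whole content is the well-definedness of \eqref{gext}, reduced to showing that the relevant group element lies in the identity component $\U_{I_1}$. Where the paper disposes of this with the remark that $g$ and $g'$ have ``the same degree'' (i.e.\ the stabilizer of $I_1$ in $G^{(\infty)}$ is connected), you prove the slightly stronger and equally serviceable statement that the whole set $\{g: gI_1\in\I_{\mathbb R}\}$ is connected via the orbit fibration; this is a legitimate fleshing-out of the same idea rather than a different proof.
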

\proof
We only show that if $\A$ is a $G$-covariant net on $\mathbb R$, then formula \eqref{gext} well defines $\tilde\A(I)$. If $g'\in G^{(\infty)}$ also satisfies $g'I_1 = I$, then $g$ and $g'$ have ``the same degree", namely $h\equiv g^{-1}g'$ maps $I_1$ onto $I_1$ and is in $\U_{I_1}$. 
Then $U(h)\A(I_1)U(h)^{-1}= \A(I_1)$, so $U(g)\A(I_1)U(g)^{-1}= U(g')\A(I_1)U(g')^{-1}$. 
The rest is clear.
\endproof
With $\A$ a be a $G$-covariant net on $\mathbb R$ as above, we shall say that a unitary operator $V$ on $\H$ is a gauge unitary if 
\begin{equation}
\label{gauge}
V\A(I)V^* =\A(I), \ VU(g)=U(g)V,
\end{equation}
for all $I\in\I_{\mathbb R}$ and $g$ in a suitable neighbourhood of 
the identity of $\Mob$. Clearly the same relations \eqref{gauge} then 
hold true for all $I\in\Si$, $g\in\Mobi$ for the corresponding net 
on $\Si$. Now ${\rm rot}_{2\pi}$ is a 
central element of $G^{(\infty)}$ and it is immediate to check that if $n\in\mathbb 
N$: 
\[
U(2\pi)^n\ \text{is a gauge unitary}\Leftrightarrow\text{$\A$ extends to a 
$G$-covariant net on $\Sn$} ;
\]
then $\A$ is covariant with respect to the corresponding action of 
$G^{(\infty)}$ on $\Sn$. Clearly the net on $\Sn$ is the projection of the 
net on $\Si$ by the covering map from $\Si$ to $\Sn$. In other words we have the following where $n\in\mathbb N$:
\begin{corollary}
A $G$-covariant net $\A$ on $\mathbb R$ is the restriction of a $G$-covariant net on $\Sn$ if and only if $U(2n\pi)$ is a gauge unitary for $\A$. This is the case, in particular, if the representation $U$ of $G^{(\infty)}$ factors through a representation of $G^{(n)}$ (i.e. $U(2n\pi) = 1$).
\end{corollary}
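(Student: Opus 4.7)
The plan is to observe that the corollary follows immediately from the equivalence spelled out in the paragraph just preceding it, together with Proposition \ref{rtos}; what remains is to unpack why the two conditions really say the same thing, and why $U(2n\pi)=1$ is a special case.

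For the equivalence I would argue both directions using the extension formula \eqref{gext}. For the ``only if'' direction, suppose $\A$ is the restriction to $\mathbb R\subset S^{1(n)}$ of a $G$-covariant net on $S^{1(n)}$. Since $\mathrm{rot}(2n\pi)\in G^{(\infty)}$ lies in the kernel of the covering map $G^{(\infty)}\to G^{(n)}$ (as $\mathrm{rot}(2\pi)$ generates the center and $G^{(n)}$ is the quotient by the $n$-th power of this generator), its image $\dot{\mathrm{rot}}(2n\pi)$ acts trivially on $S^{1(n)}$. Hence for every $I\in\I^{(n)}$ we have $U(2n\pi)\A(I)U(2n\pi)^{-1}=\A(\dot{\mathrm{rot}}(2n\pi)I)=\A(I)$, and $U(2n\pi)$ commutes with $U(g)$ for all $g\in G^{(\infty)}$ because $\mathrm{rot}(2n\pi)$ is central. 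Restricting to intervals on $\mathbb R$ gives the gauge property.

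For the ``if'' direction, assume $U(2n\pi)$ is a gauge unitary in the sense of \eqref{gauge}. By Proposition \ref{rtos} we have a $G$-covariant extension $\tilde\A$ of $\A$ to $S^{1(\infty)}$. Two intervals of $S^{1(\infty)}$ project to the same interval of $S^{1(n)}$ precisely when they are related by a power of the deck transformation $\mathrm{rot}(2n\pi)$. Using the defining formula $\tilde\A(I)=U(g)\A(I_1)U(g)^{-1}$ with $gI_1=I$, the value $\tilde\A(I)$ is unchanged under $g\mapsto \mathrm{rot}(2n\pi)g$ exactly because $U(2n\pi)$ normalizes each $\A(I_1)$ and commutes with $U(g)$. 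Thus $\tilde\A$ factors through $S^{1(n)}$, yielding the desired $G$-covariant net on $S^{1(n)}$.

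Finally, for the ``in particular'' clause: if $U(2n\pi)=1$ then $U$ descends to a projective unitary representation of $G^{(n)}$, and the identity is trivially a gauge unitary, so the if-direction applies. The main conceptual point (and the only step requiring care) is the identification of the deck group of $S^{1(\infty)}\to S^{1(n)}$ with the subgroup generated by $\mathrm{rot}(2n\pi)$ inside the center of $G^{(\infty)}$; everything else is a routine verification from the definitions already in place.
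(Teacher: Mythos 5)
Your proof is correct and follows essentially the same route as the paper, which states the equivalence in the paragraph preceding the corollary and declares it "immediate to check" via Proposition \ref{rtos} and the extension formula \eqref{gext}. Your unpacking — identifying the deck group of $\Si\to\Sn$ with the central subgroup generated by ${\rm rot}(2n\pi)$ and verifying that the gauge property is exactly the well-definedness condition for the descended net — is precisely the verification the authors leave implicit.
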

\noindent
If $\A$ is a $G$-covariant net on $\mathbb R$ we shall denote by $\A^{(\infty)}$ its promotion to $\Si$. If $U(2\pi)^n$ is a gauge unitary for some $n\in\mathbb N$, we shall denote by $\A^{(n)}$ the promotion of $\A$ to $\Sn$.

We now define the \emph{promotion} of a net $\A$ on $S^1$ to a net $\A^{(n)}$ on $\Sn$. If $\A$ is a $G$-covariant net on $S^1$, then its restriction $\A_0$ to $\mathbb R$ is a $G$-covariant net on $\mathbb R$, and we set
\[
\A^{(n)}\equiv \A_0^{(n)}\ ;
\]
here, if $n$ is finite, we have to assume that $U(2\pi n)$ is a gauge unitary.
We shall be mainly interested in the case of a $G$-covariant Fermi net $\A$ on $S^1$. As $U(4\pi)=1$ in this case, we have a natural net $\A^{(2)}$ on $\S2$ associated with $\A$. Of course if $\A$ is local then $U(2\pi) =1$ the net $\A^{(2)}$ on $\S2$ is defined in this case. 

Of course if $\A^{(n)}$  is the promotion to $\Sn$ of a net $\A$ on $S^1$, then  $\A^{(n)}(I)=\A(pI)$ for any interval $I$ of $\Sn$.
%


%
\section{Solitons and representations of cover nets}
We now consider more general representations associated with a Fermi conformal net. The point is that a Fermi conformal nets lives naturally in a double cover of $S^1$ rather than on $S^1$ itself  because the $2\pi$ rotation unitary $U(2\pi)$ is not the identity ($U(2\pi)=\Gamma$, see Sect. \ref{vacss}) but $U(4\pi)=1$. So representations as a net on $\S2$ come naturally into play. These representations can be equivalently viewed as a natural class of solitions.
\subsection{Representations of a net on $\Sn$}
We begin by giving the notion of representation for a net on a cover of $S^1$.

Let $\A$ be a $G$-covariant net of von Neumann algebras on $\Sn$ and $U$ the associated covariance unitary representation of of $G^{(\infty)}$ (thus $U(2\pi n)$ is a gauge unitary). A \emph{representation} of $\A$ is a map
\[
I\in\I^{(n)}\mapsto \l_I
\]
where $\l_I$ is a normal representation of $\A(I)$ on a fixed Hilbert space $\H_\l$ with the usual consistency condition $\l_{\tilde I}|_{\A(I)} = \l_I$ if $\tilde I\supset I$. 

We shall say that $\l$ is \emph{$G$-covariant} if there exists a projective unitary, positive energy representation $U_\l$ of $G^{(\infty)}$ on $\H_\l$ such that
\[
U_\l(g)\l_I(x)U_\l(g)^* = \l_{\dot{g}I}(U(g)xU(g)^*),\ g\in G^{(\infty)},\ I\in\I^{(n)}\ .
\]
Here $\dot g$ denotes the image of $g\in G^{(\infty)}$ in $G^{(n)}$ under the quotient map, thus $\dot{g}I$ is the projection of $gI\in\Si$ onto $\Sn$.
 
Let now $\A$ be a $G$-covariant net on $S^1$. If $\l$ is a representation of $\A$, then $\l$ promotes to a representation $\l^{(n)}$ of $\A^{(n)}$  given by
\[
\l^{(n)}_I \equiv \l_{p(I)},\quad I\in\I^{(n)}\ .
\]
If $\l$ is $G$-covariant and $U_\l$ is the associated unitary representation of $G^{(n)}$, then $\l^{(n)}$ is also $G$-covariant with the same unitary representation $U_\l$ of $G^{(n)}$.
\begin{lemma}\label{prom}
Let $\A$ a $G$-covariant Fermi (resp. local) net on $S^1$ and $\nu$ a $G$-covariant representation of $\A^{(n)}$, with $U_\nu$ the associated projective unitary representation of $G^{(n)}$.

Then $\nu$ is the promotion $\l^{(n)}$ of a $G$-covariant representation $\l$ of $\A$ iff $U_\nu(2\pi)$ implements the grading (resp. the identity) in the representation $\nu$. 
\end{lemma}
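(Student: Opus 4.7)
The plan is to split the equivalence into its two directions. Set $\alpha \equiv \Ad U(2\pi)$; by Prop.~\ref{vss}, $\alpha$ equals the grading $\g$ in the Fermi case and the identity in the local case, so the hypothesis on $U_\nu(2\pi)$ reads uniformly as ``$U_\nu(2\pi)$ implements $\alpha$ in $\nu$''.

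The necessity direction is a rewriting of the covariance of $\l$. If $\nu=\l^{(n)}$ then $U_\nu=U_\l$ by construction of the promotion. Applying $G$-covariance of $\l$ with $g\in G^{(\infty)}$ a lift of the $2\pi$-rotation (which projects to the identity in $G$), one gets, for $I\in\I^{(n)}$ and $x\in\A(pI)$,
\begin{equation*}
U_\nu(2\pi)\nu_I(x)U_\nu(2\pi)^* = \l_{pI}\bigl(U(2\pi)xU(2\pi)^*\bigr) = \nu_I(\alpha(x)),
\end{equation*}
which is the claim.

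For sufficiency, I would define a candidate $\l$ on $S^1$ by picking, for each $I\in\I$, some lift $\tilde I\in\I^{(n)}$ with $p(\tilde I)=I$ and setting $\l_I\equiv\nu_{\tilde I}$; take $U_\l\equiv U_\nu$. The main point, and the only step that genuinely uses the hypothesis, is independence from the choice of lift. Two lifts $\tilde I,\tilde I'$ of $I$ differ by a deck transformation of $\Sn\to S^1$, so $\tilde I'=\dot h\tilde I$ with $h\in G^{(\infty)}$ a lift of the $2\pi k$-rotation for some $k$. Covariance of $\nu$ gives
\begin{equation*}
U_\nu(2\pi k)\nu_{\tilde I}(x)U_\nu(2\pi k)^* = \nu_{\tilde I'}\bigl(U(2\pi k)xU(2\pi k)^*\bigr) = \nu_{\tilde I'}(\alpha^k(x)),
\end{equation*}
whereas the assumption that $U_\nu(2\pi)$ implements $\alpha$ (and the projective multiplicativity of $U_\nu$, whose phase cancels on both sides of a conjugation) forces the left-hand side to equal $\nu_{\tilde I}(\alpha^k(x))$. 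Since $\alpha$ is an automorphism, this yields $\nu_{\tilde I}=\nu_{\tilde I'}$ on all of $\A(I)$.

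Once well-definedness is settled, the remaining checks are routine: isotony of $\l$ is obtained by choosing nested lifts $\tilde I_1\subset\tilde I_2$ when $I_1\subset I_2$; $G$-covariance of $\l$ with $U_\l=U_\nu$ is a direct transcription of the covariance of $\nu$ via naturality of $p:\Si\to S^1$ with respect to the $G^{(\infty)}$-actions; and $\l^{(n)}=\nu$ holds because for any $I\in\I^{(n)}$ one may take $I$ itself as the lift of $p(I)$ when defining $\l_{p(I)}$. The main obstacle is really only the well-definedness of $\l_I$; everything else is bookkeeping.
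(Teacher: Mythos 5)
Your proof is correct and follows essentially the same route as the paper: both directions come down to the covariance relation applied to the lift of the $2\pi$-rotation together with the vacuum spin-statistics relation $U(2\pi)=\Ga$ (resp. $U(2\pi)=1$), and your converse is the paper's ``restrict and extend by covariance'' argument with the key well-definedness step (independence of the choice of lift under the deck group) spelled out explicitly rather than left implicit via Prop.~\ref{rtos}.
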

\proof
We assume $\A$ to be a Fermi net (the local case is simpler and follows by the same argument). If $\l$ is a $G$-covariant representation of $\A$, then $U_\l(2\pi)$ implements the grading by the spin-statistics theorem \cite{GL2} (applied to $\l |_{\A_b}$). Conversely, if $\nu$ is a $G$ covariant representation of $\A^{(n)}$ and $U_\nu(2\pi)$ implements the grading, let $\l_0$ be the restriction of $\nu$ to a copy of $\mathbb R$ in $S^{1(n)}$. Then $\l_0$ extends by $G$-covariance to a representation of $\A$ (Prop. \ref{rtos}). 
\endproof
\subsection{Solitons}
Let $\A_0$ be a net of von Neumann algebras on $\mathbb R$ and denote by $\bar\I_{\mathbb R}$ the family of intervals/half-lines of $\mathbb R$ (open connected subsets of $\mathbb R$ different from $\emptyset , \mathbb R$). If $I$ is a half-line, we define $\A(I)$ as the von Neumann algebras generated by the von Neumann algebras associated to the intervals contained in $I$.

A \emph{soliton} $\l$ of $\A_0$ a map
\[
I\in\bar\I_{\mathbb R}\mapsto \l_I
\]
where $\l_I$ is a normal representation of the von Neumann algebra $\A_0(I)$ on a fixed Hilbert space $\H_\l$ with the usual isotony property $\l_{\tilde I}|_{\A(I)} =\l_I$, $I\subset\tilde I$.

If $\A$ is a net on $S^1$, by a soliton of $\A$ we shall mean a soliton of the restriction of $\A$ to $\mathbb R$. 

Let $\A$ be a conformal net on $S^1$. We set
\[
\text{$\A_0\ \equiv\ $ restriction of $\A$ to $\mathbb R$}. 
\]
If $\l$ is a DHR representation of $\A$ then, obviously, $\l |_{\A_0}$ is a soliton of $\A_0$. We shall say that a soliton $\l_0$ of $\A$ is a DHR representation of $\A$ if it arises in this way, namely $\l_0=\l |_{\A_0}$ with $\l$ a DHR representation of $\A$ (in other words, $\l |_{\A_0}$ extends to a DHR representation of $\A$, note that the extension is unique if strong additivity is assumed).

More generally, we get solitons of $\A$ by restricting to $\A_0$ representations of the cover nets $\A^{(n)}$.

Let $\A$ be a $G$-covariant net on $S^1$ with covariance unitary representation $U$. A \emph{$G$-covariant soliton} $\l$ of $\A$ is a soliton of $\A_0$ such that there exists a projective unitary representation $U_\l$ of $G^{(\infty)}$ such that for every bounded interval $I$ of $\mathbb R$ we have
\[
\l_{\dot g I}(U(g)xU(g)^*) = U_\l (g)\l_I(x)U_\l (g)^*\ ,\quad  g\in\U_I ,\ x\in\A(I)\ ,
\]
where $\U_I$ is a connected neighborhood of the identity of $G^{(\infty)}$ as in Sect. \ref{netR} and $\dot g$ is the image of $g$ in $G$.
\begin{proposition}\label{r-si}
Let $\A$ be a $G$-covariant net on $S^1$.
There is a one-to-one correspondence between
\begin{itemize}
\item[$(a)$] $G$-covariant representations of $\A^{(\infty)}$,
\item[$(b)$] $G$-covariant solitons of $\A$.
\end{itemize}
The correspondence is given by restricting a representation of $\A^{(\infty)}$ to a copy of $\A_0$.

Suppose $\A$ is local. Then the above restricts to a one-to-one correspondence between $(a)$: $G$-covariant representations of $\A^{(n)}$ and $(b)$: $G$-covariant solitons of $\A$ with $U_\l(2\pi n)$ commuting with $\l$.

Suppose $\A$ is Fermi. Then the above restricts to a one-to-one correspondence between $(a)$: $G$-covariant representations of $\A^{(n)}$ and $(b)$: $G$-covariant solitons of $\A$ with $U_\l(2\pi n)$ implementing the grading ($n$ even) or commuting with $\l$ ($n$ odd).
\end{proposition}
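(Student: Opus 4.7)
The plan is to establish the general bijection by the same restriction–extension argument used in Prop.~\ref{rtos}, upgraded from nets to their representations, and then read off the descent condition from $\A^{(\infty)}$ to $\A^{(n)}$.

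In the direction (a)$\Rightarrow$(b) there is nothing to do: a $G$-covariant representation $\nu$ of $\A^{(\infty)}$ restricts, via any fixed embedding of $\mathbb R$ into $\Si$, to a $G$-covariant soliton of $\A_0$ with covariance unitary $U_\nu$. For (b)$\Rightarrow$(a), given a $G$-covariant soliton $\l$ on $\A_0$ with projective covariance $U_\l$, I imitate \eqref{gext}: for $I\in\I^{(\infty)}$ pick $I_1\in\bar\I_{\mathbb R}$ and $g\in G^{(\infty)}$ with $gI_1=I$, and set
\[
\l^{(\infty)}_I(x)\,\equiv\,U_\l(g)\,\l_{I_1}\!\bigl(U(g)^{-1}xU(g)\bigr)\,U_\l(g)^{*},\qquad x\in\A^{(\infty)}(I)=\A(pI).
\]

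The main obstacle is well-definedness of this formula. If also $g'I_1'=I$, then writing $h=g^{-1}g'$ (so $hI_1'=I_1$) the two prescriptions agree iff the global covariance identity
\[
\l_{I_1}\bigl(U(h)\,y\,U(h)^{*}\bigr) = U_\l(h)\,\l_{I_1'}(y)\,U_\l(h)^{*},\qquad y\in\A(I_1'),
\]
holds, while the soliton hypothesis only supplies this for $h$ in the connected neighbourhood $\U_{I_1'}$ of the identity. To upgrade to arbitrary $h\in G^{(\infty)}$ I would factor it along a path in $G^{(\infty)}$ as a finite product $h=h_k\cdots h_1$ in which each $h_j$ lies in $\U_{J_j}$ for the appropriate intermediate interval $J_j=h_{j-1}\cdots h_1 I_1'$, and iterate the local covariance relation; at each step the intermediate interval is the image under a group element of an element of $\bar\I_{\mathbb R}$, so the soliton is available. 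With well-definedness in hand, isotony of $\l^{(\infty)}$ is immediate from isotony of $\l$, $U_\l$ tautologically implements $G^{(\infty)}$-covariance of $\l^{(\infty)}$, and the restriction and extension operations are visibly mutual inverses.

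For the specialisations, a $G$-covariant representation of $\A^{(\infty)}$ factors through $\A^{(n)}$ exactly when $\l^{(\infty)}_I$ depends only on the image of $I$ under the covering $\Si\to\Sn$, that is, when $\l^{(\infty)}_{\mathrm{rot}^{(\infty)}(2\pi n)I}=\l^{(\infty)}_I$ as representations of the common algebra $\A(pI)$. Taking $g=\mathrm{rot}^{(\infty)}(2\pi n)$ and $I=I_1\in\bar\I_{\mathbb R}$ in the extension formula, this reduces to the single identity
\[
\l_{I_1}\bigl(U(2\pi n)\,x\,U(2\pi n)^{*}\bigr) = U_\l(2\pi n)\,\l_{I_1}(x)\,U_\l(2\pi n)^{*},\qquad x\in\A(I_1).
\]
In the local case $U(2\pi n)=1$, which is exactly the statement that $U_\l(2\pi n)$ commutes with $\l$; in the Fermi case the vacuum spin–statistics relation (Prop.~\ref{vss}) gives $U(2\pi n)=\Ga^n$, and the identity becomes the requirement that $U_\l(2\pi n)$ implement $\g^n$ in $\l$, i.e.\ either the grading or the trivial automorphism according to the parity of $n$, as asserted.
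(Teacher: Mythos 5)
Your overall route is the paper's: restrict for one direction, extend by the covariance formula for the other, and read off the descent condition from $U(2\pi n)=\Ga^n$ via spin--statistics. But your treatment of well-definedness has a circularity. You reduce to proving the covariance identity for an arbitrary $h\in G^{(\infty)}$ carrying $I_1'$ onto $I_1$ (both in $\I_{\mathbb R}$), and propose to factor $h$ along a path as $h_k\cdots h_1$ with each $h_j\in\U_{J_j}$, $J_j=h_{j-1}\cdots h_1 I_1'$. For $\U_{J_j}$ to be defined, and for the soliton covariance to be applicable at step $j$, you need $J_j\in\I_{\mathbb R}$ --- but an arbitrary path in $G^{(\infty)}$ will drag $J_j$ off the chosen copy of $\mathbb R$ into other sheets of $\Si$, where $\l_{J_j}$ is not yet given (it is precisely what the extension formula is supposed to produce), so ``the soliton is available'' fails there. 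A factorization with all intermediate intervals in $\I_{\mathbb R}$ exists if and only if $h$ already lies in the identity component $\U_{I_1'}$, in which case a single application of the soliton covariance suffices and no iteration is needed. So the real content is the topological fact that any $h\in G^{(\infty)}$ mapping one interval of $\mathbb R$ onto another, within the same sheet, lies in $\U_{I_1'}$; this is exactly the ``same degree'' observation in the proof of Prop.~\ref{rtos} (the connected components of $\{g\in G^{(\infty)}: gI_1'\in\I_{\mathbb R}\}$ are separated by which sheet $gI_1'$ lands in), and you should invoke or prove that rather than the path iteration.

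Second, your computation of the descent condition is correct --- $U(2\pi n)=\Ga^n$, so $U_\l(2\pi n)$ must implement $\g^n$, i.e.\ the grading for $n$ odd and the identity for $n$ even --- but this is the opposite parity assignment to the one printed in the statement. The printed version appears to be a typo: your version is the one consistent with Lemma~\ref{prom} ($n=1$: grading) and with the application of the proposition in Corollary~\ref{extrep} ($n=2$: $U_\l(4\pi)$ a scalar, hence commuting with $\l$). Writing ``as asserted'' papers over this discrepancy; state the corrected parities explicitly.
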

\proof
Clearly if $\l$ is a $G$-covariant representation of $\A^{(\infty)}$ then $\l_0\equiv\l |_{\A_0}$ is a $G$-covariant solitons of $\A$. Conversely, if $\l_0$ is $G$-covariant solitons of $\A$ then we set
\[
\l_{g I}(U(g)xU(g)^*) = U_\l (g)\l_I(x)U_\l (g)^*\ ,\quad  g\in G^{(\infty)} ,\ x\in\A(I)\ ,
\]
where $I\subset\mathbb R$ is a bounded interval for any given copy of $\mathbb R$ is in $\Si$ and $G^{(\infty)}$ acts on $\Si$ as usual . By  $G$-covariance the above formula well-define a representation of $\A^{(\infty)}$.

The second part follows because by the vacuum spin-statistics relation we have $U(2\pi)=1$ in the local case and $U(2\pi)=\Gamma$ in the Fermi case.
\endproof
A \emph{graded soliton} of the Fermi net $\A$ is, of course, a soliton such that the grading is unitarily implemented, namely the soliton $\l$ is graded iff there exists a unitary $\Ga_\l$ on $\H_\l$ such that
\[
\l_I(\g(x)) = \Ga_\l \l(x)\Ga^*_\l,\quad I\in\I_{\mathbb R},\ x\in\A(I).
\]
\subsection{Neveu-Schwarz and Ramond representations}
We give now the general definition for a representation of a Fermi net. We have two formulations for this notion: as a representation of the cover net and as a soliton.

Let $\A$ be a Fermi net on $S^1$. A \emph{general representation} $\l$ of $\A$ is a representation the cover net of $\A^{(\infty)}$ such that $\l$ restricts to a DHR representation $\l_b$ of the Bose subnet $\A_b$.

We shall see here later on that a general representation is indeed a representation of $\A^{(2)}$. We begin by noticing an automatic covariance for a general representation $\l$. In this case $\l$ is not always graded.
\begin{proposition}\label{diffcov0}
Let $\A$ be a Fermi conformal net on $S^1$. Every general representation $\l$ of $\A$ is diffeomorphism covariant with positive energy. Indeed $U_\l = U_{\l_b}$.
\end{proposition}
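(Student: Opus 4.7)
The plan is to reduce diffeomorphism covariance of the general representation $\l$ of $\A^{(\infty)}$ to the automatic diffeomorphism covariance of the DHR representation $\l_b\equiv\l|_{\A_b}$ of the local conformal net $\A_b$, exploiting the key fact already used in the proof of Proposition \ref{DHRgraded}: the diffeomorphism covariance unitaries of the Fermi net $\A$ lie in the Bose subnet, i.e.\ $U(g)\in\A_b(I)$ whenever $g\in\Diff_I^{(2)}(S^1)$.

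First I would invoke \cite{DFK,W}: since $\l_b$ is, by hypothesis, a DHR representation of the local diffeomorphism covariant net $\A_b$, it is automatically diffeomorphism covariant with positive energy, so there exists a projective unitary representation $U_{\l_b}$ of $\Diff^{(\infty)}(S^1)$ of positive energy implementing the covariance. Next, as in the proof of Proposition \ref{DHRgraded}, for $g\in\Diff_I^{(2)}(S^1)$ one has, up to phase,
\[
U_{\l_b}(g)=\l_{b,I}(U(g)),
\]
the phase being unambiguous because $\A_b(I)'\cap\A(I)=\mathbb{C}$ (Lemma \ref{outer}). Since $U(g)\in\A_b(I)\subset\A(I)$, the same unitary is computed as $\l_I(U(g))$, so it is well-defined as an element coming from the representation $\l$ of the cover net as well.

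I would then verify that $U_{\l_b}(g)$ implements $g$ in the representation $\l$ of $\A^{(\infty)}$. Choosing a lift $\tilde I\subset\Si$ of $I$ and any interval $\tilde J\supset\tilde I$, using $\A^{(\infty)}(\tilde J)=\A(p(\tilde J))$ together with $U(g)\in\A(p(\tilde J))$ and the fact that $\l_{\tilde J}$ is a normal representation, one obtains
\[
U_{\l_b}(g)\l_{\tilde J}(x)U_{\l_b}(g)^{*}=\l_{\tilde J}(U(g)xU(g)^{*})=\l_{g\tilde J}(U(g)xU(g)^{*}),
\]
which is the required covariance relation on generators. Since $\Diff^{(\infty)}(S^1)$ is generated as a group by the localized subgroups $\Diff_I^{(2)}(S^1)$ (Prop.\ \ref{locdiff}), the already globally defined representation $U_{\l_b}$ implements the diffeomorphism covariance of $\l$ everywhere, yielding $U_\l=U_{\l_b}$; positivity of energy is inherited from $\l_b$.

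The only delicate point is the consistent gluing on the infinite cover: one needs to check that the local identification $\l_I(U(g))=U_{\l_b}(g)$ introduces no additional cocycle when extended beyond localized diffeomorphisms on $\Si$. This obstacle dissolves immediately because $U_{\l_b}$ is already a bona fide projective representation of the whole of $\Diff^{(\infty)}(S^1)$; the local formula just identifies its values on a generating set, so no gluing cocycle can appear. That $\l$ itself need not be graded is harmless, since $U_{\l_b}$ is constructed entirely from $\l_b$.
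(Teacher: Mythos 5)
Your proposal is correct and follows essentially the same route as the paper: the published proof of Proposition \ref{diffcov0} simply says it is an immediate extension of the argument for Proposition \ref{DHRgraded}, namely implementing localized diffeomorphisms by $\l_I(U(g))$ (possible since $U(g)\in\A_b(I)$), identifying these with $U_{\l_b}(g)$ via $\A_b(I)'\cap\A(I)=\mathbb C$ and the automatic covariance of $\l_b$ from \cite{DFK,W}, and concluding by the fact that localized diffeomorphisms generate the group. Your elaboration of the gluing on the cover and the appeal to Prop.\ \ref{locdiff} is consistent with what the paper leaves implicit.
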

\proof
The proof follows by an immediate extension of the argument proving Prop. \ref{DHRgraded}.
\endproof
Let $\A$ be a Fermi net on $S^1$. A \emph{general soliton} $\l$ of $\A$ is a  diffeomorphism covariant soliton of $\A$ such that $\l$ restricts to a DHR representation $\l_b$ of the Bose subnet $\A_b$. (In other words $\l|_{\A_{b,0}}$ extends to a DHR representation $\l_b$ of $\A_b$.)

By Prop. \ref{r-si} we have a one-to-one correspondence between
\begin{gather*}
\text{General Representations}\\
\Updownarrow\\
\text{General Solitons}
\end{gather*}
We prove now the diffeomorphism covariance of general solitons in the strong additive case. Notice that strong additivity is automatic in the completely rational case \cite{LX}. The general proof follows by Prop. \ref{acov} below.
\begin{proposition}\label{diffcov}
Let $\A$ be a  Fermi conformal net on $S^1$. Every soliton $\l$ of $\A$ such that $\l_b$ is a DHR representation is diffeomorphism covariant with positive energy, namely $\l$ is a general soliton. Indeed $U_\l = U_{\l_b}$.
\end{proposition}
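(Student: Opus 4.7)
The plan is to set $U_\l := U_{\l_b}$ and verify that this projective unitary representation of $\Diff^{(\infty)}(S^1)$ already implements the soliton covariance of the full $\l$. Since $\A_b$ is a M\"obius-covariant local net and $\l_b$ is a DHR representation of it, the automatic covariance results of \cite{DFK, W} (exactly as used in the proof of Prop.~\ref{DHRgraded}) furnish $U_{\l_b}$ together with its positive-energy generator.

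The core local computation concerns diffeomorphisms localized in a bounded interval. If $g\in\Diff_{I_0}^{(2)}(S^1)$ with $I_0\subset\RR$, then $U(g)\in\A_b(I_0)$ by the lemma of Section 2.3, and the standard identification of covariance unitaries for DHR representations of a local conformal net (arguing as in the proof of Prop.~\ref{DHRgraded}) yields $U_{\l_b}(g) = c(g)\,\l_{I_0}(U(g))$ for some phase $c(g)$. Since $U(g)\in\A(J)$ for any bounded $J\supset I_0$, multiplicativity of the normal representation $\l_J$ gives
\begin{equation*}
U_\l(g)\,\l_J(x)\,U_\l(g)^* = \l_J\big(U(g)\big)\,\l_J(x)\,\l_J\big(U(g)\big)^* = \l_{\dot g J}\big(U(g)\,x\,U(g)^*\big), \qquad x\in\A(J),
\end{equation*}
the phase $c(g)$ cancelling in the conjugation. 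This is the desired soliton-covariance identity for diffeomorphisms localized in bounded intervals of $\RR$.

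To upgrade this to the full covariance requirement, I factor an arbitrary element of $\U_I$ as a product $g = g_1 g_2 \cdots g_n$ of diffeomorphisms each localized in some bounded interval of $\RR$---available by an $\RR$-adapted version of Prop.~\ref{locdiff}---and chain the local identities above, using multiplicativity of both $U_\l$ and $\l_J$ on a single bounded algebra $J$ large enough to contain all the localization intervals and their iterated $g_i$-images. This settles the case of bounded $I\in\I_{\mathbb R}$. When $I\in\bar\I_{\mathbb R}$ is a half-line, strong additivity of $\A$ gives $\A(I)=\bigvee_{J\Subset I,\,J\in\I_{\mathbb R}}\A(J)$, so the covariance identity extends from bounded subintervals to the half-line algebra by weak continuity of the conjugation by $U_\l(g)$. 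The main obstacle is precisely this last step: strong additivity is what turns a half-line algebra into the weak join of its bounded sub-interval algebras, enabling the extension of covariance across half-lines; without it, the half-line case requires the separate analysis postponed to Prop.~\ref{acov}.
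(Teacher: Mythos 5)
There is a genuine gap at the globalization step. You propose to factor an arbitrary $g\in\U_I$ as a product $g_1\cdots g_n$ with each $g_i$ localized in a \emph{bounded} interval of $\mathbb R$. This is impossible in general: a diffeomorphism localized in a bounded interval of $\mathbb R$ acts as the identity on a neighbourhood of the point at infinity $-1\in S^1$, so any finite product of such elements fixes $-1$; but $\U_I$ contains elements (e.g.\ small rotations) that move $-1$. Prop.~\ref{locdiff} does not help here, since it generates $\Diff^{(2)}(S^1)$ from $\Diff^{(2)}_I(S^1)$ with $I$ ranging over \emph{all} intervals of the cover, including those containing $-1$ --- and for such $I$ the soliton $\l$ gives you no algebra to work in. This is exactly the crux of the proposition: a soliton is only defined on bounded intervals and half-lines of $\mathbb R$, so the covariance identity cannot be obtained by conjugating inside some $\l_J$ once the diffeomorphism interacts with the point at infinity. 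Your ``core local computation'' (for $g$ localized in a bounded $I_0\subset\mathbb R$) agrees with the first step of the paper's proof and is fine; it is the reduction of a general $g\in\U_I$ to that case that fails.

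The paper's resolution is different: given $g\in\U_I$ and $\tilde I\supset I\cup gI$ bounded, it chooses $h$ localized in a bounded interval of $\mathbb R$ with $h|_{\tilde I}=g|_{\tilde I}$, so that $U(g)=U(h)v$ with $v=U(h^{-1}g)\in\A_b(\tilde I')$, where $\tilde I'$ contains $-1$. The burden then shifts to showing that ${\l_b}_{\tilde I'}(v)$ commutes with $\l_{\tilde I}(x)$ --- a commutation ``across the point at infinity'' --- and \emph{this} is where strong additivity enters: it reduces $v$ to elements of $\A_b(I_0)$ with $\bar I_0\subset\tilde I'\setminus\{-1\}$, where graded locality applies inside a single bounded interval. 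You have misplaced the role of strong additivity: it is not needed to express a half-line algebra as the join of its bounded subalgebras (that is the \emph{definition} of $\A(I)$ for a half-line in the soliton setting, and the soliton covariance condition is anyway only required for bounded $I$), but rather to control the Bose correction term $v$ sitting around the point at infinity. Without strong additivity this argument breaks down, which is why the paper defers the general case to the $\a$-induction argument of Prop.~\ref{acov}.
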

\proof
(Assuming strongly additive). Note that if $g\in\Diff(S^1)$ is localized in an interval $I$ then, up to a phase, $U_{\l_b}(g) = \l_b(U(g))\in\A_b(I)$.

Let $I$ be an interval contained in $\mathbb R = S^1\setminus\{-1\}$ a suppose that $g\in\Diff(S^1)$ is localised in $I$; then
\begin{multline*}
U_{\l_b}(g)\l_{\tilde I}(x)U^*_{\l_b}(g) = {\l_b}_{\tilde I}(U(g))\l_{\tilde I}(x){\l_b}_{\tilde I}(U^*(g))
= {\l}_{\tilde I}(U(g))\l_{\tilde I}(x){\l}_{\tilde I}(U^*(g))\\
= \l_{\tilde I}\big(U(g)xU(g)^*\big)
= \l_{g\tilde I}\big(U(g)xU(g)^*\big)\ ,
\quad x\in\A(\tilde I)\ ,
\end{multline*}
for any interval $\tilde I\supset I$ of $\mathbb R$. 

Notice at this point that if $x\in\A(\tilde I)$ and $v\in\A_b(\tilde I')$ then ${\l_b}_{\tilde I'}(v)$ commutes with $\l_{\tilde I}(x)$:
\[
[{\l_b}_{\tilde I'}(v),\l_{\tilde I}(x)]=0\ .
\]
Indeed, as $\A$ is assumed to be strongly additive, also $\A_b$ is strongly additive \cite{X01}, see also \cite{L03}, so it is sufficient to show the above relation with $v\in\A_b(I_0)$,  where $I_0$ is any interval with closure contained in $\tilde I'\setminus \{-1\}$. Then
\begin{equation*}
[{\l_b}_{\tilde I'}(v),\l_{\tilde I}(x)] =[{\l_b}_{I_0}(v),\l_{\tilde I}(x)]
= [{\l}_{I_0}(v),\l_{\tilde I}(x)] 
= [{\l}_{I_1}(v),\l_{I_1}(x)] 
= {\l}_{I_1}([v,x]) = 0\ ,
\end{equation*}
where we have taken an interval $I_1$ of $\mathbb R$ containing $I_0\cup\tilde I$.

Let now $\U_I$ be a connected neighbourhood of the identity in $\Diff(S^1)$ such that $gI$ is an interval of $\mathbb R$ for all $g\in\U_I$. Take $g\in\U_I$ and let $\tilde I$ be an interval of $\mathbb R$ with $\tilde I\supset I\cup gI$. Let $h\in \Diff(S^1)$ be a diffeomorphism localised in an interval of $\mathbb R$ containing $\tilde I$ such that $h|_{\tilde I} = g|_{\tilde I}$. Then
\[
U(g) = U(h)v
\]
where $v\equiv U(h^{-1}g )\in \A_b(\tilde I')$ (the Virasoro subnet is contained in the Bose subnet).

We then have with $x\in\A(\tilde I)$:
\begin{multline*}
U_{\l_b}(g)\l_{\tilde I}(x)U^*_{\l_b}(g) =
U_{\l_b}(h){\l_b}_{\tilde I'}(v)\l_{\tilde I}(x){\l_b}_{\tilde I'}(v^*)U^*_{\l_b}(h) \\
= U_{\l_b}(h)\l_{\tilde I}(x)U^*_{\l_b}(h)\ .
=\l_{h\tilde I}\big(U(h)xU(h)^*\big)
=\l_{g\tilde I}\big(U(g)xU(g)^*\big)
\end{multline*}
\endproof
\begin{proposition}\label{n1}
Let $\A$ be a Fermi conformal net on $S^1$ and $\l$ an irreducible 
general soliton of $\A$. With $\l_{b,0}\equiv\l |_{\A_{b,0}}$, the following are equivalent:
\begin{itemize}
\item[$(i)$] $\l$ is graded,
\item[$(ii)$] $\l_{b,0}$ is not irreducible,
\item[$(iii)$] $\l_{b,0}$ is direct sum of two inequivalent irreducible representations 
of $\A_{b,0}$:
\[
\l_{b,0} \simeq \r\oplus \r'
\]
where $\r$ is an irreducible DHR representation of $\A_{b,0}$ and 
$\r'\equiv \r\s$ with $\s$ is the dual involutive localised automorphism of $\A_b$ as above. \end{itemize}
If the above holds and $\l$ (i.e. $\r$) has finite index
we then have equation for the statistics phases
\begin{equation}\label{mp1}
\omega_{\r'} = - m(\r,\s)\omega_\r
\end{equation}
Thus
\begin{equation}\label{mp2}
U_{\r'}(2\pi) = \mp U_\r(2\pi)
\end{equation}
according to $\r$ is $\s$-Bose/$\s$-Fermi.    
\end{proposition}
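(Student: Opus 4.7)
The plan is to establish the equivalences by the cycle $(i)\Rightarrow(ii)\Rightarrow(iii)\Rightarrow(i)$ and then deduce \eqref{mp1}--\eqref{mp2} from the composition law for univalences together with conformal spin-statistics. Fix a selfadjoint Fermi unitary $v\in\A(I_0)$ with $v^{2}=1$ and $\s=\Ad v\res\A_b$ as in Section~\ref{sfb}; for any $I\supset I_0$, $\A(I)$ is generated by $\A_b(I)$ and $v$, and in the general soliton $\l$ the unitary $u\equiv\l_I(v)$ is selfadjoint with $u^{2}=1$ and $u\l_b(a)u^{*}=\l_b(\s(a))$ for $a\in\A_b$.

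The implication $(i)\Rightarrow(ii)$ is immediate: the grading implementer $\Ga_\l$ commutes with $\l_b(\A_b)$ since $\g\res\A_b=\id$; if $\Ga_\l$ were scalar, the relation $\g(v)=-v$ would force $\l_I(v)=-\l_I(v)=0$, impossible by faithfulness of the normal representation $\l_I$ of the type III factor $\A(I)$. Hence $\l_b(\A_b)'\neq\mathbb{C}$ and $\l_{b,0}$ is reducible.

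For $(ii)\Rightarrow(iii)$ set $M\equiv\l_b(\A_b)''$, so $\Ad u$ is an involutive automorphism of $M$ and irreducibility of $\l$ reads $M'\cap\{u\}'=\mathbb{C}$; any $\Ad u$-fixed central projection therefore lies in $M'\cap\{u\}'$ and is $0$ or $1$. If $M$ were a factor then $\l_b\cong n\r$, and the involution $\Ad u$ on $M'\cong M_n(\mathbb{C})$, implemented by a unitary $V\in M_n(\mathbb{C})$ with $V^{2}$ scalar, would have fixed-point algebra of dimension $n_+^{2}+n_-^{2}$ (with $n_\pm$ the multiplicities of the two eigenvalues of $V$); requiring this to equal $1$ forces $n=1$, contradicting the reducibility of $\l_b$. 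Hence $Z(M)\neq\mathbb{C}$ and $\Ad u$ acts freely on the minimal central projections; if $Z(M)$ had four or more minimal projections, the sum over any single $\Ad u$-orbit would be a non-trivial $\Ad u$-fixed central projection, a contradiction. Thus $Z(M)\cong\mathbb{C}^{2}$ with $\Ad u$ swapping the two minimal projections, so $\l_b\cong n\r\oplus n\r'$ with $\r'\cong\r\s$ inequivalent irreducibles. A final blockwise computation in $M'\cong M_n(\mathbb{C})\oplus M_n(\mathbb{C})$ shows that the $\Ad u$-fixed algebra has dimension $n^{2}$, forcing $n=1$ and yielding (iii).

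For $(iii)\Rightarrow(i)$, Schur's Lemma applied to the inequivalent irreducibles $\r$ and $\r'=\r\s$ forces the intertwiner $u$ of $\l_b$ with $\l_b\circ\s$ to be off-diagonal in $\H_\l=\H_\r\oplus\H_{\r'}$; then $\Ga_\l\equiv E_\r-E_{\r'}$ is a selfadjoint unitary commuting with $\l_b(\A_b)$ and anticommuting with $u$, so it implements $\g$ on the generators $\A_b(I)$ and $v$ of $\A(I)$, hence on all of $\l_I(\A(I))$. For the statistics identities, the braiding--fusion relations give, for $\s$ an automorphism and $\r\s$ irreducible, the composition law $\omega_{\r\s}=\omega_\r\omega_\s\,m(\r,\s)$; since $\s$ has Fermi statistics (cf.\ the proof of Prop.~\ref{vss}) one has $\omega_\s=-1$, yielding \eqref{mp1}. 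The conformal spin-statistics theorem \cite{GL2} gives $U_\mu(2\pi)=\omega_\mu I$ for any irreducible DHR sector $\mu$ (whose conformal Hamiltonian has spectrum in $h_\mu+\mathbb{N}$ with $\omega_\mu=e^{2\pi ih_\mu}$), from which \eqref{mp2} follows in the $\s$-Bose/$\s$-Fermi alternative. The main obstacle is the structural step $(ii)\Rightarrow(iii)$: combining irreducibility with $u^{2}=1$ to simultaneously rule out factor, higher-centre, and higher-multiplicity alternatives.
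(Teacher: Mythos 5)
Your strategy is viable and genuinely different from the paper's. You run the cycle $(i)\Rightarrow(ii)\Rightarrow(iii)\Rightarrow(i)$, putting the weight on $(ii)\Rightarrow(iii)$ via a direct von Neumann algebra analysis of $M\equiv\l_{b,0}(\A_{b,0})''$ and $\Ad u$. The paper instead proves $(i)\Rightarrow(iii)$ directly: from $\l(\A_{b,0})''=\l(\A_0)''\cap\{\Ga_\l\}'$ one gets $\l(\A_{b,0})'=\{\Ga_\l\}''$, which is two-dimensional, and the second component is identified as $\r\s$ because the dual canonical endomorphism of $\A_{b,0}\subset\A_0$ is $\iota\oplus\s$ \cite{LR1}; the cycle is closed by proving $(ii)\Rightarrow(i)$ contrapositively, citing Roberts' result that the fixed points of an irreducible $C^*$-algebra under a period-two outer automorphism still act irreducibly \cite{Ro}. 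Your $(iii)\Rightarrow(i)$ (Schur forces $u$ off-diagonal, so $E_\r-E_{\r'}$ implements $\g$) is a nice explicit substitute for that citation, and your statistics computation is essentially the paper's: the composition law $\om_{\r\s}=\om_\r\om_\s\, m(\r,\s)$ is exactly what the paper derives from the cocycle identities for $\e$, using $\e(\s,\s)=\om_\s=-1$; note that both this law and the passage to $U_\r(2\pi)=\om_\r$ via spin--statistics require the finite index hypothesis, which is why the paper restricts \eqref{mp1}--\eqref{mp2} to that case.

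There is, however, a genuine gap in your $(ii)\Rightarrow(iii)$. You assert that if $M$ is a factor then $\l_{b,0}\cong n\r$ with $M'\cong M_n(\mathbb C)$, and in the non-factor case that $Z(M)$ is atomic with $M'\cong M_n(\mathbb C)\oplus M_n(\mathbb C)$. Neither is automatic: a factor representation of a $C^*$-algebra need not be a finite multiple of an irreducible (it need not even be type I), and a priori $Z(M)$ could be diffuse, in which case your orbit argument on minimal central projections has nothing to act on. What you actually know at that stage is only $(M')^{\Ad u}=M'\cap\{u\}'=\l(\A_0)'=\mathbb C$, and you must extract $\dim M'\le 2$ from this. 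That can be done --- e.g.\ since $\A(I)=\langle\A_b(I),v\rangle$ for $I\supset I_0$, one has that $B(\H_\l)=\l(\A_0)''$ is the weak closure of $M+Mu$, so $M\subset B(\H_\l)$ has index at most $2$; alternatively, for selfadjoint $x\in M'$ both $x+uxu^*$ and $x\cdot uxu^*$ are $\Ad u$-fixed hence scalar, so every selfadjoint element of $M'$ has at most two spectral points, and $M_2(\mathbb C)$ is excluded because its involutive automorphisms are inner with two-dimensional fixed-point algebras --- but some such argument must be supplied before the rest of your identification ($\Ad u$ swaps the two central summands, $\r'\cong\r\s$, disjointness of the summands) goes through. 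As written, the step silently assumes the type I, finite-multiplicity structure it is supposed to establish.
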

\proof
$(i)\Rightarrow (iii)$: If $\l$ is graded, then the unitary $\Ga_\l$ implementing the grading in the representation commutes with $\l(\A_{b,0})$, so $\l_{b,0}$ is reducible. Moreover, as $\l(\A_{b,0})''= \l(\A_0)''\cap\{\Ga_\l\}'$ , we have
$\l(\A_{b,0})'= \l(\A_0)'\vee\{\Ga_\l\}''=\{\Ga_\l\}''$, so $\l(\A_{b,0})'$ is 2-dimensional. 
Let $\r$ one of the two irreducible components of $\l_{b,0}$. As the dual canonical endomorphism associated with $\A_{b,0}\subset\A_0$ is $\iota\oplus\s$ (see \cite{LR1}), the other component must be $\r'\equiv\r\s$, namely $\l_{b,0}=\r\oplus\r'$.

In order to show eq. \eqref{mp2} we may assume that $\r$ is localized left to $\s$, so $\r$ and $\s$ commute. By using the cocycle equations for the statistics operators, we then have
\begin{multline*}
\e(\r',\r') = \e(\r,\r')\r(\e(\s,\r')) 
= \e(\r,\r')\e(\s,\r') 
= \r(\e(\r,\s))\e(\r,\r)\r(\e(\s,\s))\e(\s,\r) \\
= \e(\r,\s)\e(\r,\r)\e(\s,\s)\e(\s,\r)
= - m(\r,\s)\e(\r,\r)\ ,
\end{multline*}
where we have used that $\e(\s,\r')$ and $\e(\r,\s)$ are scalars and  $\e(\s,\s)= -1$. So, if $\r$ has finite index, we infer the equation \eqref{mp1} for the statistics phases and eq. \eqref{mp2} follows by the conformal spin-statistics theorem \cite{GL2}.

The implication $(iii)\Rightarrow (ii)$ is obvious. For the implication $(ii)\Rightarrow (i)$ assume that $\l$ is not graded, namely $\g$ is not unitarily implemented in the representation $\l$; then $\l$ is irreducible by known arguments, see \cite{Ro} (the fixed point of an irreducible $C^*$-algebra with respect to a period two outer automorphism is still irreducible).
\endproof
\noindent
{\it Remark.} In the above proposition the diffeomorphism covariance of $\l$ is unnecessary, only the M\"obius covariance of $\l_b$ has been used.
\begin{corollary}\label{extrep}
Let $\A$ be a Fermi conformal net on $S^1$ and $\l$ an irreducible general soliton of $\A$ with finite index.
The following alternative holds:
\begin{itemize}
\item[$(a)$] $\l$ is a DHR representation of $\A$. Equivalently $U_{\l_b}(2\pi)$ is not a scalar.
\item[$(b)$] $\l$ is the restriction of a representation of $\A^{(2)}$ and $\l$ is not a DHR representation of $\A$. Equivalently $U_{\l_b}(2\pi)$ is a scalar.
\end{itemize}
\end{corollary}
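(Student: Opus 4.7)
The plan is to use Prop.~\ref{n1} to split into two cases, compute $U_{\l_b}(2\pi)$ explicitly in each, and translate the two alternatives of the corollary into algebraic conditions on this operator via the following covariance identity. For any general soliton, $U_\l=U_{\l_b}$ (Prop.~\ref{diffcov0}), and since $U(2\pi)=\Ga$ by Prop.~\ref{vss}, the promoted representation $\l^{(\infty)}$ of $\A^{(\infty)}$ satisfies
\[
U_\l(2\pi n)\,\l^{(\infty)}_I(x)\,U_\l(2\pi n)^{-1}=\l^{(\infty)}_{{\rm rot}(2\pi n)I}(\g^n(x))\ ,
\]
so $\l^{(\infty)}$ descends to $\A^{(n)}$ exactly when $U_\l(2\pi n)$ implements $\g^n$ on $\l$. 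Specialising to $n=1,2$: $\l$ is a DHR representation iff $U_\l(2\pi)$ implements $\g$ in $\l$, and $\l$ extends to $\A^{(2)}$ iff $U_\l(4\pi)$ commutes with $\l$.

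If $\l$ is not graded, Prop.~\ref{n1} gives that $\l_b$ is irreducible. Since ${\rm rot}(2\pi)$ is central in $\Mobi$, the unitary $U_{\l_b}(2\pi)$ commutes with $\l_b$ and is therefore a scalar. Then $U_\l(4\pi)$ is scalar too and commutes with all of $\l$, so $\l$ descends to $\A^{(2)}$. On the other hand $\l$ is not DHR: if the scalar $U_\l(2\pi)$ were to implement $\g$ in $\l$, we would have $\l\circ\g=\l$, forcing $\l(v)=0$ for every Fermi element $v$, which is absurd because $\l$ is injective on the type III factor $\A(I)$.

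If $\l$ is graded, Prop.~\ref{n1} gives $\l_b=\r\oplus\r'$ with $\r'=\r\s$, so $\H_\l=\H_\r\oplus\H_{\r'}$. The unitary $U_{\l_b}(2\pi)$ decomposes as $\omega_\r\oplus\omega_{\r'}$, where $\omega_\r$ is the statistics phase of $\r$, and equation \eqref{mp1} yields $\omega_{\r'}=-m(\r,\s)\omega_\r$. In the $\s$-Bose case $\omega_{\r'}=-\omega_\r$, so $U_{\l_b}(2\pi)=\omega_\r\cdot{\rm diag}(1,-1)$ is not a scalar; moreover ${\rm diag}(1,-1)$ implements $\g$ on the whole of $\l(\A_0)$, because any Fermi element $v$ of $\A$ (whose adjoint action realises $\s$ on $\A_b$) acts on $\H_\r\oplus\H_{\r'}$ as an off-diagonal intertwiner between the two summands and is therefore negated by conjugation with ${\rm diag}(1,-1)$. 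Hence $U_\l(2\pi)$ implements $\g$ up to the scalar $\omega_\r$, and $\l$ is DHR. In the $\s$-Fermi case $\omega_{\r'}=\omega_\r$, so $U_{\l_b}(2\pi)=\omega_\r I$ is a scalar: the argument of the previous paragraph shows $\l$ is not DHR, while $U_\l(4\pi)=\omega_\r^2 I$ commutes with $\l$, so $\l$ extends to $\A^{(2)}$.

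The main delicate point is the $\s$-Bose step, where one must verify that the block-diagonal operator ${\rm diag}(1,-1)$ on $\H_\r\oplus\H_{\r'}$ genuinely implements $\g$ on all of $\l(\A_0)$ and not only on its Bose part; once this off-diagonal structure of the Fermi sector is established, the three sub-cases together exhaust the alternative, and the rest of the argument reduces to a direct computation from Prop.~\ref{n1}, equation \eqref{mp1} and the covariance identity displayed above.
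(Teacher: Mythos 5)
Your proposal is correct, and it reaches the dichotomy by a route that diverges from the paper's at the one genuinely nontrivial point. Both arguments start from Prop.~\ref{n1} and split on whether $\l$ is graded; the non-graded case (irreducibility of $\l_b$ forces $U_{\l_b}(2\pi)$ to be a scalar, which cannot implement the nontrivial $\g$) is the same in both. The difference is the graded case. The paper argues by contradiction through $\a$-induction: if $\l$ is not DHR then $\r$ must be $\s$-Fermi, since otherwise $\a^+_\r$ would be Neveu--Schwarz by Prop.~\ref{acov} while $\a^+_\r=\l$ by Frobenius reciprocity; then \eqref{mp2} gives that $U_{\l_b}(2\pi)$ is a scalar. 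You instead compute $U_{\l_b}(2\pi)=U_\r(2\pi)\oplus U_{\r'}(2\pi)$ directly in both sub-cases via spin-statistics and \eqref{mp1}, and in the $\s$-Bose case observe that $U_{\l_b}(2\pi)=\omega_\r\,{\rm diag}(1,-1)$ implements $\g$, whence $\l$ descends to $\A$ by the criterion of Lemma~\ref{prom}. This avoids the $\a$-induction/Frobenius machinery at the cost of the off-diagonality verification you flag; that verification is in fact immediate, since the commutant $\l_b(\A_{b,0})'=\{\Ga_\l\}''$ is two-dimensional abelian, so the decomposition $\H_\r\oplus\H_{\r'}$ is exactly the eigenspace decomposition of $\Ga_\l$ and ${\rm diag}(1,-1)=\pm\Ga_\l$ implements $\g$ by the very definition of gradedness. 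Your approach buys a self-contained, computational proof of the positive statement ``$\s$-Bose graded $\Rightarrow$ DHR'' (which the paper only gets contrapositively), at the price of re-deriving the descent criterion; the paper's is shorter given Prop.~\ref{acov} but leans on the sector-theoretic apparatus. One cosmetic caution: the parenthetical ``implementing the grading ($n$ even) or commuting with $\l$ ($n$ odd)'' in Prop.~\ref{r-si} has the parities interchanged relative to the convention you (correctly) use and that the paper's own proof of this corollary uses, so your covariance identity is the reliable form to cite.
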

\proof
First note that $U_{\l_b}(4\pi) = U_{\r}(4\pi)\oplus U_{\r'}(4\pi)$ is a scalar by eq. \eqref{mp2}, so $\l$ is always representation of $\A^{(2)}$ by Prop. \ref{r-si}.

If $\l$ is a DHR representation of $\A$ then by, Prop. \ref{DHRgraded}, $U_\l(2\pi) =\Gamma$ is not a scalar.

Assume now that $\l$ is not a DHR representation of $\A$. It remains to show that if $U_\l(2\pi)$ is a scalar. 

If $\l$ is not graded, then by Prop. \ref{n1} $\l_b$ is irreducible; as $U_\l(2\pi)$ commutes with $\l_b$, must then be a scalar.

Finally, if $\l$ is graded, then $\r$ must be a $\s$-Fermi sector, as otherwise its $\alpha$-induction $\alpha^+_\r$ to $\A$ would be a DHR representation of $\A$, while $\alpha^+_\r= \l$ by Frobenious reciprocity. Then by eq.  \eqref{mp2} we have $U_{\r'}(2\pi) = U_\r(2\pi)$, namely $U_\l(2\pi)$ is a scalar.
\endproof
{\it Remark.} The finite index assumption in the above proposition is probably unnecessary (it has been used to make use of eq.  \eqref{mp2}).
\medskip

\noindent
It is now convenient to use the following terminology concerning the general soliton in Corollary \ref{extrep}. In the first case, namely if $\l$ is a DHR representation of $\A$, we say that $\l$ is a \emph{Neveu-Schwarz representation} of $\A$. In the second case, namely if $\l$ is a soliton (and not a representation of $\A$) we say that $\l$ is a \emph{Ramond representation} of $\A$. 

If $\l$ is reducible we shall say that $\l$ is a Neveu-Schwarz representation of $\A$ if $\l$ is a DHR representation of $\A$; we shall say that $\l$ is a Ramond representation of $\A$ if no subrepresentation of $\l$ is a DHR representation of $\A$.

Let as above $\A$ be a Fermi conformal net on $S^1$ and $\A_b$ the Bose subnet. We shall now study the representations of $\A_b$ in relation to the representations of $\A$.

Given a representation $\nu$ of $\A_b$ we consider its $\a$-induction $\a_\nu$ to $\A$ (say right $\a$-induction, so $\a_\nu\equiv\a^+_\nu$). More precisely we first restrict $\nu$ to the net $\A_{b,0}$ on the real line and then consider its $\a$-induction $\a_\nu$ which is a soliton of $\A_0$. Note that $\a_\nu$ is defined in \cite{LR1} assuming Haag duality on the real line (equivalent to strong additivity in the conformal case), but this assumption is unnecessary here because we are considering nets on $S^1$ that satisfy Haag duality on $S^1$, see \cite[Sect. 3.1]{LX}.
\begin{proposition}\label{acov}
If $\nu$ is a DHR irreducible representation of $\A_b$ then $\a_\nu$ is a general soliton of $\A$, namely $\a_\nu$ is diffeomorphism covariant with positive energy. We have:
\begin{gather*}
\text{$\nu$ is $\s$-Bose} \Leftrightarrow \text{$\a_\nu$ is Neveu-Schwarz}\\
\text{$\nu$ is $\s$-Fermi} \Leftrightarrow \text{$\a_\nu$ is Ramond}
\end{gather*}
\end{proposition}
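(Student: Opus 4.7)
The plan is to realize $\alpha_\nu$ explicitly as a soliton via the $\alpha$-induction formula, identify its restriction to $\A_b$ on the vacuum Hilbert space $\H$ of $\A$, and then read off the Neveu-Schwarz/Ramond alternative from formula \eqref{mp2} combined with Corollary \ref{extrep}.

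\textit{Construction as a soliton.} Localize $\nu$ in an interval $I_0\subset\mathbb{R}$ and pick a self-adjoint unitary $v\in\A_f(I_0)$ implementing $\s$ on $\A_b(I_0)$, so that $\A(I_0)=\langle\A_b(I_0),v\rangle$. Set
\[
\alpha_\nu(x)=\nu(x)\quad (x\in\A_b(I_0)),\qquad \alpha_\nu(v)=\e(\s,\nu)\,v .
\]
Using that $\e(\s,\nu)\in\Hom(\s\nu,\nu\s)$ is unitary, together with $\s^2=\iota$ and $v^2=1$, one checks that $\alpha_\nu$ is a well-defined unital $*$-endomorphism of $\A(I_0)$; this is the standard right $\a$-induction construction, now in the graded setting. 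Isotony and M\"obius covariance extend $\alpha_\nu$ to a soliton of $\A_0$ with $\alpha_\nu|_{\A_b(J)}=\nu$ for every $J\supset I_0$.

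\textit{General soliton and covariance.} Since $\A_b\subset\A$ has dual canonical endomorphism $\iota\oplus\s$, the vacuum representation of $\A$ restricted to $\A_b$ decomposes on $\H$ as $\iota\oplus\s$. Composing with $\nu$ yields
\[
\alpha_\nu|_{\A_b}\cong \nu\oplus\s\nu\cong \nu\oplus\nu' ,
\]
the second isomorphism being implemented by the unitary braiding $\e(\s,\nu)\in\Hom(\s\nu,\nu\s)$. In particular $\alpha_\nu|_{\A_b}$ is a DHR representation of $\A_b$, so $\alpha_\nu$ is a general soliton. Diffeomorphism covariance of $\alpha_\nu$ with positive energy is then inherited from that of $\nu$, which is automatic on the local Bose subnet by \cite{DFK,W}, through the $\alpha$-induction formula; in fact $U_{\alpha_\nu}=U_{\alpha_\nu|_{\A_b}}$.

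\textit{Neveu-Schwarz versus Ramond.} By equation \eqref{mp2},
\[
U_{\alpha_\nu|_{\A_b}}(2\pi)=U_\nu(2\pi)\oplus U_{\nu'}(2\pi)
\]
equals $U_\nu(2\pi)\oplus(-U_\nu(2\pi))$ when $\nu$ is $\s$-Bose and $U_\nu(2\pi)\oplus U_\nu(2\pi)$ when $\nu$ is $\s$-Fermi, so it fails to be a scalar exactly in the $\s$-Bose case. Applying Corollary \ref{extrep} to the irreducible components of $\alpha_\nu$, we conclude that $\alpha_\nu$ is a DHR representation of $\A$, hence Neveu-Schwarz, iff $\nu$ is $\s$-Bose, and is a Ramond representation iff $\nu$ is $\s$-Fermi. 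The main difficulty is the identification in the second step: that the restriction $\alpha_\nu|_{\A_b}$ on the larger Hilbert space $\H$ is $\nu\oplus\nu'$ rather than just $\nu$, which requires combining the $\alpha$-induction identity on $\A_b(I_0)$ with the decomposition $\iota\oplus\s$ of the vacuum representation of $\A$ restricted to $\A_b$.
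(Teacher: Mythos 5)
Your construction of $\a_\nu$ as a crossed-product extension ($\a_\nu(v)=\e(\s,\nu)v$ on the odd generator) and the identification $\a_\nu|_{\A_b}\cong\nu\oplus\nu'$ via the dual canonical endomorphism $\iota\oplus\s$ are both correct. The first gap is in the covariance step, which is the actual content of the first assertion: you write that diffeomorphism covariance is ``inherited \dots through the $\a$-induction formula'', but what needs proof is why the covariance unitaries $U_{(\a_\nu)_b}(g)$ intertwine $\a_\nu$ on the \emph{odd} elements of $\A$, not merely on $\A_b$. The paper's mechanism is Roberts' extension method done covariantly: the covariance cocycle $w_g$ of $\nu$ takes values in $\A_b$, hence can be read inside $\A$ (with weaker localization), and the covariance identity persists for $g$ in a neighbourhood of the identity; this is precisely what makes $\a_\nu$ a diffeomorphism covariant soliton. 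Some version of this (equivalently, naturality of $\e(\s,\nu)$ under $\Ad U_\nu(g)$) must be supplied; it is not a formal consequence of covariance of $\nu$ on the Bose subnet.

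The second and more serious problem is that your derivation of the Neveu-Schwarz/Ramond alternative is circular. You deduce ``$\nu$ $\s$-Bose $\Rightarrow\a_\nu$ is DHR'' from the implication ``$U_{\l_b}(2\pi)$ not a scalar $\Rightarrow\l$ is DHR'' contained in Corollary \ref{extrep}; but the paper proves that implication (in the graded, non-DHR case) precisely by invoking the fact that the $\a$-induction of a $\s$-Bose sector is a DHR representation of $\A$ --- the very statement you are trying to establish. In addition, eq. \eqref{mp2} is only proved under a finite index hypothesis and Corollary \ref{extrep} is stated for \emph{irreducible} general solitons, whereas Proposition \ref{acov} assumes neither and $\a_\nu$ need not be irreducible (e.g. when $\s\nu\cong\nu$, in which case the irreducible components restrict to $\nu$ alone and your spin computation no longer applies to them). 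The paper's route avoids all of this: $\a_\nu$ extends to a DHR representation of $\A$ if and only if $\nu$ has trivial monodromy with the dual canonical endomorphism $\iota\oplus\s$ (the standard localization criterion for $\a$-induction from \cite{LR1}), i.e. if and only if $m(\nu,\s)=1$, which is the definition of $\s$-Bose. Replacing your appeal to Corollary \ref{extrep} by this criterion repairs the second half of the argument.
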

\proof
We only sketch the proof.
We use the extension method by Roberts (see \cite{Ro}), but in a covariant way.
As $\nu$ is diffeomorphism covariant, there is a localized unitary cocycle $w$ with value in $\A_b$ such that
\begin{equation}\label{ce}
 \Ad w_g\cdot \nu = \Ad U(g)\cdot \nu\cdot\Ad U(g^{-1}),\quad g\in\Diff(S^1),
\end{equation}
see \cite[Sect. 8]{GL1}. The cocycle $w$ reconstructs $\nu$. Now $w$ can be seen as a cocycle with values in $\A$ but with less localization properties (if $w$ is bi-localized in $I\cup gI$ in $\A_b$, it is in general only localized in $\tilde I$ in $\A$ where $\tilde I$ is an interval containing $I\cup gI$). Thus, in general,  $w$ can be associated with a soliton of $\A$, which is $\a_\nu$. Now the covariance equation still remains true for $g$ in a neighborhood of the identity, and gives the diffeomorphism covariance of $\a_\nu$.

If $\nu$ a DHR irreducible representation of $\A_b$, its $\a$-induction to $\A$ is a DHR representation iff $\nu$ has trivial monodromy with the dual canonical endomorphism $\iota\oplus\s$ of $\A_b$ (the restriction to $\A_b$ of the identity representation of $\A$), thus iff $\nu$ has trivial monodromy with $\s$. By definition this means that $\nu$ is a $\s$-Bose representation.
\endproof
An immediate corollary is the following.
\begin{corollary}
A  DHR representation $\nu$ of $\A_b$ is $\s$-Bose (resp. $\s$-Fermi) if and only if $\nu$ is the restriction of a Neveu-Schwarz (resp. Ramond) representation of $\A$.
\end{corollary}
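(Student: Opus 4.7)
The strategy is to derive the corollary by combining Proposition \ref{acov} with Frobenius reciprocity for $\a$-induction. I would first handle the irreducible case, and then reduce the general case to it by decomposing $\nu$ into irreducible components, using that both the $\s$-Bose/$\s$-Fermi property and the Neveu-Schwarz/Ramond classification are compatible with direct sums and with passage to subrepresentations.

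For the forward direction with irreducible $\s$-Bose $\nu$, Proposition \ref{acov} directly exhibits the required representation: $\l = \a_\nu$ is Neveu-Schwarz, and $\nu \prec \a_\nu|_{\A_b}$ by the defining property of $\a$-induction. The $\s$-Fermi case is identical, with $\l = \a_\nu$ then Ramond.

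For the converse, suppose $\nu$ is irreducible and $\nu \prec \l|_{\A_b}$ for some Neveu-Schwarz representation $\l$ of $\A$. Frobenius reciprocity in the form $\Hom(\nu,\l|_{\A_b}) \simeq \Hom(\a_\nu,\l)$ yields $\l \prec \a_\nu$. If $\nu$ were $\s$-Fermi, Proposition \ref{acov} would force $\a_\nu$ to be Ramond; but by definition a Ramond representation has no Neveu-Schwarz (i.e.\ DHR) subrepresentation, contradicting $\l \prec \a_\nu$. Hence $\nu$ is $\s$-Bose. The Ramond case is analogous: if $\l$ is Ramond and $\nu$ were $\s$-Bose, then $\a_\nu$ would be Neveu-Schwarz (hence DHR), so its subrepresentation $\l$ would also be DHR, contradicting the Ramond property of $\l$.

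The reducible case follows by decomposing $\nu$ into irreducibles and applying the above to each summand, then taking the direct sum of the $\a$-inductions to produce the required Neveu-Schwarz (resp.\ Ramond) representation of $\A$. I do not expect any substantive obstacle: the only ingredient beyond Proposition \ref{acov} is Frobenius reciprocity in the Fermi/soliton setting, which is exactly the framework in which $\a$-induction was set up in \cite{LR1} and used in Proposition \ref{acov}; so the standard reciprocity applies without modification.
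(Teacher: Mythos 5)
Your proof is correct and follows exactly the route the paper intends: the paper states this as an immediate consequence of Proposition \ref{acov}, and the forward direction via $\a$-induction together with the converse via Frobenius reciprocity (already invoked in the proof of Corollary \ref{extrep}) is precisely the argument being left implicit. The reduction of the reducible case to the irreducible one is also unproblematic, since $m(\nu,\s)=\pm 1$ for reducible $\nu$ forces all irreducible components to be of the same $\s$-Bose/$\s$-Fermi type.
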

\noindent
Because of the above corollary, we shall also call Neveu-Schwarz (resp. Ramond) representation of $\A_b$ a $\s$-Bose (resp. $\s$-Fermi) representation of $\A_b$.
\subsection{Tensor categorical structure. Jones index}
In the local case, the DHR argument shows that every representation is equivalent to a localized endomorphism and a first basic consequence of this fact is that representations give rise to a tensor category because one can indeed compose localized endomorphisms and get the monoidal structure. As we have seen in Prop. \ref{dhrendo}, we can extend the DHR argument to the case of representations of a Fermi net; this argument however makes use that DHR representations are automatically graded. In order to have a tensor structure for general representations of a Fermi net, we consider graded representations only. Note however that if $\l$ is any soliton, setting where $\l_\g\equiv\g\l\g^{-1}$, then $\l\oplus\l_\g$ is graded.
We obviously have
\[
{\rm Sol}_\g(\A)\subset{\rm Sol}(\A)\ .
\]
where ${\rm Sol}(\A)$ denotes the general solitons of $\A$ that are localized, say, in a right half line of $\mathbb R$ and ${\rm Sol}_\g(\A)$ are the general solitons commuting with the grading, namely $\l\in{\rm Sol}(\A)$ belongs to  ${\rm Sol}_\g(\A)$ iff $\l=\l^\g$.

Now every general representation of $\A$ is equivalent to an element of ${\rm Sol}(\A)$. Thus, if $\l_1,\l_2\in{\rm Sol}(\A)$ an intertwiner $T:\l_1\to\l_2$ is a bounded linear operator such that
\[
T\l_1(x)=\l_2(x)T,\quad x\in\A_0\ .
\]
Note that $T$ is not necessarily localized in a half-line, but by twisted duality $T\in Z\A(I)Z^*$ where $I$ is a half-line where $\l_1$ and $\l_2$ are both localized. By further assuming that $T$ commutes with $\Gamma$, we also have that  $T$ commutes with $Z$, so $T\in\A(I)$ and $\partial T = 0$, thus $T\in\A_b(I)$.

It follows that ${\rm Sol}_\g(\A)$ is a tensor category where the arrows are defined by
\[
{\rm Hom}(\l_1,\l_2)\equiv T:\l_1\to\l_2,\ T\Gamma =\Gamma T\ .
\]
\begin{proposition}
$\a$-induction is a surjective tensor functor 
\[
{\rm Rep}(\A_b)\overset{\a}{\longrightarrow}{\rm Sol}_\g(\A)\ .
\]
where ${\rm Rep}(\A_b)$ is the tensor category of endomorphisms of $\A_b$ localized in intervals of $\mathbb R$. The kernel of $\a$ is $\{\iota,\s\}$.
\end{proposition}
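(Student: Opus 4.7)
The plan is to verify four things in turn: $(i)$ that $\a$-induction actually lands in $\mathrm{Sol}_\g(\A)$, i.e.\ that $\a_\nu$ is graded; $(ii)$ that $\a$ respects composition on objects and intertwiners, hence is tensorial; $(iii)$ that $\a_\nu$ is the identity soliton iff $\nu\in\{\iota,\s\}$; and $(iv)$ that every object of $\mathrm{Sol}_\g(\A)$ is in the image.

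For $(i)$, fix an irreducible DHR representation $\nu$ of $\A_b$ localized in an interval $I\subset\mathbb R$. Proposition \ref{acov} already gives that $\a_\nu$ is a general soliton of $\A$, so we only need an implementer $\Gamma_{\a_\nu}$ for $\g$. Since the dual canonical endomorphism of $\A_b\subset\A$ is $\iota\oplus\s$, Frobenius reciprocity forces $\a_\nu|_{\A_b}\simeq\nu\oplus\nu\s$; the central projection onto the $\nu$ summand provides a selfadjoint unitary that commutes with the Bose part of $\a_\nu(\A)$ and anticommutes with the Fermi part, hence implements $\g$. Equivalently, using the cocycle construction of Prop.\ \ref{acov}, the cocycle $w$ defining $\a_\nu$ takes values in $\A_b$ and is therefore fixed by $\g$; so $\g\a_\nu\g^{-1}$ and $\a_\nu$ coincide on $\A_b$ and thus are equivalent as solitons of $\A$, the implementing unitary being $\Gamma_{\a_\nu}$. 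For reducible $\nu$ one argues componentwise.

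For $(ii)$, the multiplicativity $\a_{\mu\nu}=\a_\mu\a_\nu$ on objects is the defining property of $\a$-induction \cite{LR1}. For morphisms, if $T\in\Hom(\mu,\nu)$ then $T\in\A_b$ is fixed by $\g$, hence commutes with the grading implementers $\Gamma_{\a_\mu}$ and $\Gamma_{\a_\nu}$; its intertwining property $T\a_\mu(x)=\a_\nu(x)T$ for $x\in\A_0$ reduces, on Bose elements, to $T\mu(x)=\nu(x)T$, and on Fermi elements it follows from the definition of $\a$-induction through the cocycle $w$, using again that $T$ is $\g$-invariant. Thus $T\in\Hom_{\mathrm{Sol}_\g}(\a_\mu,\a_\nu)$.

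For $(iii)$, $\a_\nu$ is equivalent to the identity soliton iff $\a_\nu|_{\A_b}\simeq\iota|_{\A_b}=\iota\oplus\s$; but $\a_\nu|_{\A_b}=\nu\oplus\nu\s$, so this happens exactly when $\{\nu,\nu\s\}=\{\iota,\s\}$, i.e.\ $\nu\in\{\iota,\s\}$. The converse is trivial since $\a_\iota=\iota$ and $\a_\s$ is the trivial extension. For $(iv)$, let $\l$ be an irreducible object of $\mathrm{Sol}_\g(\A)$; by Proposition \ref{n1}, $\l|_{\A_{b,0}}\simeq\r\oplus\r\s$ for an irreducible DHR representation $\r$ of $\A_b$. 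Frobenius reciprocity for $\a$-induction yields $\Hom(\a_\r,\l)\simeq\Hom(\r,\l|_{\A_b})\neq 0$; since $\a_\r|_{\A_b}=\r\oplus\r\s=\l|_{\A_b}$, any nonzero intertwiner is in fact unitary, and $\l\simeq\a_\r$. A general graded soliton decomposes into irreducibles in $\mathrm{Sol}_\g(\A)$, and the result follows.

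The main obstacle is step $(i)$: one must check carefully that the unitary constructed from the $\A_b$-level decomposition $\nu\oplus\nu\s$ truly implements $\g$ on the soliton $\a_\nu$, compatibly with the soliton localization and with the extension from $\A_b$ to the Fermi part of $\A$. Once this grading implementer is in hand, the tensoriality, kernel computation, and surjectivity are formal consequences of the standard $\a$-induction machinery combined with Proposition \ref{n1}.
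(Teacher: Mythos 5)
Your outline tracks the paper's, but the decisive step is missing at exactly the point you flag as ``the main obstacle'', and the way you try to get around it does not work. In step $(i)$ you argue that the cocycle $w$ defining $\a_\nu$ is Bosonic, so $\g\a_\nu\g^{-1}$ and $\a_\nu$ ``coincide on $\A_b$ and thus are equivalent as solitons of $\A$''. That implication is false: two solitons of $\A$ agreeing on the index-two subnet $\A_b$ need not agree, nor even be equivalent, on $\A$ --- the identity and $\g$ itself agree on $\A_b$, yet $\g$ is locally outer by Lemma \ref{outer}. What is needed, and what the paper's proof supplies, is the explicit check on an odd generator: writing $\A(I)=\langle\A_b(I),v\rangle$ with $v$ a selfadjoint unitary of degree one, one has $\a_\nu(v)=uvu^*$ with $u$ a degree-zero unitary coming from the covariance cocycle, whence
$\a^\g_\nu(v)=\g\big(\a_\nu(\g(v))\big)=-\g(uvu^*)=-u\,\g(v)\,u^*=uvu^*=\a_\nu(v)$.
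Since $\A$ is generated by $\A_b$ and $v$, this gives the strict equality $\a_\nu=\a^\g_\nu$, which is what membership in ${\rm Sol}_\g(\A)$ means (the paper defines it as $\l=\l^\g$, not as mere unitary implementability of the grading). Your alternative argument via the projections onto the $\nu$ and $\nu\s$ summands suffers from the same defect --- it produces \emph{some} implementer of $\g$ in the representation $\a_\nu$, i.e.\ gradedness, but not $\a_\nu=\a^\g_\nu$ --- and in addition degenerates when $\nu\simeq\nu\s$, where the two summands cannot be separated canonically.

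The surjectivity argument is also weaker than, and different from, the paper's. You invoke Proposition \ref{n1} and Frobenius reciprocity to conclude $\l\simeq\a_\r$; but Proposition \ref{n1} applies to solitons that are irreducible \emph{as solitons}, whereas an irreducible object of ${\rm Sol}_\g(\A)$ may be reducible as a general representation (the paper points this out), and the reciprocity step presupposes finite index, which is not assumed in the proposition. The paper instead argues directly: for $\l\in{\rm Sol}_\g(\A)$ the covariance cocycle $w_g$ satisfies $\g(w_g)=w_g$ (because $\g(w_g)$ is again a covariance cocycle for $\l$, hence equals $w_g$ up to a phase killed by the cocycle identity), so $w$ lies in $\A_b$ and is the covariance cocycle of the localized endomorphism $\l_b$; the same odd-generator computation as above then yields $\l=\a_{\l_b}$ \emph{exactly}, not merely up to equivalence. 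Your kernel computation and the remarks on tensoriality of the arrows are fine, but they rest on the two points above being repaired.
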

\proof
First of all note that if $\nu\in{\rm Rep}(\A_b)$ is localized in the interval $I$, then $\a_\nu$ commutes with $\g$. Indeed both $\a_\nu$ and $\a^\g_\nu= \g\a_\nu\g$ have the same restriction to $\A_b$. If $v\in\A(I)$ is a selfadjoint unitary with $\partial v = 1$, then $\a_\nu(v) = uvu^*$ for a Bose unitary $u$ in the covariance cocycle for $\nu$, therefore 
\begin{equation*}
\a^\g_\nu(v) =\g(\a_\nu(\g(v)))= -\g(\a_\nu(v))
= -\g(uvu)  = -\g(u)\g(v)\g(u) = uvu =\a_\nu(v)\ ,
\end{equation*}
so $\a=\a^\g_\nu$ because $\A$ is generated by $\A_b$ and $v$.

Conversely, if $\l\in{\rm Sol}_\g(\A)$, then $\l_b$ is a localized endomorphism of $\A_b$ because $\l$ commutes with $\g$. Now the covariance cocycle $w_g$ for $\l$ has degree $0$; indeed $\g(w_g)$ is also a covariance cocycle for $\l$ thus $\g(w_g)=w_g$ up to a phase that must be 1 by the cocycle property. Thus $w$ belongs to the Bose subnet and is the covariance cocycle for $\l_b$; a calculation as above then shows that $\l = \a_{\l_b}$.

It is now rather immediate to show that $\alpha_\nu$ is the identity iff $\nu =\iota$ or $\nu=\s$ and that we have a corresponding surjective map of the arrows.
\endproof
So we have a braided tensor category ${\rm Sol}_\g(\A)$. In particular, the \emph{Jones index} of $\l$ is defined, as in the local case, see \cite{L5,L5'}. 

Note however that an irreducible element of ${\rm Sol}_\g(\A)$ may be reducible as a general representation of $\A$, namely it can decomposed into non-graded subrepresentations.
\section{Modular and superconformal invariance}
We now study Fermi nets whose Bose subnets are modular and get some consequences in the supersymmetric case.
\subsection{Modular nets}
We recall here a discussion made in \cite{KL05}.
Let $\B$ be a completely rational local conformal net of von Neumann algebras on $S^1$. Then the tensor category or representations of $\B$ is modular, i.e. rational with non-degenerate unitary braiding  \cite{KLM}.

We then have Rehren \cite{R1} matrices defining
a unitary representation of the group $SL(2,{\mathbb Z})$ on the space spanned by the irreducible sectors (i.e. unitary equivalence classes of representations) $\r$'s , in particular we have the matrices $T \equiv \{T_{\l,\nu}\}$ and $S \equiv \{S_{\l,\nu}\}$ where
\begin{equation}\label{matS}
S_{\l,\nu}\equiv 
\frac{d(\l)d(\nu)}{\sqrt{\sum_i d(\r_i)^2}}\Phi_{\nu}(\e(\nu,\l)^*\e(\l,\nu)^*) \ .
\end{equation}
Here $\e(\nu,\l)$ is the right statistics operators between $\l$
and $\nu$, $\Phi_{\nu}$ is left inverse of $\nu$ and $d(\l)$ is the 
statistical dimension of $\l$.  

Assume that $\B$ is diffeomorphism covariant with central charge $c$.
For a sector $\r$, we consider the specialised character
$\chi_\r(\tau)$ for complex numbers $\tau$ with ${\rm Im}\; \tau >0$
as follows:
\[
\chi_\r(\tau)=\Tr\!\big(e^{2\pi i\tau (L_{0,\r}-c/24)}\big).
\]
Here the operator $L_{0,\r}$ is the conformal Hamiltonian in 
the representation $\r$.  We also assume that the above trace converges, and so each eigenspace 
of $L_{0,\r}$ is finite dimensional.  

In many cases the $T$ and $S$ matrices give an action
of $SL(2,{\mathbb Z})$ on the linear span of these specialised
characters through change of variables $\tau$ as follows:
\begin{equation}\label{modularity}
\begin{split}
\chi_\r(-1/\tau)&= \sum_\nu S_{\r,\nu} \chi_\nu(\tau),\\
\chi_\r(\tau+1)&= \sum_\nu T_{\r,\nu} \chi_\nu(\tau).
\end{split}
\end{equation}
This is always the case if the Rehren matrices $S$ and $T$ coincide with the  
so called Kac-Peterson or Verlinde matrices.
The Kac-Peterson-Verlinde matrix $T$ and Rehren matrix $T$ always coincide
up to a phase by the spin-statistics theorem \cite{GL2}.

We shall say that $\B$ is \emph{modular} if  the Rheren matrices give the modular 
transformations of specialized character as in Eq. (\ref{modularity}) .  
Note that we are assuming that $\B$ is completely rational, namely the $\mu$-index is finite 
(see below) and the split property holds. However, as we are assuming
$\Tr(e^{-tL_{0}})<\infty$ the split property follows, see \cite{BDL}. 
(Strong additivity follows from  \cite{LX}.)

Modularity holds in all computed rational cases, cf. \cite{X3,X4}. 
The $SU(N)_k$ nets and the Virasoro nets ${\rm Vir}_c$ with $c<1$ are
both modular. We expect all local conformal completely rational 
nets to be modular (see \cite{Hu} for results of similar kind). 
Furthermore, if $\B
$ is a modular local conformal net and $\C
$ an irreducible extension of $\B
$, then $\C
$ is also modular \cite{KL05}, that allows to check 
the modularity property in several cases.

If $\B
$ is modular, the Kac-Wakimoto formula holds
\begin{equation}\label{KW1}
d(\r)=\frac{S_{\r,0}}{S_{0,0}}=
\lim_{\tau\to i\infty}\frac{\sum_\nu S_{\r,\nu}\chi_\nu(\tau)}
{\sum_\nu S_{0,\nu}\chi_\nu(\tau)}=
\lim_{\tau\to 0}\frac{\chi_\r(\tau)}{\chi_0(\tau)}.
\end{equation}
Here we denote the vacuum sector $\iota$ also by $0$.

Now, if $\B$ is a modular net, then $\B$ is two-dimensional 
log-elliptic with noncommutative area $a_0 = 2\pi c/24$ \cite{KL05}, 
indeed the following asymptotic formula holds:
\[
\log\Tr(e^{-2\pi tL_{0,\r}})\sim \frac{\pi c}{12}\frac1t + 
\frac{1}{2}\log\frac{d(\r)^2}{\mu_{\B
}}-
\frac{\pi c}{12}t\qquad {\rm as }\ t\to 0^+ \ .
\]
where $\r$ a representation of $\B$ and $L_{0,\r}$ is the conformal 
Hamiltonian in the representation $\r$.

\subsection{Fermi nets and modularity}
Assume that $\l$ is a graded irreducible general soliton of the Fermi conformal net $\A$ 
on $S^1$. Then $\l$ is the restriction of a representation of the double cover $\A^{(2)}$ of 
$\A$ and is diffeomorphism covariant. We denote by $H_{\l}$ the conformal Hamiltonian of 
$\A$ in the representation $\l$. 

Then $H_{\l}$ is affiliated to the Virasoro subnet in the representation $\l$, which is 
contained in the Bosonic subnet, so $H_{\l}$
commutes with $\Ga_{\l}$ and thus respects the graded decomposition 
$\H_\l = \H_{\l,+}\oplus\H_{\l,-}$ given by $\Ga_\l$; we then have a unitary equivalence
\[
H_{\l} \simeq L_{0,\r}\oplus L_{0,\r'}
\]
where $L_{0,\r}$ and $L_{0,\r'}$ are the conformal Hamiltonians of $\A_b$ 
in the  representations $\r$ and $\r'$, where $\l_b = \r\oplus\r'$. 
Consequently
\begin{equation}\label{strp}
\Str(e^{-tH_{\l}}) = \Tr(e^{-tL_{0,\r}}) - \Tr(e^{-tL_{0,\r'}})
\end{equation}
We also set
\[
\tilde H_{\l}\equiv H_{\l} - c/24,\quad\tilde L_{0,\r} \equiv 
L_{0,\r} - c/24\dots
\]
If $\A_b$ is modular we then have by formula
\begin{align}\label{m1}
\Str(e^{-2\pi t{\tilde H_{\l}}}) =&\sum_\nu
S_{\r,\nu}\Tr(e^{-2\pi\tilde L_{\r,\nu}/t}) -\sum_\nu 
S_{\r',\nu}\Tr(e^{-2\pi\tilde L_{\r',\nu}/t})\\
=& \sum_\nu
(S_{\r,\nu} - S_{\r',\nu})\Tr(e^{-2\pi\tilde L_{0,\nu}/t})\\
=& \sum_\nu
D_{\r,\nu}\Tr(e^{-2\pi\tilde L_{0,\nu}/t})
\end{align}
where $D_{\r,\nu}\equiv S_{\r,\nu} - S_{\r',\nu}$ as before and, by Prop. \ref{DS},  $D_{\r,\nu}= 2S_{\r,\nu}$ if $\nu$ is $\s$-Fermi, and zero otherwise.

\subsection{Fredholm and Jones index within superconformal invariance}
We shall say that a general representation $\l$ of the Fermi local conformal net $\A$ is 
\emph{supersymmetric}\footnote{A characteristic feature of supersymmetry is also that the graded derivation implemented by $Q_\l$ is densely defined in a appropriate sense. We shall not need this property in this paper.}
if $\l$ is graded and the conformal Hamiltonian $H_{\l}$ satisfies
\[
\tilde H_{\l} \equiv H_{\l} - c/24 = Q_{\l}^2
\]
where $Q_{\l}$ is a selfadjoint on $\H_{\l}$ which is odd w.r.t. the grading unitary $\Ga_{\l}$. 

Let then $\l$ be supersymmetric. An immediate important consequence is a positive lower bound for the energy:
\[
H_{\l}\geq c/24\ .
\]
Note that by McKean-Singer lemma $\Str(e^{-t(H_{\l} - c/24)})$ is constant thus, taking the limit as $t\to\infty$, we have
\begin{equation}\label{Windex}
\Str(e^{-t(H_{\l} - c/24)}) = \dim\ker(H_{\l} - c/24)\ ,
\end{equation}
the multiplicity of the lowest eigenvalue $c/24$ of $H_{\l}$.

Let $\r$ be one of the two irreducible components of $\l_b$ as above. We denote by $\mathfrak R$ the set of $\s$-Fermi irreducible sectors of $\A_b$, that correspond to the Ramond sectors of $\A$, see Prop. \ref{acov}. Note that $\mathfrak R \neq\emptyset$ as otherwise $\s$ would be a degenerate sector, which is not possible because the braided tensor category of DHR 
sectors in modular as a consequence of the complete rationality assumption.

Assume now that $\A_b$ is modular. We notice that formula \eqref{m1} can be written as
\begin{equation}\label{expmin}
\Str(e^{-2\pi t\tilde H_{\l}}) = 2\sum_{\nu\in\mathfrak R}
S_{\r,\nu}\Tr(e^{-2\pi\tilde L_{0,\nu}/t}) \ .
\end{equation}
\begin{corollary} We have
\[
\sum_{\nu\in\mathfrak R } S_{\r,\nu}d(\nu) = 0
\]
\end{corollary}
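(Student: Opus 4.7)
The plan is to equate the two expressions for $\Str(e^{-2\pi t \tilde H_{\lambda}})$ that arise from supersymmetry on one hand and modularity on the other, and then extract the corollary by studying the $t \to \infty$ asymptotic behavior.

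First, because $\lambda$ is supersymmetric, the McKean--Singer identity \eqref{Windex} tells us that
\[
\Str(e^{-2\pi t\tilde H_{\lambda}}) = \dim\ker(\tilde H_{\lambda})
\]
is a non-negative integer, independent of $t>0$. Combining this with \eqref{expmin} gives
\[
\dim\ker(\tilde H_{\lambda}) \;=\; 2\sum_{\nu\in\mathfrak R} S_{\r,\nu}\,\Tr(e^{-2\pi\tilde L_{0,\nu}/t})
\qquad \text{for all } t>0.
\]

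Next, I would invoke the log-elliptic asymptotic for the modular net $\A_b$ recalled above: as $s\to 0^+$,
\[
\Tr(e^{-2\pi s\tilde L_{0,\nu}}) \;\sim\; \frac{d(\nu)}{\sqrt{\mu_{\A_b}}}\,e^{\pi c/(12 s)}.
\]
Applying this with $s = 1/t$, dividing both sides of the previous identity by $e^{\pi c t/12}$, and letting $t\to\infty$, the left-hand side clearly tends to $0$, while the right-hand side tends, term-by-term, to
\[
\frac{2}{\sqrt{\mu_{\A_b}}}\sum_{\nu\in\mathfrak R} S_{\r,\nu}\,d(\nu).
\]
Since $\mathfrak R$ is a finite set (by complete rationality), there is no issue in interchanging limit and sum, and the conclusion
\[
\sum_{\nu\in\mathfrak R} S_{\r,\nu}\,d(\nu) = 0
\]
follows at once.

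The main technical point to be careful about is the exchange of limit and sum at $t\to\infty$, but since $\mathfrak R$ is finite this is automatic; the deeper input is really the modular asymptotic, which is already at our disposal in the modular case. Thus the whole proof is essentially a one-line asymptotic extraction, once the supersymmetric constancy of the supertrace and the character asymptotic are combined.
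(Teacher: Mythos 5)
Your proof is correct and follows essentially the same route as the paper: both arguments rest on the constancy of $\Str(e^{-2\pi t\tilde H_\l})$ from the McKean--Singer lemma together with the divergence of the characters on the right-hand side of \eqref{expmin} as $t\to\infty$. The only (cosmetic) difference is that you normalize by $e^{\pi c t/12}$ and invoke the log-elliptic asymptotic, whereas the paper divides by the vacuum character and invokes the Kac--Wakimoto formula \eqref{KW1}; these two facts are equivalent for this purpose and both are already recorded in the paper as consequences of modularity.
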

\proof
Divide by $\Tr(e^{-2\pi tL_0})$ both members of (\ref{expmin}); 
the statement then follows by the Kac-Wakimoto formula \eqref{KW1}.
\endproof
\begin{sublemma}
Let $A_{1}, A_{2},\dots A_{n}$ be selfadjoint operators such that $\Tr(e^{-sA_k})<\infty$  and $\sum_{k}c_{k}\Tr(e^{-sA_k})$ is a constant function of $s>0$, for some scalars $c_{k}$. Then 
\[
\lim_{s\to +\infty}\sum_{k}c_{k}\Tr(e^{-sA_k})=\sum_{k}c_{k}\Dim\ker A_k
\]
\end{sublemma}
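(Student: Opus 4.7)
The plan is to take the limit $s\to+\infty$ directly: since the sum is constant in $s$, the value we want is just the limit at infinity, and the point is to identify which terms survive.

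First I would observe that each hypothesis $\Tr(e^{-sA_k})<\infty$ for all $s>0$ forces $e^{-sA_k}$ to be compact, hence $A_k$ has purely discrete spectrum; moreover, if any sequence of eigenvalues of $A_k$ failed to tend to $+\infty$, $e^{-s\mu}$ would fail to tend to $0$ along that sequence, contradicting compactness. Thus for each $k$ the eigenvalues are bounded below and accumulate only at $+\infty$. In particular each $A_k$ has at most finitely many non-positive eigenvalues.

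Next I would collect the distinct real numbers $\mu_1<\mu_2<\cdots$ that occur as eigenvalues of some $A_k$, denote by $m_{k,j}$ the (possibly zero) multiplicity of $\mu_j$ in $A_k$, and set $a_j\equiv\sum_k c_k\, m_{k,j}$. Then
\[
f(s)\equiv\sum_k c_k\,\Tr(e^{-sA_k})=\sum_j a_j\, e^{-s\mu_j},
\]
with the series absolutely convergent since $\sum_j|a_j|e^{-s\mu_j}\leq\sum_k|c_k|\Tr(e^{-sA_k})<\infty$. For any fixed $M>0$ the tail over $\mu_j>M$ is dominated by $e^{-(s-s_0)M}\sum_{\mu_j>M}|a_j|e^{-s_0\mu_j}$, which tends to $0$ as $s\to+\infty$. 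So
\[
\lim_{s\to+\infty}\Bigl(f(s)-\sum_{\mu_j\leq 0}a_j\,e^{-s\mu_j}\Bigr)=0,
\]
the latter sum being finite (only finitely many non-positive $\mu_j$).

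Now I would use that $f$ is constant: writing the finite sum as $a_{j_0}+\sum_{\mu_j<0}a_j e^{s|\mu_j|}$ (with $j_0$ the index, if any, such that $\mu_{j_0}=0$), the existence and finiteness of $\lim_{s\to+\infty}\sum_{\mu_j<0}a_j e^{s|\mu_j|}$ forces every $a_j$ with $\mu_j<0$ to vanish, by the standard fact that a nonzero linear combination of distinct real exponentials $e^{s\beta_j}$ with $\beta_j>0$ is dominated by the term with the largest $\beta_j$ and therefore diverges. Thus the limit reduces to $a_{j_0}=\sum_k c_k\,\dim\ker A_k$ (interpreted as $0$ if no $\mu_j$ vanishes).

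The main subtlety is the last step: ruling out hidden cancellations among the negative-$\mu_j$ terms. The observation that there are only finitely many such $\mu_j$ (thanks to compactness of $e^{-sA_k}$) is what lets me invoke the elementary linear-independence/dominant-exponential argument to kill them one by one. The rest is routine bookkeeping with absolutely convergent series.
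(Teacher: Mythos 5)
Your proof is correct and follows essentially the same route as the paper's: both isolate the finitely many non-positive eigenvalues (via compactness of $e^{-sA_k}$), use the divergence of any nonzero finite combination of growing exponentials---forced here by the constancy of the sum---to kill the strictly negative spectral contribution, and identify the surviving constant term with $\sum_{k}c_{k}\Dim\ker A_k$. The only difference is organizational: you aggregate eigenvalues across all the $A_k$ and argue coefficient by coefficient, whereas the paper splits each $A_k$ into its non-negative and negative spectral parts, and you spell out the tail estimate that the paper leaves implicit.
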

\proof
Clearly $\lim_{s\to +\infty}\Tr(e^{-sA})=\Dim\ker A$ if $A$ is trace class and non-negative. The finite trace condition implies that $A_k$ bounded below.
Let $A_k = A^{+}_k + A^-_k$ with $A^{+}_k\geq 0$ and $A^{-}_k <0$. Then the function $\sum_{k}c_{k}\Tr(e^{-sA^{-}_k})$ is a linear combination of exponentials of the form $e^{as}$ with $a>0$, and vanishes at infinity, thus it must be identically zero. It follows that
\[
\sum_{k}c_{k}\Tr(e^{-sA_k})=\sum_{k}c_{k}\Tr(e^{-sA^+_k})\underset{s=\infty}{\longrightarrow}
\sum_{k}c_{k}\Dim\ker A^+_k = \sum_{k}c_{k}\Dim\ker A_k
\]
\endproof
\begin{lemma}\label{ex1}
Let $\A$ be a Fermi modular net. If $\l$ is supersymmetric then 
\begin{equation}\label{lim}
\Str(e^{-2\pi t\tilde H_\l}) = 2\sum_{\nu\in\mathfrak R} S_{\r,\nu}\fn(\nu,c/24)
\end{equation}
where $\fn(\nu,h)\equiv\Dim\ker(L_{0,\nu}- h)$.
\end{lemma}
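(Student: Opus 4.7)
The plan is to combine formula \eqref{expmin} with the constancy of the supertrace (a consequence of supersymmetry and McKean--Singer) and then pass to a limit via the Sublemma already stated.

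First, since $\l$ is supersymmetric, $\tilde H_\l = Q_\l^2$ with $Q_\l$ odd, so the McKean--Singer argument \eqref{Windex} gives that $\Str(e^{-2\pi t\tilde H_\l})$ is constant in $t>0$ (equal to $\dim\ker\tilde H_\l$). Since $\A_b$ is modular and $\l_b=\r\oplus\r'$ as in Proposition \ref{n1}, the identity \eqref{expmin} is available, so
\[
2\sum_{\nu\in\mathfrak R} S_{\r,\nu}\Tr(e^{-2\pi\tilde L_{0,\nu}/t})
\]
is constant in $t>0$. Setting $s=2\pi/t$, the function
\[
s\mapsto \sum_{\nu\in\mathfrak R}(2 S_{\r,\nu})\Tr(e^{-s\tilde L_{0,\nu}})
\]
is a constant of $s>0$. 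Note that the sum over $\mathfrak R$ is finite by complete rationality, and each trace $\Tr(e^{-s\tilde L_{0,\nu}})$ converges for $s>0$ because $\A_b$ is modular (hence two--dimensional log-elliptic, which in particular yields finiteness of $\Tr(e^{-sL_{0,\nu}})$ for $s>0$; multiplying by $e^{sc/24}$ accounts for the shift).

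Second, apply the Sublemma with the finite family of operators $A_\nu\equiv\tilde L_{0,\nu}$ (for $\nu\in\mathfrak R$) and scalars $c_\nu\equiv 2S_{\r,\nu}$. The Sublemma's hypotheses are met by what we just observed, so
\[
\lim_{s\to+\infty}\sum_{\nu\in\mathfrak R} 2S_{\r,\nu}\Tr(e^{-s\tilde L_{0,\nu}})
=\sum_{\nu\in\mathfrak R} 2S_{\r,\nu}\,\Dim\ker\tilde L_{0,\nu}
=2\sum_{\nu\in\mathfrak R} S_{\r,\nu}\,\fn(\nu,c/24),
\]
using that $\ker\tilde L_{0,\nu}=\ker(L_{0,\nu}-c/24)$. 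Because the function of $s$ is constant, its value for any $s>0$, hence for every $t>0$, coincides with this limit, yielding \eqref{lim}.

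The only real subtlety is the last step: some $\tilde L_{0,\nu}$ may have negative spectrum (when the lowest weight $h_\nu<c/24$), so one cannot simply take $t\to 0^+$ termwise in \eqref{expmin}. The Sublemma handles exactly this: the contributions of the negative spectral parts combine into a finite sum of genuinely growing exponentials, which must cancel identically because the overall sum is bounded (in fact constant) as $s\to+\infty$. This is where the modularity of $\A_b$ is essential, since it is what allows \eqref{expmin} to be written with a \emph{finite} sum and finite traces, and hence what makes the Sublemma applicable.
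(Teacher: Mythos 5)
Your proof is correct and follows essentially the same route as the paper: constancy of the supertrace via McKean--Singer, the modularity identity \eqref{expmin}, and the Sublemma applied to the resulting constant function of $s=2\pi/t$ to extract the kernel dimensions. The extra remarks you add (finiteness of $\mathfrak R$, convergence of the traces, and why the possibly negative spectrum of $\tilde L_{0,\nu}$ forces one to go through the Sublemma rather than taking $t\to 0^+$ termwise) are accurate elaborations of steps the paper leaves implicit.
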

\proof
The left hand side of  \eqref{expmin} is constant by McKean-Singer lemma, so we have that
$\sum_{\nu\in\mathfrak R}S_{\r,\nu}\Tr(e^{-2\pi s\tilde L_{0,\nu}})$ is a constant function of 
$s>0$. 
Thus  \eqref{lim} holds by the sublemma.
\endproof
With $\l$ as in the above lemma, by eq. \eqref{Windex} we have
\[
\Str(e^{-2\pi t\tilde H_{\l}}) = \ind(Q_{\l +}
)\ .
\]
Therefore by Lemma \ref{ex1} we have
\[
\ind(Q_{\l +}
) = 2\sum_{\nu\in\mathfrak R}  S_{\r,\nu}\fn(\nu,c/24)
\]
then, writing Rehren definition of the $S$ matrix, we have
\[
\ind(Q_{\l +}
) = 2\frac{d(\r)}{\sqrt{\mu_{\A_b}}}\sum_{\nu\in\mathfrak R}K(\r,\nu)
d(\nu)\fn(\nu,c/24)
\]
where $\mu_{\A_b}$ is the $\mu$-index of $\A_b$. By \cite{KLM} we have $\mu_{\A_b} = 4\mu_{\A}$  therefore:
\begin{theorem}\label{FJ}
Let $\A$ be a Fermi conformal net as above and $\l$ 
a supersymmetric irreducible representation of $\A$. Then
\[
\ind(Q_{\l +}
) = 
\frac{d(\r)}{\sqrt{\mu_\A}}\sum_{\nu\in\mathfrak R}K(\r,\nu)
d(\nu)\fn(\nu,c/24)
\]
where $\r$ is one of the two irreducible components of $\l_b$ and $\mu_\A$ is the $\mu$-index of $\A$.
\end{theorem}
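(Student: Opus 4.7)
The proof is essentially an assembly of the pieces already prepared in Lemma \ref{ex1}, Proposition \ref{DS}, and the index relations of \cite{KLM}. My plan is to start from the McKean-Singer identity \eqref{Windex}, which identifies $\ind(Q_{\l +})$ with the supertrace $\Str(e^{-2\pi t\tilde H_\l})$ (the value is independent of $t>0$, so I may evaluate it in whatever asymptotic regime is convenient). Since $\l$ is supersymmetric hence graded, Proposition \ref{n1} applies and gives the decomposition $\l_b = \r\oplus\r'$ with $\r'=\r\s$, which splits $\tilde H_\l$ as $\tilde L_{0,\r}\oplus\tilde L_{0,\r'}$ and yields the supertrace identity \eqref{strp}.

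Next, because $\A_b$ is modular, I apply the modular transformation of specialised characters \eqref{modularity} to each trace on the right-hand side of \eqref{strp}, obtaining the expansion \eqref{m1} in terms of the matrix $D_{\r,\nu}=S_{\r,\nu}-S_{\r',\nu}$. Proposition \ref{DS} then says $D_{\r,\nu}$ vanishes on $\s$-Bose sectors and equals $2S_{\r,\nu}$ on $\s$-Fermi sectors, reducing the sum to $\nu\in\mathfrak R$, which is precisely \eqref{expmin}.

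At this point the key analytic input is Lemma \ref{ex1}: since the left-hand side of \eqref{expmin} is constant in $t$, the sublemma lets me take $t\to\infty$ inside the sum on the right and collect only the kernel dimensions $\fn(\nu,c/24)$. Combining with McKean-Singer this produces
\[
\ind(Q_{\l +}) = 2\sum_{\nu\in\mathfrak R} S_{\r,\nu}\,\fn(\nu,c/24).
\]
Finally I substitute Rehren's definition \eqref{matS} of $S_{\r,\nu}$, writing $S_{\r,\nu}=\frac{d(\r)d(\nu)}{\sqrt{\mu_{\A_b}}}K(\r,\nu)$, and use the identity $\mu_{\A_b}=4\mu_\A$ from \cite{KLM} to absorb the overall factor of $2$ into the square root, yielding the claimed formula.

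There is no real obstacle once the preceding results are in hand; the only subtle point is the justification of the $t\to\infty$ limit carried out in the sublemma preceding Lemma \ref{ex1}, which crucially relies on supersymmetry (to ensure constancy of the supertrace via McKean-Singer) rather than on any direct termwise estimate. Everything else is algebraic bookkeeping with $S$-matrix entries and the $\mu$-index relation between $\A$ and $\A_b$.
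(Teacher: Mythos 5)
Your proposal is correct and follows essentially the same route as the paper: McKean--Singer \eqref{Windex}, the splitting $\l_b=\r\oplus\r'$ and \eqref{strp}, modularity giving \eqref{m1}--\eqref{expmin} via Proposition \ref{DS}, the sublemma/Lemma \ref{ex1} to pass to the $t\to\infty$ limit, and finally Rehren's formula for $S$ together with $\mu_{\A_b}=4\mu_\A$ to absorb the factor $2$. You also correctly single out the constancy of the supertrace (rather than any termwise estimate) as the point that makes the limit interchange legitimate, which is exactly the role the sublemma plays in the paper.
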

\noindent
In the above formula the Fredholm index of the supercharge operator $Q_{\l +}
$ is expressed 
by a formula involving the Jones index of the Ramond representations whose lowest eigenvalue $c/24$ modulo integers.
\begin{corollary} If $\ind(Q_{\l})\neq 0$ there exists a $\s$-Fermi sector $\nu$ such that  $c/24$  is an eigenvalue of $L_{0,\nu}$.
\end{corollary}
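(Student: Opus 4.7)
The plan is to deduce this statement as an immediate contrapositive consequence of Theorem \ref{FJ}. First I would recall that Theorem \ref{FJ} expresses the Fredholm index of the supercharge as a finite sum
\[
\ind(Q_{\l +}) = \frac{d(\r)}{\sqrt{\mu_\A}}\sum_{\nu\in\mathfrak R}K(\r,\nu)\, d(\nu)\,\fn(\nu,c/24)
\]
over the set $\mathfrak R$ of $\s$-Fermi irreducible sectors, where $\fn(\nu,c/24)=\Dim\ker(L_{0,\nu}-c/24)$ is a nonnegative integer.

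Next I would argue by contradiction (or equivalently by contrapositive): suppose no $\s$-Fermi sector $\nu$ has $c/24$ as an eigenvalue of $L_{0,\nu}$. Then $\fn(\nu,c/24)=0$ for every $\nu\in\mathfrak R$, so every term in the sum vanishes, hence $\ind(Q_{\l +})=0$. Since the prefactor $d(\r)/\sqrt{\mu_\A}$ is strictly positive, this yields the conclusion: if the Fredholm index is nonzero, at least one $\fn(\nu,c/24)$ must be strictly positive, i.e.\ there exists $\nu\in\mathfrak R$ with $c/24$ an eigenvalue of $L_{0,\nu}$.

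There is essentially no obstacle here — the corollary is a one-line consequence once Theorem \ref{FJ} is in hand. The only point worth noting is that the statement silently uses $\ind(Q_\l)=\ind(Q_{\l +})$, which is the standard convention when one speaks of the Fredholm index of an odd selfadjoint operator with respect to a $\mathbb Z_2$-grading, as already used in the discussion preceding the theorem.
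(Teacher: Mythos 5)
Your proof is correct and is exactly the argument the paper intends: the corollary is left without proof precisely because it is the immediate contrapositive of the index formula in Theorem \ref{FJ}, with all terms $\fn(\nu,c/24)$ vanishing forcing $\ind(Q_{\l +})=0$. Your remark on the identification $\ind(Q_\l)=\ind(Q_{\l+})$ is also consistent with the paper's conventions.
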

\begin{corollary} Suppose that, in Th. \ref{FJ}, $\r$ is the only Ramond sector with lowest eigenvalue $c/24$ modulo integers. Then 
\[
S_{\r,\r} = \frac{d(\r)^2}{\sqrt{\mu_{\A_b}}}K(\r,\r) = \frac12 \ .
\]
\end{corollary}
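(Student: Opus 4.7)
The plan is to equate two expressions for $\ind(Q_{\l+})$: the Jones-index formula given by Theorem \ref{FJ} and the direct Witten-index computation via the graded splitting $\l_b=\r\oplus\r'$. Under the uniqueness hypothesis both sides reduce to a nonzero multiple of $\fn(\r,c/24)$, and cancelling produces the formula.

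First I would check that $\r'=\r\s$ is itself a Ramond sector distinct from $\r$. Since the kernel of $\a$-induction is $\{\iota,\s\}$ we have $\a_{\r'}=\a_\r=\l$, which is Ramond, so Prop. \ref{acov} forces $\r'$ to be $\s$-Fermi as well; and $\r\neq\r'$ because $\s\neq\iota$. Thus $\r,\r'\in\mathfrak R$, and the uniqueness hypothesis applied to $\r'$ yields $\fn(\r',c/24)=0$. Theorem \ref{FJ} therefore collapses to the single term
\[
\ind(Q_{\l+}) = \frac{d(\r)^2}{\sqrt{\mu_\A}}\,K(\r,\r)\,\fn(\r,c/24)\ ,
\]
while eq. \eqref{strp} combined with the McKean--Singer identity \eqref{Windex} gives
\[
\ind(Q_{\l+}) = \lim_{t\to\infty}\Str(e^{-t\tilde H_\l}) = \fn(\r,c/24) - \fn(\r',c/24) = \fn(\r,c/24)\ .
\]

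Because $c/24$ is attained in the spectrum of $L_{0,\r}$ by hypothesis, $\fn(\r,c/24)\geq 1$ and we may divide, obtaining $d(\r)^2 K(\r,\r)/\sqrt{\mu_\A}=1$. Substituting $\sqrt{\mu_{\A_b}}=2\sqrt{\mu_\A}$ (from the relation $\mu_{\A_b}=4\mu_\A$ already used in the proof of Theorem \ref{FJ}) yields $d(\r)^2 K(\r,\r)/\sqrt{\mu_{\A_b}}=\tfrac12$, which equals $S_{\r,\r}$ by the definition \eqref{matS} of the Rehren matrix (the denominator $\sqrt{\sum_i d(\r_i)^2}$ being $\sqrt{\mu_{\A_b}}$ by complete rationality, and $K(\r,\r)=\Phi_\r(\e(\r,\r)^*\e(\r,\r)^*)$ by definition). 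The only delicate point is verifying that $\r'$ is a genuinely distinct Ramond sector subject to the uniqueness hypothesis; once that is in hand the remainder is arithmetic.
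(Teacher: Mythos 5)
Your proof is correct and follows essentially the same route as the paper's: both equate the one-term collapse of the index formula (you via Theorem \ref{FJ}, the paper directly via Lemma \ref{ex1}, which is the same identity before the Rehren definition of $S$ is substituted) with the direct evaluation of the Witten index as $\fn(\r,c/24)$. Your signed version $\fn(\r,c/24)-\fn(\r',c/24)$ is if anything the more careful rendering of that second computation, and your check that $\r'$ is a distinct Ramond sector killed by the uniqueness hypothesis is exactly the point the paper uses implicitly when it discards $\fn(\r',c/24)$.
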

\proof
By Lemma \ref{ex1} we have:
\[
\ind(Q_{\l +}
) = 2S_{\r,\r}\fn(\r,c/24)\ .
\]
On the other hand by formula \eqref{Windex} 
\[
\ind(Q_{\l +}
) = \dim\, {\rm ker}(H_\l - c/24) = \fn(\r,c/24) + \fn(\r',c/24)=  \fn(\r,c/24) 
\]
because $H_\l = L_{0,\r}\oplus L_{0,\r'}$ and $c/24$ is not in the spectrum of $L_{0,\r'}$, so we get our formula.
\endproof
\section{Super-Virasoro algebra and super-Virasoro nets}
We now focus on model analysis and shall consider the most basic superconformal nets, namely the ones associated with the super-Virasoro algebra.
\subsection{Super-Virasoro algebra}
The super-Virasoro algebra governs the superconformal invariance \cite{FQS,GKO}. It plays in the supersymmetric context the same r\^ole that the Virasoro algebra  plays in the local conformal case.

Strictly speaking, there are two super-Virasoro algebras. They are the super-Lie algebras generated by even 
elements $L_n$, $n\in\mathbb Z$, odd elements $G_r$, and a central even element $c$, 
satisfying the relations
\begin{gather}\label{svirdef}
    [L_m , L_n] = (m-n)L_{m+n} + \frac{c}{12}(m^3 - m)\de_{m+n, 0}\\
    \nonumber
    [L_m, G_r] = (\frac{m}{2} - r)G_{m+r}\\
    \nonumber
    [G_r, G_s] = 2L_{r+s} + \frac{c}{3}(r^2 - \frac14)\de_{r+s,0}
    \end{gather}
Here the brackets denote the super-commutator. 
In the \emph{Neveu-Schwarz} case $r\in\mathbb Z + 1/2$, while in the 
\emph{Ramond} case $r\in\mathbb Z$. We shall sometime use the term 
\emph{super-Virasoro algebra} to indicate either the Neveu-Schwarz 
algebra or the Ramond algebra. 

The point is that, although the Neveu-Schwarz algebra and the  Ramond algebra are not isomorphic graded Lie algebras, they are representations of a same object, in a sense that we shall later see (they have the same ``isomorphic completion").

By definition, the Neveu-Schwarz algebra and the Ramond algebra are both
extensions of the Virasoro algebra. 

The super-Virasoro algebra is equipped with the involution $L^*_n = 
L_{-n}$, $G^*_r = G_{-r}$, $c^* = c$ and we will be only interested in unitary 
representations on a Hilbert space,
i.e. representations preserving the involution. Note that unitary representations have automatically positive energy, namely $L_0\geq 0$. Indeed we have 
\begin{gather*}
  L_0 =  \frac12[G_{\frac12}, G_{-\frac12}] = \frac12\left(G_{\frac12}G_{\frac12}^* +
  G_{\frac12}^*G_{\frac12}\right)\geq 0\qquad (\text{NS case})\\
  L_0 =  \frac12[G_0, G_0] = G_0^2 + c/24 \geq c/24
\qquad (\text{R case})    
\end{gather*}
The unitary, lowest weight representations of the super-Virasoro 
algebra, namely the unitary representations of the super-Virasoro 
algebra on a Hilbert space $\H$ with a cyclic vector 
$\xi\in \H$ satisfying
\[
L_0\xi = h\xi, \quad L_n\xi = 0, \ n>0, \quad G_r\xi = 0,\ r>0,
\]
are studied in \cite{FQS,GKO}; in the NS case they are irreducible and uniquely determined 
by the values of $c$ and $h$. In the Ramond case one has to further specify the action of $G_0$ on the lowest energy subspace. It turns out that for a possible value of $c$ and $h$ there two inequivalent irreducible lowest weight representations (but for the case $c = h/24$ when the representation is unique and graded). Not also that for $c \neq h/24$ the direct sum of the two inequivalent irreducible representations has a cyclic lowest energy vector and it is the unique Ramond $(c,h)$ lowest weight representation where grading is implemented.

The possible values are either $c\geq 3/2$, $h\geq 0$ or
\begin{equation}\label{c-values}
c = \frac32 \left( 1 - \frac{8}{m(m+2)}\right), \ m=2,3,\ldots
\end{equation}
and
\[
h= h_{p,q}(c)\equiv\frac{[(m+2)p - mq]^2 - 4}{8m(m+2)} + \frac{\e}{8}
\]
where $p= 1,2,\ldots,m-1$, $q=1,2,\ldots, m+1$ and $p-q$ is even or 
odd corresponding to the Neveu-Schwarz case ($\e= 0$) or Ramond 
case ($\e = 1/2$).

Note that the Neveu-Schwarz algebra has a vacuum representation, 
namely a irreducible representation with $0$ as eigenvalue of $L_0$, the Ramond 
algebra has no vacuum representation. 
\subsection{Stress-energy tensor}
\label{set}
Let $c$ be an admissible value as above with $L_n$ $(n\in\mathbb Z)$, 
$n\in\mathbb Z$, $G_r$, the operators in 
the corresponding to a Neveu-Schwarz ($r\in\mathbb Z + \frac12$) or Ramond ($r\in\mathbb Z$)
representation. The Bose and Fermi stress-energy 
tensors are defined by
\begin{align}
T_B(z) =& \sum_{n} z^{-n-2} L_n\\
T_F(z) =& \frac12 \sum_r z^{-r-3/2} G_r
\end{align}
namely
\begin{equation}
  \text{Neveu-Schwarz case:}\begin{cases}
T_{B}(z) =& \sum_{n\in\mathbb Z} z^{-n-2} L_n\\
T_{F}(z) =& \frac12 \sum_{m\in\mathbb Z} z^{-m-2} G_{m+\frac12}
\end{cases}
\end{equation}
\begin{equation}  \text{Ramond case:}\begin{cases}
T_{B}(z) =& \sum_{n} z^{-n-2} L_n\\
T_{F}(z) =& \frac12 \sum_{m\in\mathbb Z} z^{-m-2} \sqrt{z} G_m
\end{cases}
\end{equation}
Let's now make a formal calculation for the (anti-)commutation relations of the Fermi stress energy tensor $T_F$. We want to show that, setting $w\equiv z_2 /z_1$, we have
\begin{equation}\label{comrel}
[T_F(z_1), T_F(z_2)]= \frac12 z_1^{-1}T_B(z_1) \de(w)
+ z_1^{-3}w^{-\frac32}\frac{c}{12}
\big(w^2\de''(w)  + \frac34\de(w)\big)
\end{equation}
both in the Neveu-Schwarz and in the Ramond case.

Setting $k= r=s\in\mathbb Z$ we have, say in the Ramond case, 
\begin{align*}
[T_F&(z_1), T_F(z_2)]= \frac14\sum_{r,s}[G_r, G_s]z_1^{-r-3/2}z_2^{-s-3/2}\\ 
&= \frac12\sum_{r,s}L_{r+s}z_1^{-r-3/2}z_2^{-s-3/2} 
+  \sum_r \frac{c}{12}\big(r^2 - \frac14\big)z_1^{-r-3/2}z_2^{r-3/2}\\
&=\frac12\sum_{r,k}L_{k}z_1^{-r-3/2}z_2^{-k+r-3/2}
 + z_1^{-3/2}z_2^{-3/2}\sum_r\frac{c}{12}\big(r^2 - \frac14\big)w^r\\
&= \frac12 z_1^{-1}\Big(\sum_k L_{k}z_2^{-k-2}\Big) w^{\frac12}\sum_r w^{r} + z_1^{-3}w^{-\frac32}\sum_r\frac{c}{12}\big(r^2 - \frac14\big)w^r\\
&=\frac12 z_1^{-1}T_B(z_2) w^{\frac12}
\sum_r w^r + z_1^{-3} w^{-\frac32}\sum_r\frac{c}{12}\big(r^2 - \frac14\big)w^r\\
&= \frac12 z_1^{-1}T_B(z_2) w^{\frac12}\de(w)
+ z_1^{-3}w^{-\frac32}\frac{c}{12}
\Big(w^2\de''(w)  + \big(w-\frac14\big)\de(w)\Big)\\
&= \frac12 z_1^{-1}T_B(z_1) \de(w)
+ z_1^{-3}w^{-\frac32}\frac{c}{12}
\Big(w^2\de''(w)  + \frac34\de(w)\Big)
\end{align*}
As
\[
\sum_{r\in\mathbb Z} \big(r^2 - \frac14\big)w^r 
=\sum_{r\in\mathbb Z +\frac12} \big(r^2 - \frac14\big)w^r
= w^2\de''(w)  + \frac34\de(w)
\]
the above calculation (by using equalities as $\de(w)\sqrt{w}=\de(w)$ and similar ones) shows that the commutation relations for the Fermi stress energy tensor $T_F$ and, analogously, the commutation relations for $T_B$ and $T_F$ are indeed the same in the Neveu-Schwarz case and in the Ramond case, namely they are representations of the same (anti-)commutation relations. This is basic reason to view the Neveu-Schwarz and Ramond algebras as different types of representations of a unique algebra. 

However the above calculation is only formal. To give it a rigorous meaning, 
and have convergent series, we have to smear the stress energy tensor with a 
smooth test function with support in an interval. We then arrive naturally to 
consider the net of von Neumann algebras of operators localised in intervals. 
In the case of central charge $c<3/2$ we shall see that Neveu-Schwarz and 
Ramond representations correspond to DHR and general 
solitons of the associated super-Virasoro net.

\subsection{Super-Virasoro nets}
We give here the definition of the super-Virasoro nets for all the 
allowed values of the central charge. 
We follow the strategy adopted in \cite{BS-M} for the case of Virasoro nets, 
cf. also \cite{Car04,loke}. An alternative construction in the case of the 
discrete series ($c<3/2)$ is outlined in Sect. \ref{SVirnet}. 

Let $\l$ be a unitary positive energy representation of the 
super-Virasoro algebra on a Hilbert space ${\mathcal H}_\l$. The 
corresponding conformal Hamiltonian $L_0$ is the self-adjoint operator on $\H_\l$ 
and we denote $\H_\l^\infty$ the dense subspace of smooth vectors for $L_0$, 
namely the subspace of vectors belonging to the domain of $L_0^n$ for all 
positive integers $n$.

The operators $L_n$,
$n\in {\mathbb Z}$ satisfy the linear energy bounds 
\begin{equation}
\label{e-boundsB}
\|L_n v\|\leq M (1+|n|^{\frac{3}{2}})\|(1+L_0)v\|, \quad v\in {\mathcal 
H}_\l^\infty, 
\end{equation}
for suitable constant $M>0$ depending on the central charge $c$, cf. 
\cite{BS-M,CW}. 
Moreover from the relations 
$[G_{-r}, G_r] = 2L_{0} + \frac{c}{3}(r^2 - \frac14)$, 
we find the energy bounds
\begin{equation}
\label{e-boundsF}
\|G_r v\|\leq (2+ \frac{c}{3}r^2)^{\frac12}\|(1+L_0)^{\frac12}v\|, \quad v\in 
{\mathcal H}_\l^\infty,
\end{equation}
where $r \in \ZZ+1/2$ (resp. $r \in \ZZ$) if $\l$ is a Neveu-Schwarz 
(resp. Ramond) representation. 
We now consider the vacuum representation of the super-Virasoro algebra with 
central charge $c$ and denote by $\H$ the corresponding Hilbert space and 
by $\Omega$ the vacuum vector, namely the unique (up to a phase) unit 
vector such that $L_0\Omega=0$.  

Let $f$ be a smooth function on $S^1$. It follows from the linear 
energy bounds in Eq. (\ref{e-boundsB}) and the fact that the Fourier 
coefficients
\begin{equation}
\hat{f}_n=\int_{-\pi}^\pi f(e^{i\theta})e^{-in\theta}\frac{{\rm d}\theta}{2\pi},\;
n\in \ZZ,
\end{equation}
are rapidly decreasing, that the smeared Bose stress-energy tensor  
\begin{equation}
T_B(f)= \sum_{n \in \ZZ}\hat{f}_nL_n
\end{equation}
is a well defined operator with invariant domain $\H^\infty$. Moreover, for 
$f$ real,  $T_B(f)$ is essentially self-adjoint on $\H^\infty$ (cf. \cite{BS-M})
and we shall denote again $T_B(f)$ its self-adjoint closure.  

Now let $f$ be a smooth function on $S^1$ whose support do not contains 
$-1$. Then also the coefficients  
\begin{equation}
\hat{f}_r=\int_{-\pi}^\pi f(e^{i\theta})e^{-ir\theta}\frac{{\rm d}\theta}{2\pi},\;
r\in \ZZ + \frac{1}{2},
\end{equation}
are rapidly decreasing and it follows from the energy bounds 
in Eq. (\ref{e-boundsF}) that the corresponding smeared Fermi stress-energy 
tensor 
\begin{equation}
T_F(f)=\frac{1}{2}\sum_{r \in \ZZ+\frac{1}{2}}\hat{f}_rG_r
\end{equation}
is also a well defined operator with invariant domain $\H^\infty$. Again, 
for $f$ real, $T_F(f)$ is essentially self-adjoint on $\H^\infty$ 
(cf. \cite{BS-M}) and we denote its self-adjoint closure by the same 
symbol. 

As in Sect. \ref{netR} we identify $\mathbb R$ with $S^1/ \{-1\}$ and
consider the family $\I_{\mathbb R}$ of nonempty, bounded, open intervals of
${\mathbb R}$ as a subset of the family $\I$ of intervals of $S^1$. We define
a net ${\rm SVir}_{c}$ of von Neumann algebras on $\RR$ by 
\begin{equation}
\label{gener1} 
{\rm SVir}_{c}(I) \equiv \{ e^{iT_B(f)}, e^{iT_F(f)}:f\in C^{\infty}(S^1) 
\;{\rm real},\, {\rm supp}f, \subset I\}'', \; I \in \I_\RR. 
\end{equation} 
Isotony is clear from the definition and we have to show that the net is graded 
local and covariant.  

We first consider graded locality. The spectrum of $L_0$ is 
contained in $\ZZ/2$ and hence the unitary operator $\Gamma = e^{i2\pi L_0}$ 
is an involution such that $\Gamma \Omega$.  
It is straightforward to check that if $f_1, f_2$ 
are smooth functions on $S^1$ and the support of $f_2$ does not contain $-1$ 
then $\Gamma T_B(f_1) = T_B(f_1)\Gamma$ and 
$\Gamma T_F(f_1) = -T_F(f_1) \Gamma$.  
Hence, $\gamma = {\rm Ad}\Gamma$ is a $\ZZ_2$-grading on the net 
${\rm SVir}_{c}$, namely 
$\Gamma {\rm SVir}_{c}(I) \Gamma^* ={\rm SVir}_{c}(I)$, for all $I\in 
\I_\RR$. Now let $I_1, I_2 \in \I_\RR$ be disjoint intervals and 
let $f_1, f_2$ be real smooth functions on $S^1$ with support in $I_1, 
I_2$ respectively. Then the operators $T_B(f_1)$ and $T_F(f_1)$ commute
with $ZT_B(f_2)Z^*$ and $ZT_F(f_2)Z^*$, $Z=\frac{1-i\Gamma}{1-i}$, on 
$\H^\infty$, cf. the (anti-)commutation relations in Sect. \ref{set}
(note that $Z\H^\infty=\H^\infty)$. Using the energy  bounds in 
Eq. (\ref{e-boundsB}) and Eq. (\ref{e-boundsF}) and the fact that 
$Z$ commutes with $L_0$ one can apply the argument in 
\cite[Sect. 2]{BS-M} to show that $e^{iT_B(f_1)}$ and $e^{iT_F(f_1)}$ 
commute with $Ze^{iT_B(f_2)}Z^*$ and $Ze^{iT_F(f_2)}Z^*$. It follows that
the net ${\rm SVir}_c$ is graded local, namely 
\begin{equation} 
{\rm SVir}_c(I_1) \subset Z{\rm SVir}_c(I_2)Z^*
\end{equation}  
whenever $I_1, I_2$ are disjoint interval in $\I_\RR$. 

We now discuss covariance. The crucial fact here is that the representation 
of the Virasoro algebra on $\H$ integrates to a strongly continuous 
unitary projective positive-energy representation of $\Diff^{(\infty)}(S^1)$ 
on $\H$ by \cite{GoWa,Tol99} which factors through $\Diff^{(2)}(S^1)$ because
$e^{i4\pi L_0}=1$. Hence there is a strongly continuous projective 
unitary representation $U$ of $\Diff^{(2)}(S^1)$ on $\H$ such that, 
for all real $f \in C^\infty(S^1)$ and all $x \in B(\H)$,  
\begin{equation}
U(\exp^{(2)}(tf))xU(\exp^{(2)}(tf))^*= 
e^{itT_B(f)}xe^{-itT_B(f)}, 
\end{equation}
where, $\exp^{(2)}(tf)$ denotes the lift to $\Diff^{(2)}(S^1)$ of the
one-parameter subgroup $\exp(tf)$ of $\Diff(S^1)$ generated by the 
(real) smooth vector field $f(e^{i\theta})\frac{{\rm d}}{{\rm d}\theta}$. Moreover, 
if $\theta \to r^{(2)}(\theta)$ is the lift to $\Diff^{(2)}(S^1)$ of the 
one-parameter subgroup of rotations in $\Diff(S^1)$ we have  
\begin{equation}
U(r^{(2)}(\theta))= e^{i\theta L_0},
\end{equation}
for all $\theta$ in $\RR$. 
The following properties of $U$ follow rather straightforwardly. 
\medskip

\noindent $(1)$ The restriction of $U$ to the subgroup 
$\Mob^{(2)} \subset \Diff^{(2)}(S^1)$ lifts to a unique strongly continuous 
unitary representation which we again denote by $U$. 
If $\exp^{(2)}(tf) \in \Mob^{(2)}$ for all $t\in \RR$, this 
unitary representation satisfies 
\begin{equation}
U(\exp^{(2)}(tf))=e^{itT_B(f)},\; U(\exp^{(2)}(tf))\Omega=\Omega,
\end{equation}
for all $t\in \RR$.   

\medskip

\noindent $(2)$  If the support of the real smooth function $f$ is contained 
in $I \in \I_\RR$ then 
\begin{equation}
\label{covTB}
U(g)e^{iT_B(f)}U(g)^*\in {\rm SVir}_c(\dot{g}I)
\end{equation}
for all $g \in \Diff^{(2)}(S^1)$ such that $\dot{g}I \in \I_\RR$. 
\medskip

\noindent $(3)$ For all $g \in \Diff^{(2)}(S^1)$ we have 
\begin{equation}
U(g)\Gamma U(g)^* = \Gamma
\end{equation}
for all $g \in \Diff^{(2)}(S^1)$.

\medskip

Note that property $(3)$ follows from the fact that $\Diff^{(2)}(S^1)$ is 
connected and $\Gamma^2=1$. 

We now consider the covariance properties of the Fermi stress-energy
tensor $T_F$. From the commutation relations in equations (\ref{svirdef})
we find,  
\begin{equation}
\label{smearedcommutator}
i[T_B(f_1), T_F(f_2)]v = T_F(\frac{1}{2}f'_1f_2-f'_2f_1)v, \; v\in 
\H^\infty
\end{equation}
where $f_1, f_2$ are real smoot functions such that 
${\rm supp} f_2 \subset I$ for some $I\in \I_\RR$ and, for any
$f\in C^{\infty}(S^1)$, 
$f'$ is defined by $f'(e^{i\theta})=\frac{d}{d\theta}f(e^{\theta})$.  
For any $g \in \Diff(S^1)$ consider the function 
$X_g:S_1 \to \RR$ defined by
\begin{equation}
X_g(e^{i\theta})= -i\frac{{\rm d}}{{\rm d}\theta}\log (ge^{i\theta}).
\end{equation}
Since $g$ is a diffeomorfism of $S^1$ preserving the orientation 
then $X_g(z)>0$ for all $z \in S^1$. Moreover $X_g \in C^\infty(S^1)$. 
Another straightforward consequence of the definition is that 
\begin{equation}
X_{g_1g_2}(z)=X_{g_1}(g_2z)X_{g_2}(z). 
\end{equation}
As a consequence the the family of continuous linear operators 
$\beta(g)$, $g\in \Diff (S^1)$  on the Fr\'echet space $C^\infty(S^1)$ 
defined by 
\begin{equation}
(\beta(g)f)(z)= X_g(g^{-1}z)^{\frac12}f(g^{-1}z)
\end{equation}
gives a strongly continuous representation of $\Diff (S^1)$ leaving 
the real subspace of real functions invariant. Moreover if 
$f_1, f_2\in C^\infty(S^1)$ are real then vector valued function 
$t \to \beta(\exp(tf_1))f_2$ is differentiable in $C^\infty(S^1)$ and
\begin{equation}
\label{actionderivative}
\frac{d}{dt}\beta(\exp(tf_1))f_2|_{t=0}=\frac{1}{2}f'_1f_2-f_1f'_2.
\end{equation}  
Now let ${\rm supp} f_2$ be a subset of some interval $I \in I_\RR$ 
and let $J_I\subset \RR$ be the connected component of $0$ in $\RR$ 
of the open set $\{t\in \RR:\exp(tf_1)I \in I_\RR \}$. Then, for any 
$v\in \H^\infty$ the function $J_I \ni t \to T_F(\beta(\exp(tf_1))f_2)v$ 
is differentiable in $\H$ and it follows from Eq. 
(\ref{smearedcommutator}) and Eq. (\ref{actionderivative}) 
that 
\begin{equation}
\label{dTF} 
\frac{d}{dt}T_F(\beta(\exp(tf_1))f_2)v|_{t=0}=i[T_B(f_1), T_F(f_2)]v.
\end{equation}
We now specialize to the case of M\"{o}bius transformations i.e. 
we assume that $\exp(tf_1) \in \Mob$ for all $t\in \RR$.   
The map $J_I \ni t \to v(t) \in \H$ given by 
\begin{equation}
v(t)=T_F(\beta(\exp(tf_1))f_2)U(\exp^{(2)}(tf_1))v
\end{equation} 
is well defined because, $U(\exp^{(2)}(-tf_1))\H^\infty = \H^\infty$ 
for all $t\in \RR$. Note also that $v(t)\in \H^\infty$ for all $t\in J_I$.
Now using Eq. (\ref{dTF}) and the energy bounds in Eq. 
(\ref{e-boundsF}) it can be shown that $v(t)$ is differentiable (in the 
strong topology of $\H$) and that it satisfy the following differential 
equation on $\H$ 
\begin{equation}
\frac{d}{dt}v(t)=iT_B(f_1)v(t).
\end{equation}
If follows that 
\begin{equation}
v(t)=U(\exp^{(2)}(tf_1))T_F(f_2)v
\end{equation}
and since $v \in \H^\infty$ was arbitrary we get, for all $t\in J_I$, 
the following equality of self-adjoint operators
\begin{equation}
U(\exp^{(2)}(tf_1))T_F(f_2)U(\exp^{(2)}(-tf_1))=T_F(\beta(\exp(tf_1))f_2). 
\end{equation}
Now, if we denote by $\U^{(2)}_I$ the connected component of the identity 
in $\Mob^{(2)}$
of the open set $\{g\in \Mob^{(2)}: gI \in \I_\RR \}$ it follows that 
\begin{equation}
\label{covTF}
U(g)T_F(f)U(g)^*=T_F(\beta(\dot{g})f),
\end{equation}
for any real smoot function on $S^1$ with ${\rm supp}f \subset I$
and all $g\in \U^{(2)}_I$. 

From Eq. (\ref{covTF}) and Eq. (\ref{covTB}) we have 
\begin{equation}
U(g){\rm SVir}_c (I)U(g)^* = {\rm SVir}_c(\dot{g}I)\; I\in \I_\RR,
\; g \in \U^{(2)}_I. 
\end{equation} 
Hence ${\rm SVir}_c$ extends to a M\"{o}bius covariant  net on 
$S^1$ satisfying graded locality, see Sect. \ref{coverNets}. Note that we 
have not yet shown that the vacuum vector $\Omega$ is cyclic and 
hence we still don't knew if the net satisfy all the requirements of 
Property 3 in the definition of M\"{o}bius covariant Fermi nets on
$S^1$ given in Sect. \ref{MobFermiNets}. 
We shall however prove the cyclicity of the vacuum as a part of the following 
theorem. 

\begin{theorem}    
${\rm SVir}_c$ is an irreducible Fermi conformal on $S^1$ for any of the 
allowed values 
of the central charge $c$.
\end{theorem}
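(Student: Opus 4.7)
The construction already establishes isotony, graded locality, and M\"obius covariance of $\mathrm{SVir}_c$, together with the existence of a projective unitary representation $U$ of $\Diff^{(2)}(S^1)$ implementing the correct action on the generators $T_B(f)$ via equation \eqref{covTB}. Thus the strategy has three remaining pieces: (a) cyclicity of $\Omega$ for $\bigvee_I \mathrm{SVir}_c(I)$, (b) uniqueness of $\Omega$ as a M\"obius-invariant vector (which will yield irreducibility and, via Prop.~\ref{irr}, the factor property), and (c) the full diffeomorphism covariance axiom, including the localisation condition $U(g)xU(g)^*=x$ for $g\in\Diff^{(2)}_I(S^1)$ and $x\in\mathrm{SVir}_c(I')$.

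For (a), I would follow the scheme of \cite{BS-M} adapted to the Fermi setting. The vacuum Hilbert space $\H$ is, by definition of the vacuum representation of the super-Virasoro algebra, the closure of the span of vectors $L_{n_1}\cdots L_{n_k}G_{r_1}\cdots G_{r_\ell}\Omega$ with $n_i<0$ and $r_j<0$ (half-integer in the NS sector, which is the relevant one for the vacuum module). Using the linear energy bounds \eqref{e-boundsB} and \eqref{e-boundsF} and the fact that $\Omega\in\H^\infty$ is an analytic vector for $L_0$, a standard argument shows that each such product vector lies in the closure of $\mathrm{SVir}_c(I)\Omega$ for any interval $I\subset\RR$ containing the supports of suitable smearing functions: one approximates $L_n$ and $G_r$ as strong limits of smeared stress-energy tensors $T_B(f)$, $T_F(f)$ supported in $I$, and then exponentiates. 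Hence $\Omega$ is cyclic for $\mathrm{SVir}_c(I)$ for every $I\in\I_\RR$, and in particular for the global algebra.

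For (b), the uniqueness of the vacuum rests on the structure of the irreducible vacuum representation of the super-Virasoro algebra, in which the kernel of $L_0$ is one-dimensional and spanned by $\Omega$. Since $U(r^{(2)}(\theta))=e^{i\theta L_0}$, any M\"obius-invariant vector must in particular be rotation invariant, hence lie in $\ker L_0=\CC\Omega$. Combined with (a), this provides Property 3 in the definition of a M\"obius covariant Fermi net, and then Prop.~\ref{irr} yields irreducibility and the type III$_1$ factor property of the local algebras.

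For (c), the projective representation $U$ of $\Diff^{(2)}(S^1)$ was already constructed so that \eqref{covTB} holds and, by the integration argument used to derive \eqref{covTF}, the analogous covariance holds for $T_F$ under all of $\Diff^{(2)}(S^1)$ (not only $\Mob^{(2)}$); here one upgrades the M\"obius argument by replacing $f_1$ with a general real smooth function, using \eqref{dTF} together with the energy bounds \eqref{e-boundsB}, \eqref{e-boundsF} to justify differentiating $t\mapsto T_F(\beta(\exp(tf_1))f_2)U(\exp^{(2)}(tf_1))v$ along an arbitrary one-parameter subgroup of $\Diff^{(2)}(S^1)$. Once this is in hand, $U(g)\mathrm{SVir}_c(I)U(g)^*=\mathrm{SVir}_c(\dot gI)$ follows by the same computation that gave \eqref{covTB}. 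For the localisation property, if $g\in\Diff^{(2)}_I(S^1)$ and $f$ is a real smooth function with $\mathrm{supp}\,f\subset I'$, then $\beta(\dot g)f=f$, so $U(g)$ commutes with $T_B(f)$ and $T_F(f)$ on $\H^\infty$ and hence with their exponentials; this gives $U(g)xU(g)^*=x$ for all $x\in\mathrm{SVir}_c(I')$ and completes the diffeomorphism covariance axiom.

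The main obstacle I expect is (a): rigorously passing from the algebraic cyclicity of $\Omega$ in the vacuum module to cyclicity at the level of bounded operators generated by the exponentials $e^{iT_B(f)}, e^{iT_F(f)}$. This requires careful use of the energy bounds and of the invariance of $\H^\infty$ under both the smeared fields and the exponentials, exactly as in \cite{BS-M,Car04,loke} for the Virasoro net, with the additional Fermi terms controlled by \eqref{e-boundsF}.
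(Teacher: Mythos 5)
Your overall skeleton (cyclicity of $\Omega$, uniqueness via $\ker L_0=\CC\Omega$, then diffeomorphism covariance) matches the paper's, but step (a) as you describe it has a genuine gap. You propose to "approximate $L_n$ and $G_r$ as strong limits of smeared stress-energy tensors $T_B(f)$, $T_F(f)$ supported in $I$." This cannot work as stated: a single Fourier mode corresponds to the test function $e^{ir\theta}$, which does not vanish on any open set, and for $f$ supported in a proper interval the smeared field $T_F(f)=\frac12\sum_r\hat f_r G_r$ always mixes all modes, so there is no localized sequence of test functions whose smeared fields converge to a single $G_r$. Nor can you invoke a Reeh--Schlieder argument to repair this, since that theorem presupposes the cyclicity you are trying to prove. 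The paper's proof supplies the missing mechanism: it sets $\K=\overline{\bigvee_I\mathrm{SVir}_c(I)\Omega}$, observes that $\K$ is rotation invariant (so the spectral projections $P_j$ of $L_0$ satisfy $P_j\K\subset\K\cap\H^\infty$ by averaging over rotations), and then extracts individual modes \emph{exactly}, not by approximation, via the identity $G_rP_j\K=\hat f_r^{-1}P_{j-r}T_F(f)P_j\K\subset\K$ for any localized $f$ with $\hat f_r\neq 0$. This shows $\L=\mathrm{span}_j P_j\K$ is invariant under the whole super-Virasoro algebra, hence dense because $\Omega\in\L$ is cyclic for the algebraic vacuum module; so $\K=\H$. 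Without the interposed spectral projections your argument does not go through.

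On step (c), your route also diverges from the paper's and is more delicate than you acknowledge: re-running the ODE argument for a general (non-M\"obius) flow $\exp(tf_1)$ requires knowing that $U(\exp^{(2)}(tf_1))$ preserves $\H^\infty$, which is nontrivial outside the M\"obius case, and your verification of the localisation axiom presupposes the covariance formula $U(g)T_F(f)U(g)^*=T_F(\beta(\dot g)f)$ for general $g$, i.e.\ the statement being proved. The paper sidesteps all of this: by \cite[Sect.~V.2]{loke} the exponentials $\exp^{(2)}(f)$ with $\mathrm{supp}\,f\subset I$ generate a dense subgroup of $\Diff^{(2)}_I(S^1)$, so $U(g)\in\mathrm{SVir}_c(I)$ for $g\in\Diff^{(2)}_I(S^1)$ by the very definition of the local algebras; the localisation property then follows from graded locality (note $U(g)$ is even), and full covariance follows by adapting \cite[Proposition~3.7]{Car04}. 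I would recommend adopting the paper's mechanisms for both (a) and (c).
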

\proof Since $\Omega$ is the unique (up to a phase) unit vector in 
the kernel of $L_0$, we only have to show that $\Omega$ is cyclic and that 
the strongly continuous positive-energy projective representation $U$
of $\Diff^{(2)}(S^1)$ defined above makes the net diffeomorphism 
covariant in the appropriate sense. We first show that $\Omega$ 
is cyclic for the net. Let $\K \subset \H$ be the closure 
of $\bigvee_{I\in \I} {\rm SVir}_c(I) \Omega$. 
We have to show that $\K =\H$.
Clearly $U(g)\K=\K$
for all $g \in \Mob^{(2)}$. It follows that if $j\in \ZZ /2$ 
$P_j$ is the orthogonal projection of $\H$ onto the kernel of $L_0-k1$ 
then $P_j \K \subset \K\cap \H^\infty$. Now let $r\in \ZZ+1/2$. Since the 
smooth functions on $S^1$ wose support does not contain the point $-1$ is 
dense in $L^2(S^1)$ we can find an interval $I\in \I_\RR$ and 
a real smooth function $f$ with ${\rm supp}f \subset I$ such that 
$\hat{f}_r \neq 0$. Since $T_F(f)P_j \K \subset \K$ we find 
\begin{equation}
G_rP_j\K=\frac{1}{\hat{f}_r }P_{j-r}T_F(f)P_j\K \subset \K. 
\end{equation}
A similar argument applies to the operators $L_n$, $n\in \ZZ$ 
and hence the linear span $\L$ of the subspaces 
$P_j\K$, $j\in \ZZ /2$ is invariant for the 
representation of the super-Virasoro algebra. Since $\Omega \in \L$ 
is cyclic for the latter representation it follows that $\L$ is 
dense in $\H$. Hence $\K = \H$ because $\L \subset \K$.  Hence 
${\rm SVir}_c$ is an irreducible M\"{o}bius covariant Fermi net 
on ${S^1}$.  

To show that ${\rm SVir}_c$ is diffeomorphism covariant we first observe 
that by \cite[Sect. V.2]{loke}  for any $I\in \I$ the group generated by 
diffeomorphisms of the form 
$\exp(f)$ with ${\rm supp}f \subset I$ is dense in $\Diff_I(S^1)$. It 
follows that the group generated by elements of the form 
$\exp^{(2)}(f)$ with ${\rm supp}f \subset I$ is dense in 
$\Diff^{(2)}_I(S^1)$. Hence, for any $I \in \I$ and any 
$g\in \in \Diff^{(2)}_I(S^1)$,  $U(g) \in {\rm SVir}_c(I)$ and, 
by graded locality, $U(g) \in {\rm SVir}(I')'$ because. Now, an 
adaptation of the argument in the proof of \cite[Proposition 3.7]{Car04}
shows that ${\rm SVir}_c$ is diffeomorphism covariant and the proof is 
complete. 
\endproof
\subsection{The discrete series of super-Virasoro nets}\label{SVirnet}
We shall now use the construction in \cite{GKO} to study ${\rm SVir}_{c}$ with $c < 3/2$ an admissible value.
First consider three real free Fermi fields in the NS representation. 
They define a graded-local net on $S^1$.  
This net coincides with $\F^{\hat{\otimes}3}=\F\hat{\otimes}\F\hat{\otimes}\F$ 
where $\F$ is the net generated by a single real free Fermi field in the NS representation 
(cf. \cite{Bock}) and $\hat{\otimes}$ denotes the graded tensor product. The net 
$\A_{\su2_2}$ embeds as a subnet of $\F^{\hat{\otimes}3}$. Actually, from the discussion in 
\cite[page 115]{GKO} we have
\[
\A_{\su2_2}=\F^{\hat{\otimes}3}_b\ .
\] 
Now consider the conformal net $\F_N$ (on the Hilbert space 
$\H_N$) given by $\F^{\hat{\otimes}3}\otimes \A_{\su2_N}$, $N$ positive integer.  

Consider now the the representation of the 
super-Virasoro algebra on $\H_N$ with central 
charge 
\begin{equation}
c_N=\frac{3}{2}\left(1-\frac{8}{(N+2)(N+4)}\right),
\end{equation} 
constructed in \cite[Sect. 3]{GKO} (coset construction). Then the corresponding 
stress energy-tensors $T_B$ and $T_F$ generate a family of von Neumann algebras on $\H_N$ 
as in eq. \eqref{gener1}. Using the energy bounds in Eq. (\ref{e-boundsB}) and Eq. 
(\ref{e-boundsF}) it can be shown that this family defines 
a Fermi subnet of $\F_N$ as in eq. \eqref{gener1} which can be identified with 
the super-Virasoro net ${\rm SVir}_{c_N}$.
In this way we obtain all the super-Virasoro nets corresponding to the discrete series.

Using \cite{GKO} we can identify these super-Virasoro nets as 
coset subnets. From the embedding  
\begin{equation}
\A_{\su2_2}\otimes \A_{\su2_N} \subset \F_N
\end{equation}
we have the embedding
\begin{equation}
\A_{\su2_{N+2}} \subset \F_N.
\end{equation}
It follows from Eq. (3.13) and the claim at the end of 
page 114 in \cite[Sect. 3]{GKO} that ${\rm SVir}_{c_N}$ is 
contained in the coset 
\begin{equation}
(\A_{\su2_{N+2}})^c = (\A_{\su2_{N+2}})'\cap \F_N.
\end{equation}
Moreover it follows from the branching rules in 
\cite[Eq. 4.15]{GKO} that these nets coincide (cf. \cite{KL1})
namely 
\begin{equation} 
{\rm SVir}_{c_N} = (\A_{\su2_{N+2}})^c. 
\end{equation}
As a consequence the Bose subnet ${\rm SVir}^0_{c_N}\equiv ({\rm SVir}_{c_N})_b$ of 
the super-Virasoro net ${\rm SVir}_{c_N}$ is equal to the coset

\begin{equation}
\label{BoseCoset}
(\A_{\su2_{N+2}})'\cap \left( \A_{\su2_2}\otimes \A_{\su2_N} \right)
\end{equation}
and hence, by \cite[Corollary 3.4]{X2} and \cite[Theorem 24]{L03} 
${\rm SVir}^0_{c_N}
$ is completely rational, see also \cite[Corollary 28]{L03}. 

Now we look at representations. We denote $(NS)$ and $(R)$ the 
Neveu-Schwartz and Ramond representations for three Fermion fields 
respectively. In $(NS)$ the lowest energy eigenspace is one-dimensional
(``nondegenerate vacuum ''), whilst 
in $(R)$ it is two-dimensional (``2-fold degenerate vacuum'').\footnote{Different Ramond 
representations could be defined corresponding to different choices of the corresponding representation of the Dirac algebra of the 0-modes on the subspace of  lowest energy vectors, 
cf. page 113 and page 115 of \cite{GKO}.} 

It is almost obvious that $(NS)$ corresponds to the vacuum 
representation $\pi_{NS}$ of $\F^{\hat{\otimes}3}$ and, 
arguing as in the proof of \cite[Lemma 4.3]{Bock}, it can be 
shown that $(R)$ corresponds to a general soliton  
$\pi_R$ of the latter net. 
Clearly $(NS)$ and $(R)$ restrict to positive-energy 
representations of $\su2_2$. We denote by 
$\pi_{(N,l)}$ the representation of $\A_{\su2_N}$ with spin $l$. 
At level $N$ the possible values of the spin are those satisfying 
$0\leq 2l \leq N$. Then the following identities hold (see \cite[page 
116]{GKO}): 

\begin{eqnarray}
\label{NSrest} 
\pi_{NS}|_{\A_{\su2_2}} = \pi_{(2,0)} \oplus \pi_{(2,1)} \\
\label{Rrest}
\pi_R|_{\A_{\su2_2}} = \pi_{(2,\frac{1}{2})}.
\end{eqnarray}   
Note that the restriction of $(R)$ remains irreducible because the 
grading automorphism is not unitarily implemented, cf. Prop. \ref{n1}. 

Denote by $(c_N,h_{p,q})_{NS}$, resp. $(c_N,h_{p,q})_{R}$, a 
NS, resp. R, irreducible representation of super-Virasoro algebra with central 
charge $c_N$ and lowest energy 
$$h_{p,q}=\frac{\left[(N+4)p-(N+2)q\right]^2 -4}{8(N+2)(N+4)},$$
resp. 
$$h_{p,q}=\frac{\left[(N+4)p-(N+2)q\right]^2 -4}{8(N+2)(N+4)} + 
\frac{1}{16},$$
where $p=1,2,\dots,N+1$, $q=1,2, \dots N+3$ and $p-q$ is even in the 
NS case and odd in the R case. 

As already mentioned, in the NS case, for every value of the central charge, the lowest energy
$h_{p,q}$ completely determines the (equivalence class of) the 
representation. In contrast for a given values of the central charge 
and of the lowest energy $h_{p,q}$ there are two Ramond representations 
one with 
\[
G_0\Psi_{h_{p,q}}= \sqrt{h_{p,q} - \frac{c_N}{24}}\Psi_{h_{p,q}}
\]
and the other with
\[
G_0\Psi_{h_{p,q}}= -\sqrt{h_{p,q} - \frac{c_N}{24}}\Psi_{h_{p,q}},
\]
where $\Psi_{h_{p,q}}$ is the lowest energy vector. These two 
representations are connected by the automorphism 
$G_r \to - G_r$ and become equivalent when restricted to the even 
(Bose) subalgebra. Accordingly $(c_N,h_{p,q})_{R}$ denotes indifferently these 
two representations which are clearly inequivalent when 
$h_{p,q} \neq \frac{c_N}{24}$. 

For a given $N$ the equality $h_{p,q} = h_{p',q'}$ when 
$p-q$ and $p'-q'$ are both even or odd hold if and only 
if $p'=N+2-p$ and $q'= N+4 - q$. Note also that it may happen 
that $h_{p,q} = h_{p',q'}$ when $p-q$ is even and $p'-q$ 
is odd. For example, if $N=2$ then $h_{2,2} = h_{1,2}= 1/16$. 
Accordingly there are values of $N$ for which a given value 
of the lowest energy corresponds to three distinct irreducible 
representations of super-Virasoro algebra: one NS representation and two
R representations.  

From \cite[Section 4]{GKO} we can conclude that there exist DHR representations 
$\pi_{h_{p,q}}^{NS}$, $p-q$ even, and general solitons 
$\pi_{h_{p,q}}^{R}$, $p-q$ odd, of ${\rm SVir}_{c_N}$ (associated to the representations 
$(c_N,h_{p,q})_{NS}$, resp. $(c_N,h_{p,q})_{R}$ of the of super-Virasoro algebra)
such that  
\begin{equation}
\label{GKOnetsNS}
\left( \pi_{NS} \otimes \pi_{(N,\frac{1}{2}[p-1])}\right)
|_{\A_{\su2_{N+2}}\otimes {\rm SVir}_{c_N}} =
\bigoplus_q \pi_{(N+2, \frac{1}{2}[q-1])} \otimes \pi_{h_{p,q}}^{NS},
\end{equation}
$1\leq q \leq N+3$, $p-q$ even, and
\begin{equation}
\label{GKOnetsR}
\left( \pi_{R} \otimes \pi_{(N,\frac{1}{2}[p-1])}\right)
|_{\A_{\su2_{N+2}}\otimes {\rm SVir}_{c_N}} =
\bigoplus_q \pi_{(N+2, \frac{1}{2}[q-1])} \otimes \pi_{h_{p,q}}^{R},
\end{equation}
$1\leq q \leq N+3$, $p-q$ odd. 
                                                                                
We now denote $\rho_{h_{p,q}}^{NS}$,  resp. $\rho_{h_{p,q}}^{R}$, 
the restriction of $\pi_{h_{p,q}}^{NS}$, resp 
$\pi_{h_{p,q}}^{R}$, to ${\rm SVir}^0_{c_N}$. 

In the representation space of $\pi_{h_{p,q}}^{NS}$ the grading is always 
unitarily implemented and hence we have the direct sum 
$$\rho_{h_{p,q}}^{NS}= \rho_{h_{p,q}}^{NS+} \oplus\rho_{h_{p,q}}^{NS-}$$ 
of two (inequivalent) irreducible representations corresponding to the 
eigenspaces with eigenvalues 1 and -1 of the grading operator 
respectively. 

In contrast in the case of $\pi_{h_{p,q}}^{R}$ the grading automorphism 
is unitarily implemented only if $h_{p,q}=c_N/24$. This happens  
if and only if $N$ is even and $p=(N+2)/2$, $q=(N+4)/2$. 
In this case $\pi_{\frac{c_N}{24}}^R$ is a supersymmetric 
general representation of the Fermi conformal net ${\rm SVir}_{c_N}$.
Moreover we have the decomposition into irreducible (inequivalent) 
subrepresentations $$\rho_{\frac{c_N}{24}}^R= \rho_{\frac{c_N}{24}}^{R+}\oplus 
\rho_{\frac{c_N}{24}}^{R-}.$$ In the remaining cases 
$\rho_{h_{p,q}}^{R}$ is irreducible. 

Restricting Eq. (\ref{GKOnetsNS} ) and Eq. (\ref{GKOnetsR} ) to the Bose 
elements and using Equations (\ref{NSrest}), (\ref{Rrest})  we get 
\begin{equation}
\label{GKOnetsNSb+}
\left( \pi_{(2,0)} \otimes \pi_{(N,\frac{1}{2}[p-1])}\right)
|_{\A_{\su2_{N+2}}\otimes {\rm SVir}^0_{c_N}
} =
\bigoplus_q \pi_{(N+2, \frac{1}{2}[q-1])} \otimes \rho_{h_{p,q}}^{NS+},
\end{equation}
\begin{equation}
\label{GKOnetsNSb-}
\left( \pi_{(2,1)} \otimes \pi_{(N,\frac{1}{2}[p-1])}\right)
|_{\A_{\su2_{N+2}}\otimes {\rm SVir}^0_{c_N}} =
\bigoplus_q \pi_{(N+2, \frac{1}{2}[q-1])} \otimes \rho_{h_{p,q}}^{NS-},
\end{equation}
$1\leq q \leq N+3$, $p-q$ even, and 
\begin{equation}
\label{GKOnetsRb}
\left( \pi_{(2,\frac{1}{2})} \otimes \pi_{(N,\frac{1}{2}[p-1])}\right)
|_{\A_{\su2_{N+2}}\otimes {\rm SVir}^0_{c_N}} =
\bigoplus_q \pi_{(N+2, \frac{1}{2}[q-1])} \otimes \rho_{h_{p,q}}^{R},
\end{equation}
$1\leq q \leq N+3$, $p-q$ odd.     

Now, recalling the identification of ${\rm SVir}^0_{c_N}$ as a 
coset in Eq. (\ref{BoseCoset} ), it 
follows from \cite[Corollary 3.2]{X2} that every irreducible 
DHR representation of this net is equivalent to one of those considered 
before, namely $\rho_{h_{p,q}}^{NS+}$, $\rho_{h_{p,q}}^{NS-}$
and $\rho_{h_{p,q}}^{R}$ ($h_{p,q}\neq c_N/24$), $\rho_{\frac{c_N}{24}}^{R+}$
and $\rho_{\frac{c_N}{24}}^{R-}$.  
\subsection{Modularity of local super-Virasoro nets}
We state here explicitly the modularity of the Bose subnet super-Virasoro nets for $c<3/2$. 
In this case the Bose super-Virasoro net can be obtained as the coset \eqref{BoseCoset}. 
Then the Rehren $S$ and $T$ matrices as been computed by Xu in \cite[Sect.2.2.]{X3} 
(see also Sect. \ref{classification} below). These matrices agree with those in \cite{GW} and 
\cite{FSS} giving modular transformations of specialised characters. Accordingly we have 
the following. 
\begin{theorem} 
For a positive even integer $N$ then ${\rm SVir}^0_{c_N}$ is a modular conformal net. 
\end{theorem}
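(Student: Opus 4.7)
The plan is to verify directly the two ingredients defining a modular conformal net as spelled out in Sect.~5.1: first, complete rationality of ${\rm SVir}^0_{c_N}$, and second, the equality of the Rehren $S$ and $T$ matrices for its DHR category with the Kac--Peterson--Verlinde matrices that implement the modular transformations \eqref{modularity} of the specialized characters.

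Complete rationality has already been established in the course of constructing the net. Indeed, by \eqref{BoseCoset} the Bose super-Virasoro net is realized as the coset $(\A_{\su2_{N+2}})' \cap (\A_{\su2_2}\otimes\A_{\su2_N})$, and the cited results of Xu \cite{X2} and Longo \cite{L03} give finiteness of the $\mu$-index together with the split property. Consequently the Rehren $S$ and $T$ matrices are finite unitary matrices providing a representation of $SL(2,\mathbb Z)$ on the finite-dimensional span of the irreducible DHR sectors $\rho_{h_{p,q}}^{NS\pm}$, $\rho_{h_{p,q}}^{R}$ (for $h_{p,q}\neq c_N/24$), together with the two irreducible pieces $\rho_{c_N/24}^{R\pm}$ which occur precisely because $N$ is even.

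For the second condition I would proceed by two parallel explicit computations. On the character side, the branching identities \eqref{GKOnetsNSb+}, \eqref{GKOnetsNSb-} and \eqref{GKOnetsRb} exhibit each specialized character of an irreducible DHR sector of ${\rm SVir}^0_{c_N}$ as a branching function for the inclusion $\A_{\su2_{N+2}}\otimes{\rm SVir}^0_{c_N}\subset\F_N$, hence as a finite combination of $SU(2)_k$ specialized characters at $k=2,N,N+2$ and free-Fermion characters, all of whose $SL(2,\mathbb Z)$-transformation laws are classical; assembling them produces precisely the modular matrices of \cite{GW,FSS}. On the Rehren side, Xu \cite{X3} has computed the $S$ and $T$ matrices of the same coset explicitly by reducing them to products and sums of the Rehren matrices of the three $SU(2)_k$ factors, which in turn agree with the Kac--Peterson matrices for $SU(2)_k$ by Wassermann \cite{W2}.

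The main obstacle is then the sector-by-sector matching of these two families of explicit formulas; essentially this matching is already carried out in \cite{X3} and consists in checking that the coset labels and the labels of the branching functions correspond. Once the equality is established, condition \eqref{modularity} holds by construction and ${\rm SVir}^0_{c_N}$ is modular. The evenness of $N$ will enter both to produce the split pair $\rho_{c_N/24}^{R\pm}$ that completes the basis of sectors, and to ensure the minimal-model character formulas invoked from \cite{GW,FSS} are those applicable to the present coset.
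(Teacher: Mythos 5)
Your proposal follows essentially the same route as the paper: realize ${\rm SVir}^0_{c_N}$ as the coset \eqref{BoseCoset} (whence complete rationality via \cite{X2} and \cite{L03}), invoke Xu's computation of the Rehren $S$ and $T$ matrices for this coset in \cite{X3}, and match them against the matrices of \cite{GW} and \cite{FSS} implementing the modular transformations of the specialized characters. The only quibble is a citation slip — the agreement of the Rehren and Kac--Peterson matrices for the $SU(2)_k$ factors is due to Wassermann's fusion computations, not to \cite{W2} — but this does not affect the argument.
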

\section{Classification of superconformal nets in the discrete series}
\label{classification}
By a \emph{superconformal net} (of von Neumann algebras on $S^1$) 
we shall mean a Fermi net on $S^1$ that contains
a super-Virasoro net as irreducible subnet.
If the central charge $c$ of a superconformal net
is less than $3/2$, it is of the form
$c=\displaystyle\frac{3}{2}\left(1-\frac{8}{m(m+2)}\right)$
for some $m=3,4,5,\dots$ \cite{FQS}.  We classify all such superconformal
nets.

\subsection{Outline of classification}
As above, we denote the super Virasoro net with central charge $c$
and its Bosonic part by $\SVir_c$ and $\SVir^0_c$, respectively.
We are interested in the case $c<3/2$.  In this case, we have
$c=\displaystyle\frac{3}{2}\left(1-\frac{8}{m(m+2)}\right)$
for some $m=3,4,5,\dots$, and we have already seen that in this case
the local conformal net $\SVir^0_c$ is realised as a coset
net for the inclusion $SU(2)_m\subset SU(2)_{m-2}\otimes SU(2)_2$.
This net is completely rational in the sense of \cite{KLM}
by \cite{X2}.  The DHR sectors of the local conformal net $\SVir^0_c$ is
described as follows by \cite[Section 2.2]{X3}.
Label the DHR sectors of the local conformal
nets $SU(2)_m$, $SU(2)_{m-2}$, and $SU(2)_2$ by $k=0,1,\dots, m$,
$j=0,1,\dots,m-2$, and $l=0,1,2$, respectively.  Then we consider the
triples $(j,k,l)$ with $j-k+l$ being even.
For $l=0,2$, we have identification
$$(j,k,l)\leftrightarrow(m-2-j,m-k,2-l),$$
thus it is enough to consider the triples $(j,k,0)$ with $j-k$
being even.  Each such triple labels an irreducible DHR sector
of the coset net $\SVir^0_c$.  For the case $l=1$, we also have
identification
$$(j,k,l)\leftrightarrow(m-2-j,m-k,2-l),$$
but if we have a fixed point for this symmetry, that is,
if $m$ is even, then the fixed point $(2/m-1, m/2,1)$ splits
into two pieces, $(2/m-1, m/2,1)_+$ and $(2/m-1, m/2,1)_-$.
All of these triples, with this identification
and splitting, label all the irreducible DHR sectors of
the coset net $\SVir^0_c$.  The sectors with $l=0$ and $l=1$
are called Neveu-Schwarz and Ramond sectors, respectively.

The conformal spin of the sector $(j,k,l)$ is given by
$$\exp\left(\frac{\pi i}{2}\left(\frac{j(j+2)}{m}-
\frac{k(k+2)}{m+2}+\frac{l(l+2)}{4}\right)\right).$$
(This also works for the case $(j,k,l)=(2/m-1, m/2,1)$.) 

For example, if $m=3$, we have six
irreducible DHR sectors and they are labelled with triples
$(0,0,0)$, $(0,2,0)$, $(1,1,0)$, $(1,3,0)$, $(0,3,1)$, $(1,2,1)$.
(This local conformal net is equal to the Virasoro net with
$c=7/10$.)
For $m=4$, we have 13 irreducible DHR sectors and they are
labelled with
$(0,0,0)$, $(0,2,0)$, $(0,4,0)$, $(1,1,0)$, $(1,3,0)$,
$(2,0,0)$, $(2,2,0)$, $(2,4,0)$, $(0,3,1)$, $(1,4,1)$,
$(2,3,1)$, $(1,2,1)_+$, $(1,2,1)_-$, where
the two labels $(1,2,1)_+$, $(1,2,1)_-$ arise from the fixed
point $(1,2,1)$ of the symmetry of order 2.

For all $m$, the irreducible DHR sector $(m-2,m,0)$ has a
dimension 1 and a spin $-1$.  The superconformal net
$\SVir_c$ arises as a non-local extension of $\SVir^0_c$
as a crossed product by $\Z_2$ using identity and this sector.

Let $\A$ be any superconformal net on the circle with $c<3/2$.
Then let $\B$ be its Bosonic part.  By a similar argument to that
in \cite[Proposition 3.5]{KL1}, we know that the local conformal
net $\B$ is an irreducible extension of the local conformal
net $\SVir^0_c$, where $c$ is the central charge of $\A$.
By the strategy in \cite{KL1} based on \cite{BEK1},
we know that the dual canonical endomorphism $\th$ of an extension
is of the form $\th=\sum_{\la}Z_{0,\la}\la$, where $Z$ is the
modular invariant arising from the extension as in \cite{BEK1}.
Cappelli \cite{Ca} gave a list of type I modular invariants and
conjectured that it is a complete list.  From his list,
it is easy to guess that the dual canonical endomorphisms
we use for obtaining extensions are those listed
in Table \ref{dual-can}.
(Cappelli also considered type II modular invariants, but they
do not correspond to local extensions, so we ignore them here.)
\begin{table}[htbp]
\begin{center}
\begin{tabular}{|c|l|l|c|}\hline
& $m$ & $\th$ & Label
\\ \hline
(1) & any $m$ & $\th=(0,0,0)$ & $(A_{m-1},A_{m+1})$ \\ \hline
(2) & $m=4m'$ & $\th=(0,0,0)\oplus(0,m,0)$ & $(A_{4m'-1}, D_{2m'+2})$ 
\\ \hline
(3) & $m=4m'+2$ & $\th=(0,0,0)\oplus(m-2,0,0)$ & $(D_{2m'+2}, A_{4m'+3})$ 
\\ \hline
(4) & $m=10$ & $\th=(0,0,0)\oplus(0,6,0)$ & $(A_9, E_6)$ \\ \hline
(5) & $m=12$ & $\th=(0,0,0)\oplus(6,0,0)$ & $(E_6, A_{13})$ \\ \hline
(6) & $m=28$ & $\th=(0,0,0)\oplus(0,10,0)\oplus(0,18,0)\oplus(0,28,0)$
& $(A_{27},E_8)$ \\ \hline
(7) & $m=30$ & $\th=(0,0,0)\oplus(10,0,0)\oplus(18,0,0)\oplus(28,0,0)$
& $(E_8, A_{31})$ \\ \hline
(8) & $m=10$ & $\th=(0,0,0)\oplus(0,6,0)\oplus(8,6,0)\oplus(8,6,0)$
& $(D_6, E_6)$ \\ \hline
(9) & $m=12$ & $\th=(0,0,0)\oplus(6,0,0)\oplus(0,12,0)\oplus(6,12,0)$
& $(E_6, D_8)$ \\ \hline
\end{tabular}
\caption{List of candidates of the dual canonical endomorphisms}
\label{dual-can}
\end{center}
\end{table}
We will prove that each of the dual canonical endomorphisms 
in Table \ref{dual-can} gives a local extension of $\SVir^0_c$ 
in a unique way and that an arbitrary such local extension of
$\SVir^0_c$ gives one of the dual canonical endomorphisms 
in Table \ref{dual-can}.

\subsection{Study of type I modular invariants} 
We study type I modular invariants for the coset
nets for the inclusions $SU(2)_m\subset SU(2)_{m-2}\otimes SU(2)_2$.

First we recall the $S$ and $T$ matrices for $SU(2)_m$.
For $j,k=0,1,2,\dots,m$, we have the following.
\begin{eqnarray*}
S^{(m)}_{jk}&=&\sqrt{\frac{2}{m+2}}\sin \pi\frac{(j+1)(k+1)}{m+2},\\
T^{(m)}_{jk}&=&\de_{jk}\exp\frac{\pi i}{2}
\left(\frac{(j+1)^2}{m+2}-\frac12\right).
\end{eqnarray*}
For odd $m$, we have no problem arising from a fixed point of
the order two symmetry, and in this case, the modular invariants
have been already classified by Gannon-Walton \cite{GW}, which
shows that the identity matrix is the only modular invariant.  So
we have no non-trivial extensions in these cases.

So we now deal with the case of even $m$ in the rest of this
section and put $m=2m_0$.  In this case, the
$S$-matrix of the modular tensor category of the irreducible
DHR-sectors of the coset net is already not so easy to
obtain, and it has been computed by Xu \cite{X3}.

As in the previous section, we label the irreducible DHR
sectors of the coset net for the inclusion
$SU(2)_m\subset SU(2)_{m-2}\otimes SU(2)_2$ with
triples $(j,k,l)$ with $j-k+l$ being even, except for the
case $(m/2-1, m/2, 1)$.  For this fixed point of the order
two symmetry, we use labels $\phi_1=(m/2-1, m/2, 1)_+$, 
$\phi_2=(m/2-1, m/2, 1)_-$ to denote the two
irreducible DHR sectors.  We also use the symbols
$S^{(m-2)}$, $S^{(m)}$, $S^{(2)}$ and $S$ for the $S$-matrices
for the nets $SU(2)_{m-2}$, $SU(2)_m$, $SU(2)_2$ and the coset
for $SU(2)_m\subset SU(2)_{m-2}\otimes SU(2)_2$, respectively.
Then Xu's computation for the matrix
$S$ in \cite{X3} gives the following, where $n=1,2$.
\begin{enumerate}
\item $S_{(j,k,l),(j',k',l')}=2S^{(m-2)}_{jj'} \overline{S^{(m)}_{kk'}}
S^{(2)}_{ll'}$ for $(j,k,l), (j',k',l')\neq \phi_{1,2}$.
\item $S_{(j,k,l),(j',k',l'),\phi_n}=S_{\phi_n,(j,k,l),(j',k',l')}=
S^{(m-2)}_{j,m/2-1} \overline{S^{(m)}_{k,m/2}}
S^{(2)}_{l1}$ for $(j,k,l)\neq \phi_{1,2}$.
\item $S_{\phi_n,\phi_{n'}}=\delta_{nn'}+(S^{(m-2)}_{m/2-1,m/2-1}
\overline{S^{(m)}_{m/2,m/2}}S^{(2)}_{11}-1)/2$.
\end{enumerate}
The $T$-matrix of the coset is described as follows more easily.
\begin{enumerate}
\item $T_{(j,k,l),(j',k',l')}=T^{(m-2)}_{jj'} \overline{T^{(m)}_{kk'}}
T^{(2)}_{ll'}$ for $(j,k,l), (j',k',l')\neq \phi_{1,2}$.
\item $T_{(j,k,l),\phi_n}=T_{\phi_n,(j,k,l)}=
T^{(m-2)}_{j,m/2-1} \overline{T^{(m)}_{k,m/2}}
T^{(2)}_{l1}$ for $(j,k,l)\neq \phi_{1,2}$.
\item $T_{\phi_n,\phi_n'}=\delta_{nn'}
T^{(m-2)}_{m/2-1,m/2-1} \overline{T^{(m)}_{m/2,m/2}}
T^{(2)}_{11}$.
\end{enumerate}
Suppose we have a modular invariant $Z$ for the coset
for $SU(2)_m\subset SU(2)_{m-2}\otimes SU(2)_2$.
We define a new matrix  $\tilde Z_{(j,k,l),(j',k',l')}$ where
the triples $(j,k,l)$ and $(j',k',l')$ satisfy
$j,j'\in \{0,1,\dots,m-2\}$,
$k,k'\in \{0,1,\dots,m\}$,
$l,l'\in \{0,1,2\}$.  Note that we have no identification
or splitting for the triples $(j,k,l)$ and $(j',k',l')$ here.
\begin{enumerate}
\item If $j-k+l, j'-k'+l'\in 2\Z$ and
$(j,k,l), (j',k',l')\neq(m/2-1,m/2,1)$, then
we set $\tilde Z_{(j,k,l),(j',k',l')}=Z_{(j,k,l),(j',k',l')}$.
\item If $j-k+l\in 2\Z$ and
$(j,k,l)\neq(m/2-1,m/2,1)$, then
we set $\tilde Z_{(j,k,l),(m/2-1,m/2,1)}=
Z_{(j,k,l),\phi_1}+Z_{(j,k,l),\phi_2}$.
\item If $j'-k'+l'\in 2\Z$ and
$(j',k',l')\neq(m/2-1,m/2,1)$, then
we set $\tilde Z_{(m/2-1,m/2,1),j'k'l'}=
Z_{\phi_1,(j',k',l')}+Z_{\phi_2,j'k'l'}$.
\item $\tilde Z_{(m/2-1,m/2,1),(m/2-1,m/2,1)}=
Z_{\phi_1,\phi_1}+Z_{\phi_1,\phi_2}+Z_{\phi_2,\phi_1}+Z_{\phi_2,\phi_2}$.
\item If $j-k+l\notin 2\Z$ or $j'-k'+l'\notin 2\Z$, then
we set $\tilde Z_{(j,k,l),(j',k',l')}=0$.
\end{enumerate}
This construction is analogous to the one of ${\cal L}^{cw}$
in \cite[page 178]{GW}, but now
unlike in \cite{GW}, our map $Z\mapsto \tilde Z$ may not be 
injective because of the definition of $\tilde Z$ involving
the row/column labelled with $(m/2-1,m/2,1)$.

For triples $(j,k,l)$ and $(j',k',l')$ satisfying
$j,j'\in \{0,1,\dots,m-2\}$,
$k,k'\in \{0,1,\dots,m\}$,
$l,l'\in \{0,1,2\}$, we set as follows.
(We do not impose the conditions $j-k+l, j'-k'+l'\in 2\Z$ here.)
\begin{enumerate}
\item $\tilde S_{(j,k,l),(j',k',l')}=S^{(m-2)}_{jj'}
\overline{S^{(m)}_{kk'}} S^{(2)}_{ll'}$.
\item $\tilde T_{(j,k,l),(j',k',l')}=T^{(m-2)}_{jj'}
\overline{T^{(m)}_{kk'}} T^{(2)}_{ll'}$.
\end{enumerate}
We also write $\phi=(m/2-1,m/2,1)$ and set
$$I=\{(j,k,0)\mid j+k\in2\Z\}\cup
\{(j,k,1)\mid j+k\notin2\Z, j+k<m-1\},$$
which gives representatives of
$\{(j,k,l)\mid j-k+l\in2\Z\}\setminus\{\phi\}$ under the
order 2 symmetry.

We now have the following proposition.
\begin{proposition}
If the matrix $Z$ is a modular invariant for $S, T$, then
the matrix $\tilde Z$ is a modular invariant for $\tilde S, \tilde T$.
\end{proposition}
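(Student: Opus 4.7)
\proof
The strategy is to reduce the modular invariance of $\tilde Z$ on $\mathcal N\times\mathcal N$ to the coset identities $ZS=SZ$, $ZT=TZ$ by exploiting the order-two symmetry $J(j,k,l)=(m-2-j,m-k,2-l)$, whose orbits on $\mathcal N_0=\{a:j-k+l\in 2\mathbb Z\}$, together with the splitting of the unique fixed point $\phi=(m/2-1,m/2,1)$ into $\phi_1,\phi_2$, parametrise the irreducible DHR-sectors of the coset.

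First I would make explicit the $J$-transformation rules for $\tilde S$ and $\tilde T$. Using the elementary identity $S^{(n)}_{n-j,k}=(-1)^k S^{(n)}_{j,k}$ together with its analogues for $\overline{S^{(m)}}$ and $S^{(2)}$, one finds $\tilde S_{Ja,b}=(-1)^{j'+k'+l'}\tilde S_{a,b}$, where $b=(j',k',l')$; by symmetry of $\tilde S$ the same sign appears for $\tilde S_{a,Jb}$ with roles reversed. In particular $\tilde S_{Ja,b}=\tilde S_{a,b}$ when $b\in\mathcal N_0$, while for $a\notin\mathcal N_0$ the row $\tilde S_{a,\cdot}$ is $J$-anti-invariant on $\mathcal N_0$, which in particular forces $\tilde S_{a,\phi}=0$. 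A parallel calculation using $T^{(n)}_{n-j,n-j}=i^n(-1)^j T^{(n)}_{jj}$ gives $\tilde T_{Ja,Ja}=(-1)^{j+k+l}\tilde T_{a,a}$, so that the diagonal of $\tilde T$ is $J$-invariant on $\mathcal N_0$. Simultaneously, I would extend $\tilde Z$ from the representative set $I\cup\{\phi\}$ to all of $\mathcal N_0\times\mathcal N_0$ by $J$-invariance in each argument, as forced by the very definition in cases (1)--(4) since every $J$-orbit in $\mathcal N_0\setminus\{\phi\}$ has a unique representative in $I$.

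With these preparations, the identity $(\tilde Z\tilde S)_{a,c}=(\tilde S\tilde Z)_{a,c}$ is immediate when $a\notin\mathcal N_0$: the left-hand side vanishes because $\tilde Z_{a,\cdot}=0$, and the right-hand side vanishes because the row $\tilde S_{a,\cdot}$ is $J$-anti-invariant on $\mathcal N_0$ whereas $\tilde Z_{\cdot,c}$ is $J$-invariant there. For $a,c\in\mathcal N_0$ I would group the sum $\sum_b\tilde Z_{a,b}\tilde S_{b,c}$ by $J$-orbits in $\mathcal N_0$, obtaining
\[
(\tilde Z\tilde S)_{a,c}=\sum_{b\in I}2\,\tilde Z_{a,b}\tilde S_{b,c}+\tilde Z_{a,\phi}\tilde S_{\phi,c}\ ,
\]
and match this against $(ZS)_{\bar a,\bar c}$ (when $\bar a,\bar c\in I$), against $(ZS)_{\bar a,\phi_1}+(ZS)_{\bar a,\phi_2}$ (when $\bar c=\phi$), or against $\sum_{n,n'=1}^{2}(ZS)_{\phi_n,\phi_{n'}}$ (when $\bar a=\bar c=\phi$). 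The matching uses Xu's relations $S_{b,b'}=2\tilde S_{b,b'}$ for $b,b'\in I$, $S_{b,\phi_n}=\tilde S_{b,\phi}$, together with the key identity $S_{\phi_n,\phi_1}+S_{\phi_n,\phi_2}=\tilde S_{\phi,\phi}$, which follows at once from $S_{\phi_n,\phi_{n'}}=\delta_{nn'}+(\tilde S_{\phi,\phi}-1)/2$. Since the same case analysis applies to $\tilde S\tilde Z$ versus $SZ$, the hypothesis $ZS=SZ$ yields $\tilde Z\tilde S=\tilde S\tilde Z$ on the representative set, and hence everywhere by $J$-invariance.

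The $T$-commutation is easier: $\tilde T$ is diagonal with $\tilde T_{a,a}=T_{\bar a,\bar a}$ for $a\in\mathcal N_0$ (using $\tilde T_{\phi,\phi}=T_{\phi_1,\phi_1}=T_{\phi_2,\phi_2}$ at the fixed point and $J$-invariance of the diagonal on $\mathcal N_0$), so the condition $(\tilde T_{a,a}-\tilde T_{c,c})\tilde Z_{a,c}=0$ reduces entry-by-entry to $(T_{\bar a,\bar a}-T_{\bar c,\bar c})Z_{\bar a,\bar c}=0$, which is part of $ZT=TZ$. The main technical point is the careful bookkeeping around $\phi$: the identity $S_{\phi_n,\phi_1}+S_{\phi_n,\phi_2}=\tilde S_{\phi,\phi}$ is precisely what makes the two split-sector contributions recombine into the single entry $\tilde Z_{\cdot,\phi}$ of the unfolded invariant, and without it the factor-of-$2$ weights at the orbits in $I$ would not balance correctly against the contribution of the split fixed point.
\endproof
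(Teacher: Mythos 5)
Your proof is correct and follows essentially the same route as the paper's: both reduce $\tilde Z\tilde S=\tilde S\tilde Z$ to $ZS=SZ$ via the relations $S_{b,b'}=2\tilde S_{b,b'}$, $S_{b,\phi_n}=\tilde S_{b,\phi}$ and $S_{\phi_n,\phi_1}+S_{\phi_n,\phi_2}=\tilde S_{\phi,\phi}$, using the sign rules of $\tilde S$ under the order-two symmetry to kill the odd rows/columns and summing over $n,n'$ to recombine the split fixed point. The paper merely unpacks your orbit-summation into an explicit seven-case analysis; the content is identical.
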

\begin{proof}
It is clear that the entries of the matrix $\tilde Z$ are
nonnegative integers and we have $\tilde Z_{(0,0,0), (0,0,0)}=1$.
We thus have to prove $\tilde Z \tilde S=\tilde S \tilde Z$ and
$\tilde Z \tilde T=\tilde T \tilde Z$.
The latter is clear from the definition, so we have to prove
the former identity at $(j,k,l)$, $(j',k',l')$.
We deal with seven cases separately.

Case (1).  $j-k+l\notin2\Z$, $j'-k'+l'\notin2\Z$.
The identity is trivial because the both hand sides are zero.

Case (2).  $j-k+l\in2\Z$, $j'-k'+l'\notin2\Z$.
The right hand side is obviously zero by the definition of $\tilde Z$,
and the following three identities now prove that the left hand
side is also zero.
\begin{eqnarray*}
\tilde S_{\phi,(j',k',l')}&=&0,\\
\tilde Z_{(j,k,l),(j'',k'',l'')}&=&
\tilde Z_{(j,k,l),(m-2-j'',m-k'',2-l'')},\\
\tilde S_{(j'',k'',l''),(j',k',l')}&=&
-\tilde S_{(m-2-j'',m-k'',2-l''),(j',k',l')}.
\end{eqnarray*}
Note that the third identity holds because $j'-k'+l'\notin2\Z$.

Case (3).  $j-k+l\notin2\Z$, $j'-k'+l'\in2\Z$.
This case can be proved in the same way as above.

Case (4).  $j-k+l\in2\Z$, $j'-k'+l'\in2\Z$,
$(j,k,l)\neq \phi$, $(j',k',l')\neq \phi$.
The identity $ZS=SZ$ gives the following identity.
\begin{eqnarray*}
&&\sum_{(j'',k'',l'')\in I} 2Z_{(j,k,l),(j'',k'',l'')}
\tilde S_{(j'',k'',l''),(j',k',l')}+
Z_{(j,k,l),\phi_1}\tilde S_{\phi_1,(j',k',l')}+
Z_{(j,k,l),\phi_2}\tilde S_{\phi_2,(j',k',l')}\\
&=&\sum_{(j'',k'',l'')\in I} 2S_{(j,k,l),(j'',k'',l'')}
Z_{(j'',k'',l''),(j',k',l')}+
\tilde S_{(j,k,l),\phi} Z_{\phi_1,(j',k',l')}+
\tilde S_{(j,k,l),\phi} Z_{\phi_2,(j',k',l')}.
\end{eqnarray*}
Since we now have
\begin{eqnarray*}
\tilde S_{(j'',k'',l''),(j',k',l')}&=&
\tilde S_{(m-2-j'',m-k'',2-l''),(j',k',l')},\\
\tilde S_{(j,k,l),(j'',k'',l'')}&=&
\tilde S_{(j,k,l),(m-2-j'',m-k'',2-l'')}
\end{eqnarray*}
we conclude $(\tilde Z\tilde S)_{(j,k,l),(j',k',l')}
=(\tilde Z\tilde Z)_{(j,k,l),(j',k',l')}$.

Case (5).  $j-k+l\in2\Z$, $(j,k,l)\neq \phi$, $(j',k',l')=\phi$.
The identity $ZS=SZ$ gives the following identity for $n=1,2$.
\begin{eqnarray*}
&&\sum_{(j'',k'',l'')\in I} Z_{(j,k,l),(j'',k'',l'')}
\tilde S_{(j'',k'',l''),\phi}+
Z_{(j,k,l),\phi_1} S_{\phi_1,\phi_n}+
Z_{(j,k,l),\phi_2} S_{\phi_2,\phi_n}\\
&=&\sum_{(j'',k'',l'')\in I} 2\tilde S_{(j,k,l),(j'',k'',l'')}
Z_{(j'',k'',l''),\phi_n}+
\tilde S_{(j,k,l),\phi} Z_{\phi_1,\phi_n}+
\tilde S_{(j,k,l),\phi} Z_{\phi_2,\phi_n}.
\end{eqnarray*}

Adding these two equalities for $n=1,2$, we obtain the following
identity, which gives $(\tilde Z\tilde S)_{(j,k,l),\phi}
=(\tilde Z\tilde Z)_{(j,k,l),\phi}$.
\begin{eqnarray*}
&&\sum_{(j'',k'',l'')\in I} Z_{(j,k,l),(j'',k'',l'')}
\tilde S_{(j'',k'',l''),\phi}+
(Z_{(j,k,l),\phi_1}+Z_{(j,k,l),\phi_2}) \tilde S_{\phi,\phi}\\
&=&\sum_{(j'',k'',l'')\in I} 2\tilde S_{(j,k,l),(j'',k'',l'')}
(Z_{(j'',k'',l''),\phi_1}+Z_{(j'',k'',l''),\phi_2})+\\
&&\quad\quad\quad\quad\tilde S_{(j,k,l),\phi}
(Z_{\phi_1,\phi_1}+Z_{\phi_1,\phi_2}+Z_{\phi_2,\phi_1}+Z_{\phi_2,\phi_2}).
\end{eqnarray*}

Case (6).  $(j,k,l)=\phi$, $j'-k'+l'\in2\Z$, $(j',k',l')\neq \phi$.
This case can be proved in the same way as above.

Case (7).  $(j,k,l)=(j',k',l')=\phi$.
The identity $ZS=SZ$ gives the following identity for $n=1,2$, $n'=1,2$.
\begin{eqnarray*}
&&\sum_{(j'',k'',l'')\in I} Z_{\phi_n,(j'',k'',l'')}
\tilde S_{(j'',k'',l''),\phi}+
Z_{\phi_n,\phi_1} S_{\phi_1,\phi_{n'}}+
Z_{\phi_n,\phi_2} S_{\phi_2,\phi_{n'}}\\
&=&\sum_{(j'',k'',l'')\in I} \tilde S_{\phi,(j'',k'',l'')}
Z_{(j'',k'',l''),\phi_{n'}}+
S_{\phi_n,\phi_1} Z_{\phi_1,\phi_{n'}}+
S_{\phi_n,\phi_2} Z_{\phi_2,\phi_{n'}}.
\end{eqnarray*}
Adding these four identities for $n,n'=1,2$, we obtain the following
identity, which gives $(\tilde Z\tilde S)_{\phi,\phi}
=(\tilde Z\tilde Z)_{\phi,\phi}$.
\begin{eqnarray*}
&&\sum_{(j'',k'',l'')\in I} 
2(Z_{\phi_1,(j'',k'',l'')}+Z_{\phi_2,(j'',k'',l'')})
\tilde S_{(j'',k'',l''),\phi}+\\
&&\quad\quad\quad\quad
(Z_{\phi_1,\phi_1}+Z_{\phi_1,\phi_2}+Z_{\phi_2,\phi_1}+Z_{\phi_2,\phi_2})
\tilde S_{\phi,\phi}\\
&=&\sum_{(j'',k'',l'')\in I} 2\tilde S_{\phi,(j'',k'',l'')}
(Z_{(j'',k'',l''),\phi_1}+Z_{(j'',k'',l''),\phi_2})+\\
&&\quad\quad\quad\quad\tilde S_{\phi,\phi}
(Z_{\phi_1,\phi_1}+Z_{\phi_1,\phi_2}+Z_{\phi_2,\phi_1}+Z_{\phi_2,\phi_2})
\end{eqnarray*}
\end{proof}
We now recall Gannon's parity rule \cite[page 696]{G}.
For $n$ and $j$, we set $\e_{2n}(j)$ as follows.
\begin{enumerate}
\item If $j\equiv 0,n$ mod $2n$, then $\e_{2n}(j)=0$.
\item If $j\equiv 1,2,\dots,n-1$ mod $2n$, then $\e_{2n}(j)=1$.
\item If $j\equiv n+1,n+2,\dots,2n-1$ mod $2n$, then $\e_{2n}(j)=-1$.
\end{enumerate}
Set $$L=\{n\mid (n,8m(m+2))=1\}.$$
Then the parity rule is the following.
If we have a modular invariant $\tilde Z$ for $\tilde S$ and $\tilde T$
with $tilde Z_{000,jkl}\neq0$, then
the parity rule says that we have
$$\e_{2m}(n)\e_{2m+4}(n)\e_{8}(n)=
\e_{2m}(n(j+1))\e_{2m+4}(n(k+1))\e_{8}(n(l+1))$$
for all $n\in L$.

Now suppose $m_0=m/2$ is odd.
For any $n_1$ with $(n_1,4m_0)=1$, choose $p\in\{1,-1\}$ so
that $n_1 \equiv p$ mod $4$.  Then it is easy to see that
there exists $n\in L$ satisfying
$n\equiv n_1$ mod $4m_0$,
$n\equiv p$ mod $4(m_0+1)$,
$n\equiv p$ mod $8$, since we have $(m_0, m_0+1)=1$.
Suppose $tilde Z_{000,jkl}\neq0$.  Then the parity rule says
$$\e_{2m}(n_1)\e_{2m+4}(p)\e_{8}(p)=
\e_{2m}(n_1(j+1))\e_{2m+4}(p(k+1))\e_{8}(p(l+1)),$$
which gives $\e_{2m}(n_1)=\e_{2m}(n_1(j+1))$
for all $n_1$ with $(n_1,2m)=1$.  If $m_0\neq 3,5,15$, then
we have $j=0,m-2$ by \cite{G}.  Using the order two symmetry
$(j,k,l)\leftrightarrow(m-2-j,m-k,2-l)$ and the
definition of $\tilde Z$,
we conclude that our dual canonical endomorphism $\th$ has a 
decomposition within $\{(0,k,l)\mid -k+l\in2\Z\}.$
Let ${\cal T}_1$ be the tensor category of the representations of
$SU(2)_m$ with the opposite braiding and ${\cal T}_2$ be the
tensor category of the representations of $SU(2)_2$ with the
usual braiding.  If $\th$ gives a local extension for the
tensor category generated by $\{(0,k,l)\mid -k+l\in2\Z\}$, then
it also gives a local extension for the tensor category
${\cal T}_1 \times {\cal T}_2$ by the description of the
braided tensor category having
the simple objects $\{(0,k,l)\mid -k+l\in2\Z\}$ by
\cite[Section 4.3]{X1},
\cite[Proposition 2.3.1, Proof of Theorem B]{X5}.
(Note that we have the notion
of a local extension as a local $Q$-system in 
\cite{LR1} for an abstract tensor category.)  Then this extension
for ${\cal T}_1 \times {\cal T}_2$ gives a type I modular invariant
for the $S$- and $T$-matrices arising from ${\cal T}_1 \times {\cal T}_2$.
Then Gannon's classification of modular invariants for
$SU(2)_m\otimes SU(2)_2$ in \cite[Section 7]{G} shows that
our $\th$ must be one of Table \ref{dual-can}.
(Here we have an opposite braiding for ${\cal T}_1$, but this
does not matter since we can consider a modular invariant
$Z_{k'l,kl'}$ instead of $Z_{kl,k'l'}$ as in \cite[page 178]{GW}.)

Next suppose $m_0=m/2$ is even.
For any $n_2$ with $(n_2,4(m_0+1))=1$, choose $p\in\{1,-1\}$ so
that $n_2 \equiv p$ mod $4$.  Then it is easy to see that
there exists $n\in L$ satisfying
$n\equiv p$ mod $4m_0$,
$n\equiv n_2$ mod $4(m_0+1)$,
$n\equiv p$ mod $8$, since we have $(m_0, m_0+1)=1$.
Suppose $\tilde Z_{(0,0,0),(j,k,l)}\neq0$.  Then the parity rule says
$$\e_{2m}(p)\e_{2m+4}(n_2)\e_{8}(p)=
\e_{2m}(p(j+1))\e_{2m+4}(n_2(k+1))\e_{8}(p(l+1)),$$
which gives $\e_{2m+4}(n_2)=\e_{2m+4}(n_2(k+1))$
for all $n_2$ with $(n_2,4(m_0+1))=1$.  If $m_0\neq 2,4,14$, then
we have $k=0,m$ by \cite{G}.  Using the order two symmetry
$(j,k,l)\leftrightarrow(m-2-j,m-k,2-l)$ and the
definition of $\tilde Z$ again,
we conclude that our dual canonical endomorphism $\th$ has a 
decomposition within $\{(j,0,l)\mid j+l\in2\Z\}$.
If $\th$ gives a local extension for the
tensor category generated by $\{(0,k,l)\mid -k+l\in2\Z\}$, then
it also gives a local extension for the tensor category
of the representations $SU(2)_{m-2}\otimes SU(2)_2$ by
the description of the braided tensor category having
the simple objects $\{(j,0,l)\mid j+l\in2\Z\}$ by
\cite[Section 4.3]{X1},
\cite[Proposition 2.3.1]{X6}.  Then this extension
gives a type I modular invariant
for the $S$- and $T$-matrices arising from $SU(2)_{m-2}\otimes SU(2)_2$,
so Gannon's classification of modular invariants for
$SU(2)_{m-2}\otimes SU(2)_2$ in \cite[Section 7]{G} again shows that
our $\th$ must be one of Table \ref{dual-can}.

We now deal with the remaining exceptional cases $m=4,6,8,10,28,30$.
If $\tilde Z_{(0,0,0),(j,k,l)}\neq0$ and $j-k+l\in2\Z$, then the
conformal spin of $(j,k.l)$ is 1.  This condition already
gives the restriction that the dual canonical endomorphisms
$\th$ has a decomposition within
$$\{(j,k,0)\mid j+l\in2\Z\}$$
for $m=4,6,8,10,28$.  Again Gannon's classification of modular invariants
in \cite{G} shows that our $\th$ must be one of Table \ref{dual-can}.

The final remaining case is $m=30$.  In this case, 
the only sectors $(j,k,l)$ with $l=1$ and $j-k+l\in2\Z$
having conformal spins 1 are
$(3,12,1)$ and $(13,18,1)$.
The parity rules with $n=13, 7$
exclude these two sectors from $\th$, respectively.  Thus again
the dual canonical endomorphism $\th$ has a decomposition within
$$\{(j,k,0)\mid j+l\in2\Z\}$$
and Gannon's classification in \cite{G} shows that our $\th$ must
be one of Table \ref{dual-can}.

We have thus proved that if we have a local extension of
$\SVir^0_c$, then its dual canonical endomorphism must be one of
those listed in Table \ref{dual-can}.

\subsection{Classification of the Fermi extensions of the 
super-Vi\-ra\-so\-ro net, $c<3/2$}

We now show that each endomorphism in Table \ref{dual-can}
uniquely produces a superconformal net extending $\SVir_c$.

(1) We have the label $(A_{m-1}, A_{m+1})$.
The identity matrix obviously gives the coset net for
$SU(2)_m\subset SU(2)_{m-2}\otimes SU(2)_2$ itself.  This has
a unique Fermionic extension.

(2) Now we have the label $(A_{4m'-1}, D_{2m'+2})$ with $m=4m'$.
The irreducible DHR sectors $(0,0,0)$, $(0,4m',0)$ give the
group $\Z_2$.  The crossed product of the coset net by this group
gives a local extension.  The irreducible DHR sector $(4m'-2,0,0)$
has spin $-1$, and this sector still gives an irreducible DHR sector
of index 1 and spin $-1$ after the $\a$-induction applied to
the crossed product construction.
This $\a$-induced sector gives a unique Fermionic extension
through a crossed product by  $\Z_2$.

(3) Now we have the label $(D_{2m'+2}, A_{4m'+3})$ with $m=4m'+2$.
The irreducible DHR sectors $(0,0,0)$, $(4m',0,0)$ give the
group $\Z_2$.  The crossed product of the coset net by this group
gives a local extension.  The irreducible DHR sector $(0,4m'+2,0)$
has spin $-1$, and this sector still gives an irreducible DHR sector
of index 1 and spin $-1$ after the $\a$-induction applied to
the crossed product construction.
This again gives a unique Fermionic extension in a similar way to
case (2).

(4) Now we have the label $(A_9, E_6)$ for $m=10$.
The DHR sector  $(0,0,0)\oplus(0,6,0)$ gives a local
$Q$-system as in \cite[Subsection 4.2]{KL1}
by the description of the braided tensor category having
the simple objects $\{(0,k,0)\mid k\in2\Z\}$ as in
\cite[Section 4.3]{X1},
\cite[Proposition 2.3.1, Proof of Theorem B]{X5}.

Uniqueness also follows as in
\cite[Remark 2.5]{KL1}, using the cohomology vanishing result
in \cite{KL2}.
(This is also a mirror extension in the sense of \cite{X6}, so
locality of the extension also follows from \cite{X6}.)
The irreducible DHR sector $(0,10,0)$
has spin $-1$, and this sector still gives an irreducible DHR sector
of index 1 and spin $-1$ after the $\a$-induction applied to
the local extension.  This gives a unique Fermionic extension
as above.

(5) Now we have the label $(E_6, A_{13})$ for $m=12$.
The irreducible DHR sectors  $(0,0,0)$, $(6,0,0)$ give a local
$Q$-system as in (4) by the description of the braided tensor
category having the simple objects $\{(j,0,0)\mid j\in2\Z\}$
as in \cite[Section 4.3]{X1}, \cite[Proposition 2.3.1]{X5}.
Uniqueness also follows as above.
(This is also a coset model as in \cite{KL1}.)
The irreducible DHR sector $(10,0,0)$
has spin $-1$, and this sector still gives an irreducible DHR sector
of index 1 and spin $-1$ after the $\a$-induction applied to
the local extension.  This gives a unique Fermionic extension
as above.

(6) Now we have the label $(A_{27}, E_8)$ for $m=28$.
The irreducible DHR sectors  $(0,0,0)$, $(0,10,0)$, $(0,18,0)$,
$(0,28,0)$ give a local
$Q$-system as in case (4).  Uniqueness also follows as above,
using the cohomology vanishing result in \cite{KL2}.
(This is again also a mirror extension in the sense of \cite{X6}.)
The irreducible DHR sector $(26,0,0)$
has spin $-1$, and this sector still gives an irreducible DHR sector
of index 1 and spin $-1$ after the $\a$-induction applied to
the local extension.  This gives a unique Fermionic extension
as above.

(7) Now we have the label $(E_8, A_{31})$ for $m=30$.
The irreducible DHR sectors  $(0,0,0)$, $(10,0,0)$, $(18,0,0)$,
$(28,0,0)$ give a local
$Q$-system as in (5).  Uniqueness also follows as above.
(This is again also a coset model as in \cite{KL1}.)
The irreducible DHR sector $(0,30,0)$
has spin $-1$, and this sector still gives an irreducible DHR sector
of index 1 and spin $-1$ after the $\a$-induction applied to
the local extension.  This gives a unique Fermionic extension
as above.

(8) Now we have the label $(D_6, E_6)$ for $m=10$.
The irreducible DHR sectors  $(0,0,0)$, $(0,6,0)$,
$(8,0,0)$, $(8,6,0)$ give a local
$Q$-system, since this is a further index 2 extension of the
local extension given in the above (4).
Uniqueness holds for this second extension.
The irreducible DHR sector $(0,10,0)$
has spin $-1$, and this sector still gives an irreducible DHR sector
of index 1 and spin $-1$ after the $\a$-induction applied to
the local extension.  This gives a unique Fermionic extension
as above.

(9) Now we have the label $(E_6, D_8)$ for $m=12$.
The irreducible DHR sectors  $(0,0,0)$, $(6,0,0)$,
$(0,12,0)$, $(6,12,0)$ give a local
$Q$-system, since this is a further index 2 extension of the
local extension given in the above (5).
Uniqueness holds for this second extension.
The irreducible DHR sector $(10,0,0)$
has spin $-1$, and this sector still gives an irreducible DHR sector
of index 1 and spin $-1$ after the $\a$-induction applied to
the local extension.  This gives a unique Fermionic extension
as above.

All the above give the following classification theorem.
\begin{theorem}\label{main}
The following gives a complete list of superconformal nets with
$c<3/2$ and the dual canonical endomorphisms for the Bosonic
part extending $\SVir^0_c$ in each case is given as in
Table \ref{dual-can}.
\begin{enumerate}
\item The super Virasoro net with
$c=\displaystyle\frac{3}{2}\left(1-\frac{8}{m(m+2)}\right)$,
labelled with $(A_{m-1}, A_{m+1})$.
\item Index 2 extensions of the above (1), labeled with
$(A_{4m'-1}, D_{2m'+2})$, $(D_{2m'+2}, A_{4m'+3})$.
\item Six exceptionals labelled with
$(A_9, E_6)$, $(E_6, A_{13})$, $(A_{27}, E_8)$, $(E_8, A_{31})$,
$(D_6, E_6)$, $(E_6, D_8)$.
\end{enumerate}
\end{theorem}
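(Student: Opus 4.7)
The plan is to proceed in three stages: reduce the classification problem to that of local (Bosonic) extensions of $\SVir^0_c$, classify those by a modular-invariant analysis of the coset, and then lift each local extension uniquely to a graded-local Fermi extension of $\SVir_c$.

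First, if $\A$ is any irreducible superconformal net with $c<3/2$, its Bose subnet $\A_b$ is a local irreducible extension of $\SVir^0_c$, by the same argument as in \cite[Prop.~3.5]{KL1}; since $\SVir^0_c$ is realised as the completely rational coset $(\A_{\su2_{m}})'\cap(\A_{\su2_{m-2}}\otimes\A_{\su2_2})$, the $\a$-induction machinery of \cite{BEK1} parametrises all such extensions by dual canonical endomorphisms $\theta=\bigoplus_\lambda Z_{0,\lambda}\lambda$, one for each type I modular invariant $Z$ of the Rehren matrices of $\SVir^0_c$. Existence and uniqueness of the $Q$-system for each $Z$ will be treated at the end, using the cohomology vanishing results of \cite{KL2}.

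Second, to enumerate the possible $\theta$'s, I would start from Xu's explicit formulas \cite{X3} for the matrices $S$, $T$ of the coset, distinguishing the ordinary sectors $(j,k,l)$ from the two splittings $\phi_1,\phi_2$ of the fixed point $\phi=(m/2-1,m/2,1)$ (which exists only for even $m$; for odd $m$ the Gannon--Walton theorem of \cite{GW} gives immediately $Z=\mathrm{id}$). For even $m=2m_0$ I would define the ``unfolded'' matrix $\tilde Z$ on triples $(j,k,l)\in\{0,\dots,m-2\}\times\{0,\dots,m\}\times\{0,1,2\}$ by amalgamating the $\phi_1,\phi_2$ rows and columns, and check case-by-case that $\tilde Z$ commutes with the corresponding product matrices $\tilde S=S^{(m-2)}\otimes\overline{S^{(m)}}\otimes S^{(2)}$ and $\tilde T$. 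Gannon's parity rule then constrains the support of $\tilde Z$: choosing $n\in L=\{n:(n,8m(m+2))=1\}$ with prescribed residues modulo $4m_0$, $4(m_0+1)$, $8$ (possible by $(m_0,m_0+1)=1$) gives $\varepsilon_{2m}(n)=\varepsilon_{2m}(n(j+1))$ for all admissible $n$ when $m_0$ is odd, forcing $j\in\{0,m-2\}$ in the support of $\theta$; symmetrically $k\in\{0,m\}$ when $m_0$ is even. Using the order-two identification $(j,k,l)\leftrightarrow(m-2-j,m-k,2-l)$, the problem then reduces to a type I modular invariant for $SU(2)_{m-2}\otimes SU(2)_2$ (or $SU(2)_m\otimes SU(2)_2$) on the tensor subcategory with simple objects $\{(j,0,l):j+l\in2\ZZ\}$ (resp.\ $\{(0,k,l):-k+l\in2\ZZ\}$), cf.\ \cite[Sect.~4.3]{X1}, \cite[Prop.~2.3.1]{X5}, and Gannon's classification in \cite[Sect.~7]{G} pins down the $A$--$D$--$E$ pairings of Table \ref{dual-can}. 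The argument above fails only for $m\in\{4,6,8,10,28,30\}$, but in each such case the remaining candidates $(j,k,l)$ with $l=1$ are either killed by the conformal-spin-$1$ constraint $\omega_{(j,k,l)}=1$ or by the parity rule for a single explicit $n$ (for instance $n=7,13$ at $m=30$ to exclude $(3,12,1)$ and $(13,18,1)$), after which Gannon's theorem again applies.

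Third, for each entry of Table \ref{dual-can} I would exhibit the extension and verify uniqueness: series (1)--(3) are trivial or $\ZZ_2$-simple-current extensions by the invertible sector $(0,m,0)$ or $(m-2,0,0)$; the exceptionals (4),(6) are mirror extensions in the sense of \cite{X6}; (5),(7) are cosets treated already in \cite{KL1}; and (8),(9) are further index-$2$ extensions of (4),(5). In each case the $Q$-system is unique up to equivalence by the vanishing of the relevant second cohomology \cite{KL2}, or directly from \cite{X6} in the mirror cases. Finally, each extension so obtained contains an invertible DHR sector of conformal spin $-1$ and dimension $1$: for series (1) it is $(m-2,m,0)$ itself, and for the other entries it is the $\a$-induction of that sector, which remains of dimension $1$ and spin $-1$. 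Taking the $\ZZ_2$-crossed product by this automorphism produces a graded-local Fermi net containing $\A_b$, and the uniqueness of the Fermionic crossed product by an invertible spin-$(-1)$ DHR automorphism (together with twisted Haag duality) shows that it is determined up to isomorphism; this recovers $\SVir_c$ in case (1), its two index-$2$ extensions in (2),(3), and the six exceptional superconformal nets in (4)--(9).

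The main obstacle is the middle step: Xu's $S$-matrix for the coset has the extra complication of the fixed-point splitting, and the map $Z\mapsto\tilde Z$ is not injective, so one cannot directly import Gannon--Walton; the arithmetic of the parity rule (the Chinese-remainder juggling with $4m_0$, $4(m_0+1)$ and $8$) together with the careful separate treatment of the six exceptional values $m\in\{4,6,8,10,28,30\}$ is the delicate part of the classification.
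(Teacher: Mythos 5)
Your proposal follows essentially the same route as the paper: reduction to the Bose subnet as a local extension of the coset $\SVir^0_c$, classification of the dual canonical endomorphisms via Xu's coset $S$-matrix, the unfolded invariant $\tilde Z$, Gannon's parity rule with the Chinese-remainder choice of $n$, the separate treatment of $m\in\{4,6,8,10,28,30\}$ (including the same $n=7,13$ exclusions at $m=30$), and finally existence/uniqueness of each $Q$-system followed by the $\ZZ_2$-crossed product by the spin $-1$, dimension $1$ sector. The approach and all key references match the paper's argument.
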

\appendix
\section{Appendixes}
\subsection{Topological covers and covering symmetries}
\label{top}
If $X$ is a connected manifold, we denote by DIFF$(X)$ the group of 
diffeomorphisms of $X$ and by $\Diff(X)$ the 
\emph{connected component of the identity} of DIFF$(X)$
\footnote{If $X$ is oriented, in general  $\Diff(X)$ is smaller than 
the of orientation preserving subgroup of DIFF$(X)$, see \cite{M}}.

Let $X$ be compact, connected manifold; then DIFF$(X)$ is an infinite dimensional Lie group modelled on the locally convex topological vector space Vect$(X)$.
Denote by $\tilde X$ the universal covering of $X$ and by $p:\tilde 
X\to X$ the covering map. The fundamental group $\Ga\equiv \pi_1(X)$ 
acts on $\tilde X$ by deck transformations: $p(\g x)=p(x)$ for 
$x\in\tilde X$ and $\g\in \Ga$.

Denote by $\Diff^{\Ga}(\tilde X)$ the subgroup of $\Diff(\tilde X)$ 
of diffeomorphisms commuting with the $\Ga$-action. Given 
$g\in\Diff^{\Ga}(\tilde X)$ there exists a unique $P(g)\in\Diff(X)$
\[
P(g)(x) \equiv p(g\tilde x), \ x\in X
\]
with $\tilde x$ any point in $\tilde X$ s.t. $p(\tilde x) = x$. The 
map $P: g\mapsto P(g)$ is clearly a continuous homomorphism of 
$\Diff^{\Ga}(\tilde X)$ into $\Diff(X)$. It is not difficult to check 
that $P$ is surjective. Indeed any $g\in {\rm DIFF}(X)$ has a lift to 
an element of ${\rm DIFF}^{\Ga}(\tilde X)$ by the uniqueness of $\tilde X$, 
thus $P$ extends to a surjective continuous homomorphism 
${\rm DIFF}^{\Ga}(\tilde X)\to {\rm DIFF}(X)$, that thus map the 
connected component of the identity to the 
connected component of the identity. More directly, if $g\in\Diff(X)$ 
is sufficiently small neighbourhood $\cal U$ the identity,
one can lift $g\in\cal U$ to an element of $\Diff^{\Ga}(\tilde X)$ because $p$ 
is a local diffeomorphism; as $\Diff(X)$ is generated by $\cal U$, 
all elements of $\Diff(X)$ have such a lift.

Denote by $\widetilde\Diff(X)$ the universal central cover of $\Diff(X)$. Then 
$\widetilde\Diff(X)$ is an infinite dimensional Lie group with center 
$Z\equiv \pi_1(\Diff(X))$ and we denote by $q$ the quotient map 
$q:\widetilde\Diff(X)\to \Diff(X)$ whose kernel is $Z$.

If $\{g_t\}_{t\in [0,1]}$ is a loop in $\Diff(X)$ with $g_0 = g_1 = 
\iota$ and $x_0\in X$, then $\{g_t(x_0)\}_{t\in [0,1]}$ 
is a loop in $X$ with $g_0(x_0) = g_1(x_0) = 
x_0$. Homotopic loops in $\Diff(X)$ clearly go to homotopic loops in 
$X$ and we thus obtain a homomorphism 
\begin{equation}\label{pi}
\Phi_0 :  \pi_1(\Diff(X)) \to  \pi_1(X)
\end{equation}
This homomorphism is certainly not surjective ($\pi_1(\Diff(X))$ is 
always abelian, $\pi_1(X)$ may be not). 
\begin{proposition} There exists a unique continuous homomorphism 
$\Phi:\widetilde\Diff(X)\to\Diff^{\Ga}(\tilde X)$ such that the following 
diagram commutes:
\[
\xymatrix{
\widetilde\Diff(X)  \ar[rr]^{\Phi} \ar[dr]^{q}& 
&  \Diff^{\Ga}(\tilde X) \ar[dl]_{P}  \\
& \Diff(X) }
\]
$\Phi$ restricts to a homomorphism
\begin{equation}\label{cen}
\Phi_0: \pi_1\big(\Diff(X)\big)\ (= Z) \to \text{\rm center of}\ \pi_1(X)
\end{equation}
that coincides with the one given above in eq. \eqref{pi}.
\end{proposition}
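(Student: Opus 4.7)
The plan is to identify $P:\Diff^\Ga(\tilde X)\to\Diff(X)$ as a covering homomorphism and then invoke the universal property of $\widetilde\Diff(X)$. Concretely, the excerpt has already noted that there is a neighbourhood $\cal U$ of the identity in $\Diff(X)$ whose elements lift uniquely (by local diffeomorphism and freeness of the $\Ga$-action on $\tilde X$) to $\Diff^\Ga(\tilde X)$. This makes $P$ a local homeomorphism at the identity and hence, by left-translation, everywhere; its kernel is precisely $\Ga$ acting by deck transformations, so $P$ is a covering homomorphism in the Lie group category with discrete fibre $\Ga$.

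Since $\widetilde\Diff(X)$ is connected and simply connected, the classical lifting criterion for covering maps applied to the continuous map $q:\widetilde\Diff(X)\to\Diff(X)$ through the covering $P$ yields a unique continuous map $\Phi:\widetilde\Diff(X)\to\Diff^\Ga(\tilde X)$ with $\Phi(\tilde 1)=1$ and $P\circ\Phi=q$. To see $\Phi$ is a homomorphism, consider the continuous map $(g_1,g_2)\mapsto\Phi(g_1g_2)\Phi(g_2)^{-1}\Phi(g_1)^{-1}$; it takes values in the discrete fibre $\ker P=\Ga$, is trivial at $(\tilde 1,\tilde 1)$, and so, by connectedness of $\widetilde\Diff(X)\times\widetilde\Diff(X)$, is identically $1$. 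Uniqueness of $\Phi$ (with the commuting diagram imposed, and the normalisation $\Phi(\tilde 1)=1$ forced by $P\circ\Phi=q$) is part of the standard covering space lifting statement.

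For the second assertion: any $z\in Z=\pi_1(\Diff(X))$ is represented by a loop $\{g_t\}_{t\in[0,1]}$ in $\Diff(X)$ with $g_0=g_1=\iota$, and $\Phi(z)$ is the endpoint $\tilde g_1$ of its lift $\{\tilde g_t\}$ in $\Diff^\Ga(\tilde X)$ starting at the identity. Since $P(\tilde g_1)=g_1=\iota$, we have $\tilde g_1\in\ker P=\Ga$; moreover $\tilde g_1$ belongs to $\Diff^\Ga(\tilde X)$, and hence commutes with every deck transformation, so $\tilde g_1\in Z(\Ga)$, the centre of $\pi_1(X)$. To identify this with the homomorphism $\Phi_0$ of \eqref{pi}, fix $\tilde x_0\in\tilde X$ with $p(\tilde x_0)=x_0$ and evaluate: the path $t\mapsto\tilde g_t(\tilde x_0)$ is a lift to $\tilde X$ of the loop $t\mapsto g_t(x_0)$ starting at $\tilde x_0$, and its endpoint $\tilde g_1(\tilde x_0)=\Phi(z)\cdot\tilde x_0$ is exactly the deck transformation associated with the homotopy class $[\{g_t(x_0)\}]\in\pi_1(X,x_0)$ under the standard identification of $\Ga$ with $\pi_1(X,x_0)$.

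The only real obstacle is step one: verifying that $P$ is genuinely a covering in the infinite-dimensional Lie-group setting, i.e.\ that the local lift $\cal U\to\Diff^\Ga(\tilde X)$ is a continuous (Fr\'echet) section and that its translates tile $\Diff^\Ga(\tilde X)$. This reduces to the already-quoted fact that $\cal U$ can be chosen connected and small enough that every $g\in\cal U$ lifts, together with the rigidity of lifts under the free, properly discontinuous $\Ga$-action; everything else is routine covering-space bookkeeping.
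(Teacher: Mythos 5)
The paper states this proposition without any proof, so there is nothing to compare line by line; your covering-space argument is exactly the standard route the authors implicitly intend, and it is correct. The preceding text of the appendix already supplies the two nonstandard ingredients you rely on: the local lift of a connected identity neighbourhood $\cal U\subset\Diff(X)$ to $\Diff^{\Ga}(\tilde X)$ (making $P$ a covering homomorphism with discrete kernel), and the loop-evaluation description of $\Phi_0$, which your computation $\tilde g_1(\tilde x_0)=\Phi(z)\cdot\tilde x_0$ matches. One small imprecision: the kernel of $P$ is not all of $\Ga$ but only those deck transformations that both commute with the whole $\Ga$-action and lie in the identity component $\Diff(\tilde X)$; it is therefore a subgroup of the centre of $\Ga$ from the outset. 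You in fact recover this a few lines later when you argue that $\tilde g_1$ commutes with every deck transformation, so the conclusion $\Phi_0(Z)\subseteq Z(\pi_1(X))$ is correctly justified — just be aware that this is exactly the reason the image lands in the centre rather than in all of $\pi_1(X)$. The remaining caveat you flag yourself (that $P$ is genuinely a covering in the Fr\'echet--Lie setting) is handled by the paper's quoted local-lifting statement together with local connectedness of $\Diff(X)$, so the argument is complete.
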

Thus $\pi_1(X)$ is a quotient of $\pi_1(\Diff(X))$. However $\Phi_0$ is 
not injective in general. For example $\pi_1(S^2)= \{\iota\}$ while 
$\pi_1\big(\Diff(S^2)\big)=\pi_1\big(\Diff(SO(3))\big)=\mathbb Z_2$ \cite{M} .

We end this appendix with the following proposition that has been used in the paper.
\begin{proposition}\label{locdiff}
The group $\Diff^{(2)}(S^1)$ is algebraically generated by $\Diff^{(2)}_I(S^1)$ as $I$ 
varies in $\I^{(2)}$.
\end{proposition}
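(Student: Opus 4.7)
My strategy is to reduce to the analogous (known) statement for $\Diff(S^1)$ and then handle separately the nontrivial central element $c$ generating $\ker(q:\Diff^{(2)}(S^1)\to\Diff(S^1))$.

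First I would invoke the standard fragmentation lemma for $\Diff(S^1)$: the group $\Diff(S^1)$ is algebraically generated by the subgroups $\Diff_I(S^1)$ as $I$ varies in $\I$. Next, since $\Diff_I(S^1)$ is contractible (it deformation-retracts to the identity on the closed interval $\bar I$ with endpoints fixed), it is in particular simply connected; hence the double cover $q^{-1}(\Diff_I(S^1))$ splits as two disjoint copies, the component of the identity being precisely $\Diff^{(2)}_I(S^1)$. Consequently $q$ restricts to a homeomorphism $\Diff^{(2)}_I(S^1)\to\Diff_I(S^1)$, so every $g\in\Diff_I(S^1)$ has a canonical lift $\tilde g\in\Diff^{(2)}_I(S^1)$.

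Given arbitrary $g\in\Diff^{(2)}(S^1)$, I would write $\dot g=g_1\cdots g_n$ with $g_k\in\Diff_{I_k}(S^1)$ by the fragmentation lemma, and form $\tilde g_1\cdots\tilde g_n\in\Diff^{(2)}(S^1)$ using the canonical lifts. This product is one of the two lifts of $\dot g$, hence equals either $g$ or $g\cdot c$. Therefore it is enough to show that $c$ lies in the subgroup $H\subset\Diff^{(2)}(S^1)$ generated by $\{\Diff^{(2)}_I(S^1)\}_{I\in\I}$.

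The decisive step is a \emph{squaring trick} applied to the $\pi$-rotation. Fragment $r(\pi)\in\Diff(S^1)$ as $h_1\cdots h_m$ with $h_k\in\Diff_{I_k}(S^1)$, and let $\tilde h_k\in\Diff^{(2)}_{I_k}(S^1)$ be the canonical lifts. Then $h\equiv\tilde h_1\cdots\tilde h_m\in H$ is a lift of $r(\pi)$, so it equals either $r^{(2)}(\pi)$ or $r^{(2)}(3\pi)=r^{(2)}(\pi)\cdot c$. Squaring, $h^2\in H$ equals either $r^{(2)}(2\pi)=c$ or $r^{(2)}(6\pi)=c^3=c$, since $c$ has order two. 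Either way $c=h^2\in H$, and combining with the previous paragraph gives $g\in H$. The only potentially delicate point is the triviality of the covering over $\Diff_I(S^1)$, which comes down to the standard contractibility of the diffeomorphism group of a closed interval fixing its endpoints; the rest is routine bookkeeping in the double cover together with the classical fragmentation lemma.
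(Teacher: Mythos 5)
Your proof is correct, and it reaches the key point — exhibiting the deck transformation $c={\rm rot}^{(2)}(2\pi)$ inside the subgroup $H$ generated by the $\Diff^{(2)}_I(S^1)$ by squaring a lift of the half-rotation — by a genuinely different route from the paper. The paper does not use fragmentation or any topology of $\Diff_I(S^1)$: it obtains $q(H)=\Diff(S^1)$ from the simplicity of $\Diff(S^1)$ (the $\Diff_I(S^1)$ generate a normal subgroup), and then argues by contradiction: if $c\notin H$ then $q|_H$ is an isomorphism onto $\Diff(S^1)$, and comparing the two lifts ${\rm rot}^{(2)}(\theta)$ and $(q|_H)^{-1}({\rm rot}(\theta))$ of the rotation subgroup — which differ pointwise by an order-two central element — one sees after squaring that they coincide, forcing $c={\rm rot}^{(2)}(2\pi)=(q|_H)^{-1}({\rm rot}(2\pi))=1$, a contradiction. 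Your version is more constructive: simple connectivity (indeed contractibility via linear interpolation in a chart on $\bar I$) of $\Diff_I(S^1)$ trivializes the covering over it and yields canonical lifts, the fragmentation lemma produces an explicit $h\in H$ over ${\rm rot}(\pi)$, and $h^2=c$ lands you directly in $H$ without any contradiction argument. What you pay for this directness is the two extra (standard, but not free) inputs — the fragmentation lemma and the contractibility of $\Diff_I(S^1)$ — whereas the paper needs only the simplicity of $\Diff(S^1)$ plus elementary bookkeeping in the two-element deck group; what you gain is that the argument is explicit and would generalize verbatim to the $n$-cover for any $n$. Both arguments are complete, and the one delicate point you flag (triviality of the covering over $\Diff_I(S^1)$) is indeed the only nontrivial topological fact you rely on.
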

\proof
As $I$ varies in $\I$, $\Diff_I(S^1)$ algebraically generates a normal subgroup of $\Diff(S^1)$, hence all $\Diff(S^1)$ because this is a simple group \cite{M}. With ${\rm rot}$ the rotation one-parameter subgroup of $\Diff(S^1)$, we denote by ${\rm rot}^{(2)}$ its lift to $\Diff^{(2)}(S^1)$.
Let $H$ be the subgroup of $\Diff^{(2)}(S^1)$ algebraically generated by the $\Diff^{(2)}_I(S^1)$'s and $z\equiv {\rm rot}^{(2)}(2\pi)$ the generator of the center of $\Diff^{(2)}(S^1)$ ($\simeq \mathbb Z_2$). With $q: \Diff^{(2)}(S^1)\to \Diff(S^1)$ the quotient map, we then have $q(H) = \Diff(S^1)$ and $\Diff^{(2)}(S^1)= H\cup Hz$. If $z\in H$ we then have $\Diff^{(2)}(S^1)= H$ as desired. 
So we may assume $z\notin H$. Then $q|_H$ is one-to-one, namely an isomorphism between $H$ and 
$\Diff(S^1)$.
Consider now the one-parameter subgroup of $H$ given by ${\rm rot}'(\th) \equiv q^{-1}({\rm rot(\th)})$. Clearly 
$q({\rm rot}' (\th) ) = {\rm rot} (\th) = q({\rm rot}^{(2)}  (\th) )$, so 
${\rm rot}^{(2)}(\th) = {\rm rot}'(\th) z^m$ where $m=0$ or $m=1$ depending on $\th$. So 
${\rm rot}^{(2)}(2\th) = {\rm rot}' (2\th)$ for all $\th$, thus  ${\rm rot}^{(2)}$ and ${\rm rot}'$ coincide, in particular $z={\rm rot}'(2\pi) = 1$ which is a contradiction.
\endproof
\subsection{McKean-Singer formula}
\label{MKS}
Let $\Gamma$ be a selfadjoint unitary on a Hilbert space $\H$, thus $\H = \H_+\oplus\H_-$ where $\H_{\pm}$ is the $\pm$-eigenspace of $\Ga$. A linear operator $Q$ on $\H$ is \emph{even} if it commutes with $\Ga$, namely $Q\H_\pm\subset\H_\pm$, and \emph{odd} if 
$\Gamma Q\Gamma^{-1} = - Q$, namely
\[
Q=\left[
\begin{matrix}
0 & Q_-\\
Q+- & 0
\end{matrix}
\right]
 \]
where $Q_{\pm}: \H_{\pm}\to\H_{\mp}$. Clearly $Q^*_+ = Q_-$ if $Q$ is 
selfadjoint. Note that $Q^2$ is even if $Q$ is odd.

We shall denote by $\Str = \Tr(\Ga\,\cdot)$ the \emph{supertrace} on $B(\H)$. Clearly 
the supertrace depends on $\Ga$, but the Hilbert spaces in the following will 
be equipped with a grading unitary on which the definition of $\Str$ will 
be naturally based. Note that if $T\in B(\H)$ then $\Str(T)$ is 
well-defined and finite iff $T$ is a trace class operator, i.e. $\Tr(|T|)<\infty$. Indeed $|\Ga T| = |T|$.

We recall the following well known lemma.
\begin{lemma} {\bf McKean-Singer formula.} Let $Q$ be a selfadjoint odd 
linear operator on $\H$. If $e^{-tQ^2}$ is trace class for some $t_0 >0$ 
then $\Str(e^{-tQ^2})$ is an integer independent of $t>t_0$ 
and indeed
\[
\Str(e^{-tQ^2}) = \ind(Q_+)
\]
where $\ind(Q_+)\equiv\Dim\ker(Q_+)-\Dim\ker(Q^{*}_+)$ is the Fredholm index of $Q_+$.
\end{lemma}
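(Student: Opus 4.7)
My plan is to exploit the block structure induced by the grading and reduce the supertrace to a cancellation of spectral contributions away from the kernels.

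First I would write $Q$ as a $2\times 2$ block operator with respect to the decomposition $\H=\H_+\oplus\H_-$. Since $Q$ is odd and selfadjoint, it has the form indicated in the excerpt, with $Q_-=Q_+^*$, and consequently
\[
Q^2 = \begin{pmatrix} Q_+^*Q_+ & 0 \\ 0 & Q_+Q_+^* \end{pmatrix},
\]
which is even. Thus $e^{-tQ^2}$ is block-diagonal with positive entries on each summand, and the defining formula of the supertrace gives
\[
\Str(e^{-tQ^2}) = \Tr_{\H_+}\!\bigl(e^{-tQ_+^*Q_+}\bigr) - \Tr_{\H_-}\!\bigl(e^{-tQ_+Q_+^*}\bigr).
\]
The hypothesis that $e^{-tQ^2}$ is trace class for $t>t_0$ guarantees that each of the two ordinary traces is finite.

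Next I would invoke the standard fact that the operators $Q_+^*Q_+$ on $\H_+$ and $Q_+Q_+^*$ on $\H_-$ have the same non-zero spectrum, with equal multiplicities. The cleanest way to see this is via the polar decomposition $Q_+ = V|Q_+|$: the partial isometry $V$ restricts to a unitary equivalence between the spectral subspaces of $|Q_+|^2=Q_+^*Q_+$ and of $V|Q_+|^2V^*=Q_+Q_+^*$ corresponding to any interval not containing $0$. Applying this to the spectral resolution of $f(s)=e^{-ts}$, the contributions from the strictly positive part of the spectrum cancel, leaving only the contribution of the eigenvalue $0$:
\[
\Str(e^{-tQ^2}) = \dim\ker(Q_+^*Q_+) - \dim\ker(Q_+Q_+^*).
\]
Using $\ker(T^*T)=\ker(T)$, which holds for any closed densely defined $T$, this equals $\dim\ker(Q_+)-\dim\ker(Q_+^*)=\ind(Q_+)$. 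The finiteness of these dimensions is automatic: trace-class-ness of $e^{-t_0Q^2}$ forces $Q^2$ to have pure point spectrum with finite-dimensional eigenspaces, in particular on the eigenvalue $0$.

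The expression for $\Str(e^{-tQ^2})$ is manifestly independent of $t$ since the right-hand side does not involve $t$, and is an integer by construction. The only mildly delicate point is justifying the spectral cancellation at the trace level when $\H$ is infinite-dimensional; this is the main (minor) obstacle, and it is handled by the polar decomposition argument above, which gives a measurable bijection between the strictly positive parts of the spectra preserving multiplicities, so that $\Tr_{\H_+}(f(Q_+^*Q_+)\mathbf{1}_{(0,\infty)}(Q_+^*Q_+))=\Tr_{\H_-}(f(Q_+Q_+^*)\mathbf{1}_{(0,\infty)}(Q_+Q_+^*))$ for any positive Borel $f$ vanishing at $0$.
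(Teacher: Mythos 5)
Your proof is correct, but it follows a different route from the one in the paper. The paper's argument is the classical ``heat equation'' one: it differentiates $\Str(e^{-tQ^2})$ in $t$, shows that $\Str(Q^2e^{-tQ^2})=0$ by combining cyclicity of the trace with the anticommutation $\Ga Q=-Q\Ga$, concludes that the supertrace is constant in $t$, and then evaluates it in the limit $t\to+\infty$, where only the kernel projection survives, giving $\Tr(\Ga E)=\ind(Q_+)$. You instead never differentiate: you split $\Str(e^{-tQ^2})$ into the two diagonal block traces and cancel the strictly positive spectral contributions directly, using the polar decomposition $Q_+=V|Q_+|$ to identify $Q_+^*Q_+$ restricted to $(\ker Q_+)^\perp$ unitarily with $Q_+Q_+^*$ restricted to $(\ker Q_+^*)^\perp$, so that $\Tr\bigl(f(Q_+^*Q_+)\bigr)=\Tr\bigl(f(Q_+Q_+^*)\bigr)$ for positive Borel $f$ with $f(0)=0$. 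Both arguments are standard and complete; yours has the advantage of avoiding the (mild) technical points in the paper's version, namely differentiating under the trace and applying trace cyclicity with the unbounded factor $Q$, while the paper's version makes the $t$-independence conceptually transparent as a supersymmetric cancellation before any identification of the constant. Your closing remarks on finiteness of the kernels and on $\ker(T^*T)=\ker T$ correctly plug the only potential gaps.
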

\proof
If $t>t_0$ both traces $\Tr(e^{-tQ^2})$ and $\Tr(Q^2 e^{-tQ^2})$ are 
convergent and we have
\[
\frac{\rm d}{{\rm d}t}\Str(e^{-tQ^2}) = -\Str(Q^2 e^{-tQ^2})\ .
\]
But $\Str(Q^2 e^{-tQ^2}) = 0$ because
\begin{multline*}
\Str(Q^2 e^{-tQ^2})
= \Tr(\Ga Q^2 e^{-tQ^2}) = - \Tr(Q\Ga Q e^{-tQ^2})\\
= - \Tr(\Ga Q e^{-tQ^2}Q) =- \Tr(\Ga Q^2 e^{-tQ^2})= -\Str(Q^2 e^{-tQ^2})\ ,
\end{multline*}
thus $\Str(e^{-tQ^2})$ is constant. Therefore
\[
\Str(e^{-tQ^2}) = \lim_{t\to +\infty}\Str(e^{-tQ^2}) = \Tr(\Ga E)
\]
where $E$ is the projection onto $\ker(Q^2)$. But $Q^2=Q^*_{+}Q_{+}\oplus Q^{*}_{-}Q_{-}$ in the graded decomposition of $\H$, thus $\ker(Q^2) = \ker(Q_+)\oplus\ker(Q_-)$ and so
\[
\Tr(\Ga E) = \Dim\ker(Q_{+}) - \Dim\ker(Q_{-}) = \ind(Q_+)\ .
\]
\endproof
\section{Outlook} A potential development of this work concerns the relation with the Noncommutative Geometrical framework of A. Connes \cite{C}. The supersymmetric sector with lowest weight $c/24$ in Section \ref{SVirnet} gives rise in a natural way to an infinite dimensional spectral triple whose Chern character is given by the JLO cyclic cocycle \cite{JLO}, but for a domain problem verification (see also \cite{BG} where an analogous domain problem is settled). According to the lines indicated in \cite{L4}, we expect the computation of this cocycle to be of interest in relation to a QFT index theorem. 
\bigskip

\smallskip

\noindent {{\small {\bf Acknowledgements.} 
The authors thank F. Xu for useful conversations. Our collaboration was carried on in different occasions:
R.L. wishes to thank V.G. Kac for the invitation at the Sch\"odinger Institute summer school in June 2005; S.C. and R.L.  thank Y. Kawahigashi for the invitations at the University of Tokyo in November 2005 and December 2006 respectively. } 

{\footnotesize }
\end{document}